\providecommand{\algorithmname}{Algorithm}
  \theoremstyle{definition}
  \theoremstyle{plain}
  \newtheorem{thm}{\protect\theoremname}
  \theoremstyle{plain}
  \newtheorem{lem}{\protect\lemmaname}
  \theoremstyle{definition}
  \theoremstyle{remark}
  \newtheorem{rem}{\protect\remarkname}
  \theoremstyle{plain}
 \newtheorem{ass}{Assumption}
  \theoremstyle{assumption}
\newtheorem{exa}{Example}
\newtheorem{cor}{Corollary}
\providecommand{\definitionname}{Definition}
\providecommand{\lemmaname}{Lemma}
\providecommand{\problemname}{Problem}
\providecommand{\propositionname}{Proposition}
\providecommand{\remarkname}{Remark}
\providecommand{\theoremname}{Theorem}
\begin{document}

\title{{Convergence and Accuracy Analysis for A Distributed Static State Estimator based on Gaussian Belief Propagation}}


%
\author{Dami\'{a}n Marelli, Tianju Sui$^\dag$, Minyue Fu,~\emph{Fellow IEEE} and Ximing Sun
  \thanks{Dami\'{a}n Marelli is with the School of Automation, Guangdong University of Technology, China, and with the French Argentine International Center for Information and Systems Sciences, National Scientific and Technical Research Council, Argentina (email: Damian.Marelli@newcastle.edu.au). Tianju Sui and Ximing Sun are with the School of Control Science and Engineering, Dalian University of Technology, China (email:{suitj@mail.dlut.edu.cn;~sunxm@dlut.edu.cn}). Minyue Fu is with School of EE\&CS, University of Newcastle, Australia. He also holds an Bai-ren Professorship at the School of Automation, Guangdong University of Technology, China (email: minyue.fu@newcastle.edu.au).}
  \thanks{$\dagger$Corresponding author.}
  \thanks{This work was supported by the National Natural Science Foundation of China (Grant Nos. 61803068, 61633014, 61803101 and U1701264), the Fundamental Research Funds for the Central Universities (DUT19RC(3)027), and the Argentinean Agency for Scientific and Technological Promotion (PICT- 201-0985).}}

\maketitle

%

\begin{abstract}
  This paper focuses on the distributed static estimation problem and a Belief Propagation (BP) based estimation algorithm is proposed. We provide a complete analysis for convergence and accuracy of it. More precisely, we offer conditions under which the proposed distributed estimator is guaranteed to converge and we give concrete characterizations of its accuracy. Our results not only give a new algorithm with good performance but also provide a useful analysis framework to learn the properties of a distributed algorithm. It yields better theoretical understanding of the static distributed state estimator and may generate more applications in the future.
\end{abstract}

\begin{IEEEkeywords}
Distributed State Estimation, Convergence Analysis, Accuracy Analysis.
\end{IEEEkeywords}


\section{Introduction \label{sec:Introduction}}
Large-scale systems, such as the power grid, sensor monitoring network and the telecommunication system,
are receiving increasing attention from researchers in different fields. The data size among the system is rapidly increasing and the classical centralized control/estimation method is not suitable for the "big data" case. Thus, the distributed approach is urgently required for each node in the network can locally work. We focus on the distributed state estimation problem in this paper and a brief review on this field is given.

Existing distributed estimation methods can be classified in two ways. The first classification is done according to the dynamics of system model. In the
static system, the state of each node is static and estimated using its own measurements and information from its neighbour nodes. In the dynamic system, the state of each node is time-varying and may be correlated with the states of other nodes. The second classification is done according to the scale of system. In the small-scale system, each node focus on the same global state and they work in parallel to estimate it. In the large-scale system, each node focus on the its own state of interest and the shared information is correlated with the states of different nodes. It is impossible for a node in the large-scale system to estimate the whole global state for that it is too long. For example, in a traffic system for a city, due to the spatial correlations of the traffic flows in different nodes, neighboring traffic information is certainly useful in predicting the traffic conditions at
each node. Taking the traffic conditions as the states, the global state consist of every nodes's conditions is too huge and also unnecessary for a node.

Next, we briefly summarize the existing works on these four classes of systems.

In the static small-scale system, there are mainly two distributed approaches. The first approach is the consensus based algorithm and the representative works are given in \cite{Garin2010},\cite{olfati2004consensus},\cite{Schizas2007},\cite{Schizas2009}. They ran average consensus on the information vector and information matrix of each node and the final state estimate of each node asymptotically converge to the optimal one, i.e., Weighted Least Square (WLS) one. The second approach is the iterative projection based algorithm and it is proposed in \cite{pasqualetti2012distributed}. This estimation technique belongs to the family of Kaczmarz methods~\cite{Tanabe1971},\cite{Censor1981} for the solution of a linear system of equations. The iterative projection based algorithm only requires finite iterations for the convergence towards the exact WLS solution of a system.

In the static large-scale system, the goal is that the composite estimate of the whole system, consisting of all local estimates,
will approach the optimal estimate obtained using all the measurements and
a centralized estimation method~\cite{Conejo2007},\cite{Gomez2011},\cite{damian2015wls}. The technical difficulty for a large-scale system is higher than that for a small-scale one and the small-scale distributed estimation problem can be viewed as a special case of the large-scale one, i.e., the distributed algorithms for large-scale system is always effective for the small-scale system. It is worth mentioning that \cite{damian2015wls} gave a
novel way based on the Richardson method for solving linear equation to approach the WLS estimate of each node's state, and it is advanced among the existing works.

In the dynamic small-scale system, the consensus based algorithm is also popular. In \cite{matei2012}, the authors chose to run one time of consensus on the independent estimate (each node use its own measurement independently) each sampling period. And the authors in \cite{Cattivelli2010} ran consensus on the local estimate (each sensor use its own and neighbors' measurements). While, the stability conditions in \cite{matei2012} and \cite{Cattivelli2010} are stronger than that of the global optimal estimator. The work in \cite{Acikmese2014} studied how many times of consensus on the independent estimate between each sampling period is sufficient to guarantee the estimation stability under the least observability condition (i.e., the stability condition for the global optimal estimator). Besides the consensus on the estimate, \cite{Battistelli2014} found that, by
running consensus on the information matrix and information vector once each sampling period, the least observability condition is sufficient for the estimation stability.

In the dynamic large-scale system, \cite{zhou2013coordinated},\cite{farina2010moving},\cite{khan2008distributing},\cite{zhou2015controllability} have done an
elaborate system analysis and proposed some information
passing/processing method to get a stable estimate. While,
\cite{haber2013moving} considered a class of system with banded dynamic
system transition matrix and found that the contribution from
faraway nodes decreases with the increase of distance. It
showed that the moving horizon estimation could offer a
approximated optimal state estimate in that case.

We choose to study the distributed state estimation problem for static large-scale system in this paper. It has many application areas and we choose two of them as examples.

The power system requires to estimate the voltages and phases of the power supply at each
sub-system, consisting of a number of buses or a substation, using measurements (for example, a phasor measurement
unit (PMU) or a supervisory control and data acquisition (SCADA)
system). This concern was first recognized and addressed in \cite{Schweppe1970i},\cite{Schweppe1970ii},\cite{Schweppe1970iii} by introducing the idea of static state estimation into power systems. Interactions of sub-systems are reflected by the fact that local measurements
available at each sub-system typically involve neighboring
sub-systems. For example, a current measurement at a conjunction
depends on the voltage difference of two neighboring buses.
In a smart grid setting, each sub-system is only interested in its
local state, i.e., its own voltages and phases, using local measurements
and information acquired from neighboring sub-systems via
neighborhood communication~\cite{Taixin2013acc}. Thus, distributed methods for local state estimation in static large-scale system are naturally called for.

{The sensor network localization problem involves estimating the locations
of all sensors using relative measurements (e.g., relative distances
or relative positions) between sensors~\cite{Pal2010}. It is also unnecessary
for each sensor to localize other nodes. A distributed method
is preferred, where each node aims to estimate its own location
using local measurements and neighborhood communication~\cite{khan2009local},\cite{Moore2004},\cite{ravazzi2018distributed}. In \cite{battilotti2018cooperative}, the authors focused on the distributed linear system with relative non-linear measurements and it is very suitable for the formation control.}

To solve the distributed static state estimation problem for large-scale system, motivated by the idea from Belief Propagation, a novel distributed estimator is given. \emph{Pearl's Belief Propagation}, or \emph{Belief Propagation} (BP)
for short, is a well-celebrated algorithm for solving the above problem.
Originally proposed by Pearl~\cite{pearl1988probabilistic} in 1982,
BP (also known as {\em sum-product message passing}), is a {\em
message passing} algorithm for computing marginal PDFs on Bayesian
networks (directed and acyclic graphs) and Markov random fields (undirected
and cyclic graphs). Since its introduction, BP has been widely accepted
as a powerful distributed algorithm in many scientific and engineering
fields, including artificial intelligence, information theory, applied
mathematics, signal processing and control systems~\cite{Braunstein}.
Renowned applications of BP include low-density parity-check codes
and turbo codes for digital communications~\cite{mceliece1998turbo,berrou1996near,Fu_turbo},
free energy approximation for statistical learning~\cite{Braunstein},
satisfiability for mathematical logic~\cite{Braunstein}, combinatorial
optimization~\cite{braunstein2004survey} and computer vision~\cite{sun2003stereo,tanaka2002statistical}.
BP also finds important applications in the area of state estimation.
It is interesting to note that the famous Kalman filtering algorithm
for state estimation of dynamic systems is known to be an example
of BP~\cite{Kschischang}.

For acycilc graphs (i.e., graphs without cycles), it is known that
BP converges in a finite number of iterations, and the correct marginals
will be produced~\cite{pearl1988probabilistic,weiss2001correctness}.
For cyclic graphs (i.e., graphs with cycles), BP is not guaranteed
to converge in general, and even if it does, it may not calculate
the correct marginals. Nevertheless, the wonderful and mysterious
feature of BP is that for most applications, BP delivers amazingly
good approximations for the marginals, despite the existence of cycles~\cite{frey1998graphical,murphy1999loopy,freeman2000learning}.
Turbo decoding is perhaps most successful example of such a BP application,
as it delivers near-Shannon-capacity performances, despite the fact
that the underlying graph can be very loopy. This success has been
claimed as ``the most exciting and potentially important development
in coding theory in many years''~\cite{mceliece1995turbo}.

Gaussian BP is the BP algorithm specialized to Gaussian distributions.
The algorithm computes iteratively the mean and variance (or covariance)
of each marginal. Gaussian BP has been successfully applied in low
complexity detection and estimation problems arising in communication
systems~\cite{montanari2005belief,guo2011based,guo2008lmmse}, fast
solver for large sparse linear systems~\cite{el2012efficient,shental2008gaussian},
sparse Bayesian learning~\cite{tan2010computationally}, estimation
in Gaussian graphical model~\cite{chandrasekaran2008estimation}, distributed
beam forming~\cite{ng2008distributed}, inter-cell interference mitigation~\cite{lehmann2012iterative},
distributed synchronization and localization in wireless sensor networks~\cite{leng2011distributed,ahmad2012factor,leng2011cooperative},
distributed energy efficient self-deployment in mobile sensor networks~\cite{song2014distributed},
distributed rate control in Ad Hoc networks~\cite{zhang2010fast},
distributed network utility maximization~\cite{dolev2009distributed},
and large-scale sparse Bayesian learning~\cite{seeger2010variational}.

BP's excellent performances have inspired {\em many} researchers
over the last 20 years to study its theoretical properties. The most fundamental
question is that, for a cyclic network graph, under what conditions
will BP iterations converge? For a general cyclic graph,~\cite{tatikonda2002loopy,tatikonda2003convergence,ihler2004message,mooij2007sufficient,ihler2005loopy}
studied the convergence condition for BP. However, these references only gave
partial answers. Several conditions ensuring the convergence of the marginals under a designated initialization
set have been proposed~\cite{weiss2001correctness,malioutov2006walk,su2014convergence,su2015convergence}.
But several major drawbacks exist. Firstly, the convergence conditions
are too difficult to check. For examples, \cite{su2014convergence} and \cite{Moallemi2010}
require to check a very complex condition for the convergence. The convergence condition for the
mean in \cite{su2015convergence} requires the evaluation of the spectral
radius of an infinite dimensional matrix, which is impossible in practice.
Secondly, the convergence conditions are too strict. For examples, \cite{weiss2001correctness} shows that, under the "strictly diagonally dominant" requirement for measurement parameters, the state estimate of Gaussian Belief Propagation converges. \cite{Johnson2006} extends the work of \cite{weiss2001correctness} by relaxing the assumption to generalised diagonal dominance. {While, both diagonal dominance requirements for convergence are too strict. Thirdly, the convergence rate are not quantified in tidy form. For example, the convergence rate in \cite{Moallemi2010} is not explicit. Finally, most of these convergence analysis are done only for scalar sub-systems except \cite{du2017convergence}(i.e., the sub-state of each node is a scalar\footnote{Since the state components for each node are not independent for a
vector system, results for scalar systems may not applicable to vector
systems.})}

{In this paper, based on the Gaussian BP, a new distributed state estimator is proposed for static large-scale system.} We will study the convergence and accuracy properties of the distributed estimator under the general setting and a vector system. The goal of distributed state estimator is to devise a distributed iterative algorithm so that each node will compute a good estimate of its own state using its own measurements and information exchange with its neighbours.

{Viewing the iterations in our proposed algorithm as a dynamic process, we will provide conditions under which
its stability (i.e., the convergence of the distributed state estimator) is guaranteed. Comparing with the convergence condition in \cite{du2017convergence}, we focus on the vector system and also determine the convergence speeds of the state estimate and covariance. We will show that
the estimation error generated by our algorithm can be quantified. More specifically,
our estimation error and covariance formulas clearly explain the impact
of the so-called {\em cycle-free depth} of each node to the accuracy.}

The significance of our work lies in both theoretical contributions and potential applications. Firstly, convergence and accuracy analysis for the static distributed state estimator is theoretically meaningful. Secondly, the proposed distributed estimation algorithm has very good performance on a static large-scale system with large {\em cycle-free depth}. Due to the fact that the static estimation is a fundamental technique with vast applications,
our distributed static estimation is naturally needed for large-scale networked systems when centralized solutions are not possible.

The rest of this paper is organized as follows. In Section~\ref{sec:formulation},
the problem formulation and distributed state estimator are proposed.
Section~\ref{sec:ConvergenceQ} studies the convergence of the information
matrices. Section~\ref{sec:ConvergenceX} investigates the convergence
of the state estimates. The accuracy of the information matrices and state estimates
are analyzed in Section~\ref{sec:AccuracyQ} and \ref{sec:AccuracyX},
respectively. In Section~\ref{sec:simulation}, simulations are given
to illustrate our results. Concluding remarks are stated in Section~\ref{sec:conclusion}. Some proofs of complementary results are left in the Appendix.

\section{Preliminary for Belief Propagation}\label{sec:pre}
{The BP algorithm concerns with a system represented by a {\em
bipartite graph} with $I$ \emph{variable nodes} and $V$ \emph{factor
nodes}, as depicted in the Fig.~\ref{factor_graph}. Each variable
node $i$ is associated with a random vector $x_{i}\in\mathbf{R}^{n_{i}}$
and each factor node $v$ is connected to a subset of variable nodes,
$\mathcal{F}_{v}\subset\{1,2,\ldots,I\}$. Denoting the joint (or
global) variable by $X=\{x_{i}:i=1,2,\ldots,I\}$, it is assumed that
its joint probability density function (PDF) $f(X)$ can be expressed in a factor form:
\[
f(X)=\prod_{v=1}^{V}f_{v}\left(X_{v}\right),
\]
where $X_{v}=\{x_{i}:i\in\mathcal{F}_{v}\},v=1,2,\ldots,V$. Each
$f_{v}(X_{v})$ represents a piece of partial ``knowledge'' about
$X$.}
\begin{figure}[ht]
\begin{centering}
\includegraphics[width=8cm]{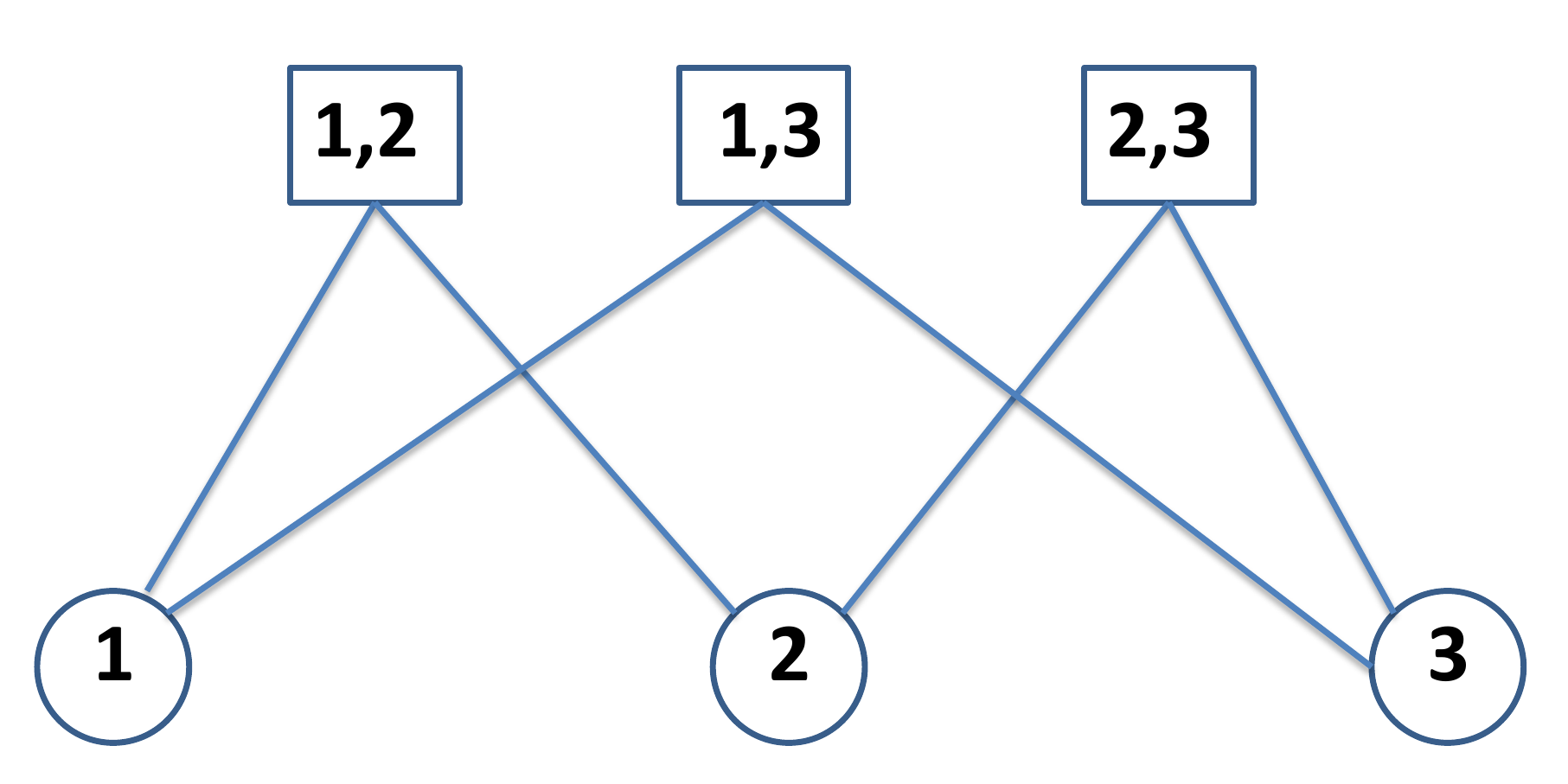}
\par\end{centering}

\caption{Bipartite graph: circles = variable nodes; squares = factor nodes}
\label{factor_graph}
\end{figure}

The goal of BP is to compute, at each node~$i$, the marginal $g_{i}(x_{i})$
of $f(X)$, which is defined by
\begin{equation}
g_{i}(x_{i})=\int f(X)d(X\setminus x_{i}),\label{marginal}
\end{equation}
where $X\setminus x_{i}$ is the set obtained from $X$ by removing
$x_{i}$. The algorithm resorts to iterative computation and local
communication between connected variable nodes and factor nodes. More
specifically, the algorithm starts by each factor node $v$ sending
to each variable node $i\in\mathcal{F}_{v}$ the following marginal
PDF (called {\em message})
\begin{equation}
m_{v\rightarrow i}^{(0)}(x_{i})=\int f_{v}(X_{v})d(X_{v}\setminus x_{i}).\label{eq:BP0}
\end{equation}
{Then, at each iteration $k=1,2,\ldots$, each variable node $i$ sends
to every connected factor node $v$ the following message:}
\begin{equation}
m_{i\rightarrow v}^{(k)}(x_{i})=\prod_{w\in\mathcal{N}_{i}\setminus v}m_{w\rightarrow i}^{(k)}(x_{i}),\label{eq:BP1}
\end{equation}
where $\mathcal{N}_{i}$ is the set of factor nodes connected to variable node
$i$. Similarly, each factor node $v$ sends to every connected variable
node $i$ the following message:
\begin{equation}
m_{v\rightarrow i}^{(k)}(x_{i})=\int f_{v}(X_{v})\prod_{j\in\mathcal{F}_{v}\setminus i}m_{j\rightarrow v}^{(k-1)}(x_{j})d(X_{v}\setminus x_{i}).\label{eq:BP2}
\end{equation}
The desired marginal at node $i$ and iteration $N$ is estimated:
\begin{equation}
g_{i}^{(N)}(x_{i})=\prod_{w\in\mathcal{N}_{i}}m_{w\rightarrow i}^{(N)}(x_{i}),\label{eq:GP-mar}
\end{equation}
modulo a constant scalar to make its integral equal 1.

It is clear that this is a {\em fully distributed algorithm} because
only local information gets exchanged and used without the need for
any global information. Note that BP is also known as the {\em sum-product}
algorithm because (\ref{eq:BP0})-(\ref{eq:BP2}) use sums (i.e.,
integrals) and products.

\section{Problem Formulation}\label{sec:formulation}

Consider a system with $I$ unknown sub-states\footnote{{Since we focus on the static estimation problem, the sub-states $x_{1},x_{2},\ldots,x_{I}$ are unknown time-invariant vectors.}} $x_{1},x_{2},\ldots,x_{I}$ and each of them corresponds to a sensing node. Following the notations in BP, we call the sensing nodes {\em variable nodes} and $x_{i}\in \mathbb{R}^{n_i}$ is a vector. Associated with the system are two kinds of measurements (also vectors), the so-called {\em self measurement}
for node $i$,
\begin{equation}
z_{i}=C_{i}x_{i}+v_{i},\label{eq:cycsys1}
\end{equation}
and (pair-wise) {\em Joint measurement} between nodes $i$ and
$j$,
\begin{equation}
z_{i,j}=C_{i,j}x_{i}+C_{j,i}x_{j}+v_{i,j}.\label{eq:cycsys2}
\end{equation}
In the above, the matrices $C_{i},C_{i,j}$ and $C_{j,i}$ are known;
$v_{i}$ and $v_{i,j}$ are independent Gaussian measurement noises with known covariances $R_{i}>0$ and $R_{i,j}>0$,
respectively. Note that: 1) the factor node $(i,j)$ is unordered,
i.e., $(i,j)=(j,i)$; 2) $z_{i,j}=z_{j,i}$ and $v_{i,j}=v_{j,i}$;
3) It is not necessary for all variable nodes to have self measurements
or all variable node pairs to have joint measurements. In fact, joint
measurements are typically sparse for large graphs.

The problem of distributed WLS estimation is to compute the maximum
likelihood (ML) estimate for each $x_{i}$ and the corresponding estimation
error covariance using a fully distributed algorithm. It is clear
that the likelihood functions given by the self and joint measurements
are, respectively,
\begin{align}
f_{i}(x_{i}) & =p\left(z_{i}|x_{i}\right)\nonumber \\
 & \sim\mathcal{N}(z_{i}-C_{i}x_{i},R_{i}),\label{Gaussian_fi}\\
f_{i,j}(x_{i},x_{j}) & =p\left(z_{i,j}|x_{i},x_{j}\right)\nonumber \\
 & \sim\mathcal{N}(z_{i,j}-C_{i,j}x_{i}-C_{j,i}x_{j},R_{ij}),\label{Gaussian_fij}
\end{align}
where $\mathcal{N}(\mu,\Sigma)$ stands for a Gaussian PDF with mean
$\mu$ and covariance $\Sigma$. {It is noted that, in our setting, $f_{i}(x_{i})$ in \eqref{Gaussian_fi} corresponds to the variable node and $f_{i,j}(x_{i},x_{j})$ in \eqref{Gaussian_fij} corresponds to the factor node.}

The joint likelihood function for
$X=\{x_{i}:i=1,2,\ldots,I\}$ becomes
\begin{equation}
f(X)=\prod_{i}f_{i}(x_{i})\prod_{(i,j)}f_{i,j}(x_{i},x_{j}).\label{Gaussian_f}
\end{equation}
Therefore, the maximum likelihood function for each $x_{i}$ is given
by
\begin{equation}
g_{i}(x_{i})=\int f(X)d(X\setminus x_{i}),\label{marginal}
\end{equation}
which is exactly the task of static distributed state estimator.

Denote the distributed state estimation and covariance at iteration $N$ by $\hat{x}_{i}(N)$ and $\Sigma_{i}(N)$, respectively. They are calculated by Algorithm~\ref{BP}. Moreover, denote
\[
\Omega_{i,j}=C_{i}^{T}R_{i}^{-1}C_{i}+\sum_{w\in\mathcal{N}_{i}\backslash j}C_{i,w}^{T}R_{i,w}^{-1}C_{i,w}.
\]
{Notice that $\Omega_{i,j}>0$ for every $(i,j)$, is necessary for Algorithm~\ref{BP}.}

Defining
\begin{equation}
\alpha_{i}(N)=Q_{i}(N)\hat{x}_{i}(N);\ \ Q_{i}(N)=\Sigma_{i}^{-1}(N),\label{alpha}
\end{equation}
which we call {\em information vector} and {\em information matrix},
or {\em information parameters} collectively. It is easy to verify that $\Omega_{i,j}=Q_{i\rightarrow i,j}(1)$ for all $i=1,2,\ldots,I$.

{Throughout this paper, our discussions are always under the following assumption.}

\begin{ass}\label{asu:1} For all $i=1,2,\ldots,I$ and $j\in\mathcal{N}_{i}$,
\begin{equation}
\Omega_{i,j}=C_{i}^{T}R_{i}^{-1}C_{i}+\sum_{w\in\mathcal{N}_{i}\backslash j}C_{i,w}^{T}R_{i,w}^{-1}C_{i,w}>C_{i,j}^{T}R_{i.j}^{-1}C_{i,j}.  \label{assump:01}
\end{equation}
\end{ass}
\begin{rem}
\label{rem:assumption0}
{Roughly speaking, \eqref{assump:01} means that, for each node $i$,
the information contribution from any single neighbouring node $j$
(i.e., $C_{i,j}^{T}R_{i,j}^{-1}C_{i,j}$) is strictly smaller than
the sum of that from node $i$ (i.e., $C_{i}^{T}R_{i}^{-1}C_{i}$) and all other
neighbouring nodes $w\in \mathcal{N}_{i}\backslash j$ (i.e., $C_{i,w}^{T}R_{i,w}^{-1}C_{i,w}$).}
In particular,  Assumption~\ref{asu:1} implies
that $\Omega_{i,j}>0$ for all $(i,j)$. It also implies that, for
every leaf node~\footnote{A variable node is called a {\em leaf
node} if it is connected  by only one edge.} $i$, $C_{i}^{T}R_{i}^{-1}C_{i}>0$ (or equivalently, $C_{i}$ has
full column rank), due to the fact that the sum term in $\Omega_{i,j}$
is void in this case.
\end{rem}

\begin{rem}
\label{rem:assumption} Due to the strict inequality above, it is
clear that Assumption~\ref{asu:1} is equivalent to the existence
of some constant $0<\eta<1$ such that
\begin{equation}
\eta\Omega_{i,j}\ge C_{i,j}^{T}R_{i,j}^{-1}C_{i,j}\label{eq:as1}
\end{equation}
for all $j\in\mathcal{N}_{i}$. We will use this property in the sequel.
\end{rem}

\begin{algorithm}[!]
\protect\protect\protect\protect\protect\protect\protect\caption{{A BP-based Distributed Static State Estimator}}
\label{BP} \begin{algorithmic}

\STATE 1) \textbf{Initialization:} At time $k=0$, factor node $(i,j)$
sends to each connected variable node $i$:
\begin{align}
\alpha_{i,j\rightarrow i}(0) & =C_{i,j}^{T}R_{i,j}^{-1}z_{i,j},\nonumber\\
Q_{i,j\rightarrow i}(0) & =C_{i,j}^{T}R_{i,j}^{-1}C_{i,j}.\label{newinitial}
\end{align}

\STATE 2) \textbf{Main loop:} At time $k=1,2,\cdots$, do:

2.1) Each variable node $i$ computes
\begin{align}
\alpha_{i}(k) & =C_{i}^{T}R_{i}^{-1}z_{i}+\sum_{j\in\mathcal{N}_{i}}\alpha_{i,j\rightarrow i}(k-1),\nonumber\\
Q_{i}(k) & =C_{i}^{T}R_{i}^{-1}C_{i}+\sum_{j\in\mathcal{N}_{i}}Q_{i,j\rightarrow i}(k-1),\label{eq:BP-Q}
\end{align}
and (if required at this iteration)
\begin{align}
\hat{x}_{i}(k) & =Q_{i}^{-1}(k)\alpha_{i}(k),\nonumber\\
\Sigma_{i}(k) & =Q_{i}^{-1}(k).\label{eq:BP-x-hat}
\end{align}
2.2) {Each variable node $i$ sends to each adjacent variable node $(i, j)$ with $j\in\mathcal{N}_{i}$}:
\begin{align}
\alpha_{i\rightarrow i,j}(k) & =\alpha_{i}(k)-\alpha_{i,j\rightarrow i}(k-1),\nonumber \\
Q_{i\rightarrow i,j}(k) & =Q_{i}(k)-Q_{i,j\rightarrow i}(k-1),\label{eq:BP-Q1}
\end{align}

2.3) {Each factor node $(i, j)$ sends to variable node $j$}:
\begin{align}
\alpha_{i,j\rightarrow j}(k) & =C_{j,i}^{T}R_{i,j\rightarrow j}^{-1}(k)z_{i,j\rightarrow j}(k),\nonumber\\
Q_{i,j\rightarrow j}(k) & =C_{j,i}^{T}R_{i,j\rightarrow j}^{-1}(k)C_{j,i},\label{eq:BP-Q2}
\end{align}
where
\begin{align}
z_{i,j\rightarrow j}(k) & =z_{i,j}-C_{i,j}Q_{i\rightarrow i,j}^{-1}(k)\alpha_{i\rightarrow i,j}(k),\nonumber\\
R_{i,j\rightarrow j}(k) & =R_{i,j}+C_{i,j}Q_{i\rightarrow i,j}^{-1}(k)C_{i,j}^{T}.\label{eq:BP-R}
\end{align}
\end{algorithmic}
\end{algorithm}
{Since variables are gaussian, messages containing innovation vector $\alpha_{i\rightarrow i,j},\alpha_{i,j\rightarrow i}$ and information matrix $Q_{i\rightarrow i,j},Q_{i,j\rightarrow i}$ are equivalent to messages containing the distribution itself. The Algorithm \ref{BP} is an instance of Gaussian BP.}

To a given set of variable and factor nodes, we associate an undirected
graph, called the \emph{canonical graph}, which we denote by $\mathcal{G}$.
This graph has a node associated with each variable node $i=1,\ldots,I$,
and an edge between nodes $i$ and $j$, if there exists joint measurement $z_{i,j}$, i.e., for all $j\in\mathcal{N}_{i}$. Moreover, the edge $(i,j)$ is unordered and we also call it factor node $(i,j)$.

It is well known that the Algorithm~\ref{BP} converges to the correct
marginals in a finite number of iterations when $\mathcal{G}$ is
acyclic~\cite{Taixin2013acc}. In fact, the required number
of iterations equals to the diameter of the graph, i.e., the maximum distance of any pair of variable nodes, where the distance of two nodes is the number of edges of the shortest path between them. The fundamental
challenge in this paper is to understand how the algorithm performs
for cyclic graphs. As mentioned in Section~\ref{sec:Introduction},
the goal of this paper is of twofold: First, we want to provide conditions
to guarantee the convergence of Algorithm~\ref{BP} when the induced bipartite
graph is cyclic. Secondly, when convergence occurs, we want to quantify
the accuracy of the distributed state estimate, i.e., the difference between our state estimate and the true (or global) maximum likelihood estimate in \eqref{marginal}.


\section{Convergence Analysis for Information Matrices}\label{sec:ConvergenceQ}

In this section, we provide our first key result which shows that
the information matrices $Q_{i}(k)$ always converge exponentially
to a positive definite matrix, under Assumption 1. In addition, the rate of convergence
is also characterized.

Firstly, some preliminary lemmas are required.
\begin{lem}
\label{lem:01} For any $k\in\mathbb{N}$, $1\le i\le I$ and $j\in\mathcal{N}_{i}$,
\begin{align}
Q_{i\rightarrow i,j}(k+1) & \le Q_{i\rightarrow i,j}(k);\nonumber \\
Q_{i,j\rightarrow j}(k+1) & \le Q_{i,j\rightarrow j}(k);\nonumber \\
R_{i,j\rightarrow j}(k+1) & \ge R_{i,j\rightarrow j}(k).\label{mono}
\end{align}
In particular, $Q_{i\rightarrow i,j}(k)\le\Omega_{i,j}$ for all $k\ge1$.
\end{lem}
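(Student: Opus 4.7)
The plan is to prove the three inequalities in \eqref{mono} simultaneously by induction on $k$, exploiting Loewner-order monotonicity of matrix inversion and congruence; the ``in particular'' claim then drops out for free by iterating.

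First I would read off from Algorithm~\ref{BP} three one-step monotone dependences. From \eqref{eq:BP-R}, $R_{i,j\rightarrow j}(k)$ is Loewner-decreasing in $Q_{i\rightarrow i,j}(k)$, because inversion reverses the Loewner order on positive definite matrices and congruence by $C_{i,j}$ preserves it. From \eqref{eq:BP-Q2}, $Q_{i,j\rightarrow j}(k)$ is Loewner-decreasing in $R_{i,j\rightarrow j}(k)$ for the same reason. Combining \eqref{eq:BP-Q} with \eqref{eq:BP-Q1} gives
\[
Q_{i\rightarrow i,j}(k+1)=C_{i}^{T}R_{i}^{-1}C_{i}+\sum_{w\in\mathcal{N}_{i}\setminus j}Q_{i,w\rightarrow i}(k),
\]
which is Loewner-increasing in each incoming factor-to-variable message $Q_{i,w\rightarrow i}(k)$. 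Chaining these three, a decrease of $Q_{i\rightarrow i,j}$ at level $k$ produces an increase of $R_{i,j\rightarrow j}$ and a decrease of $Q_{i,j\rightarrow j}$ at the same level, which in turn, via the analogous recursions at the neighbouring pairs, yields a decrease of $Q_{i\rightarrow i,j}$ at level $k+1$. The three inequalities thus form a single closed induction hypothesis and the inductive step is essentially automatic.

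For the base case I would plug in the initialization directly. Evaluating steps~2.1--2.2 at $k=1$ gives $Q_{w\rightarrow i,w}(1)=\Omega_{w,i}$ on every edge, so \eqref{eq:BP-R} yields $R_{w,i\rightarrow i}(1)=R_{i,w}+C_{w,i}\Omega_{w,i}^{-1}C_{w,i}^{T}\ge R_{i,w}=R_{w,i\rightarrow i}(0)$, the latter equality being \eqref{newinitial} read backwards through the factor-to-variable template. Inverting and conjugating by $C_{i,w}^{T}$ gives $Q_{i,w\rightarrow i}(1)\le Q_{i,w\rightarrow i}(0)$, and summing over $w\in\mathcal{N}_{i}\setminus j$ and adding $C_{i}^{T}R_{i}^{-1}C_{i}$ delivers $Q_{i\rightarrow i,j}(2)\le Q_{i\rightarrow i,j}(1)=\Omega_{i,j}$, closing the base case. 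The final claim $Q_{i\rightarrow i,j}(k)\le\Omega_{i,j}$ is then immediate from $Q_{i\rightarrow i,j}(1)=\Omega_{i,j}$ together with the monotonicity.

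The one point I would take some care with is well-definedness of $Q_{i\rightarrow i,j}^{-1}(k)$ appearing in \eqref{eq:BP-R}, which must be carried as a silent auxiliary hypothesis in the induction. Assumption~\ref{asu:1} supplies positive definiteness at $k=1$ through $Q_{i\rightarrow i,j}(1)=\Omega_{i,j}>0$. For $k\ge 2$, assuming positive definiteness at level $k-1$, each $R_{w,i\rightarrow i}(k-1)\ge R_{i,w}>0$ is invertible, and each $Q_{i,w\rightarrow i}(k-1)=C_{i,w}^{T}R_{w,i\rightarrow i}^{-1}(k-1)C_{i,w}$ therefore has the same kernel as $C_{i,w}^{T}C_{i,w}$; consequently any null vector of $Q_{i\rightarrow i,j}(k)$ would lie in the intersection of the kernels of $C_{i}$ and of the $C_{i,w}$ for $w\in\mathcal{N}_{i}\setminus j$, which is trivial because $\Omega_{i,j}>0$. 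This is the only place where Assumption~\ref{asu:1} enters essentially in the proof of Lemma~\ref{lem:01}; the remainder is a clean Loewner-order chase through the message-passing recursion.
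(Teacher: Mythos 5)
Your proof is correct and follows essentially the same route as the paper's: induction on $k$, chaining the Loewner-monotone dependences $Q_{i\rightarrow i,j}\mapsto R_{i,j\rightarrow j}\mapsto Q_{i,j\rightarrow j}\mapsto Q_{i\rightarrow i,j}$ through \eqref{eq:BP-Q}--\eqref{eq:BP-R}. The only cosmetic differences are that the paper anchors the base case with the convention $Q_{i\rightarrow i,j}(0)=\infty$ whereas you start from $Q_{i\rightarrow i,j}(1)=\Omega_{i,j}$, and that you make the invertibility bookkeeping explicit, which the paper leaves implicit.
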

\begin{proof}
See Appendix~\ref{app:lem:01}.
\end{proof}

\begin{lem}
\label{lem:02} Under Assumption~\ref{asu:1}, for every $1\le i\le I$ and $j\in\mathcal{N}_{i}$, we have
\begin{align*}
Q_{i\rightarrow i,j}(\infty) & =\lim_{k\rightarrow\infty}Q_{i\rightarrow i,j}(k)>0;\\
R_{i,j\rightarrow j}(\infty) & =\lim_{k\rightarrow\infty}R_{i,j\rightarrow j}(k)<\infty.
\end{align*}
\end{lem}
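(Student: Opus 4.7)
The plan is to reduce both claims to a single uniform bound by invoking the monotonicity of Lemma~\ref{lem:01}: since $\{Q_{i\to i,j}(k)\}$ is decreasing and PSD, it converges to a positive definite limit as soon as a $k$-independent positive definite lower bound exists; likewise $\{R_{i,j\to j}(k)\}$ is increasing and converges to a finite limit as soon as a $k$-independent upper bound exists. I would establish both through the single induction statement
\begin{equation*}
Q_{i\to i,j}(k)\ge(1-\eta)\Omega_{i,j}\quad\text{for all }k\ge 1,\;i,\;j\in\mathcal{N}_i,
\end{equation*}
where $\eta\in(0,1)$ is the constant supplied by Remark~\ref{rem:assumption}. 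The base case $k=1$ is immediate: unrolling the variable-node update together with the initialization yields $Q_{i\to i,j}(1)=\Omega_{i,j}$.

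Before turning to the inductive step I would extract a \emph{dual} form of the assumption, namely $C_{i,j}\Omega_{i,j}^{-1}C_{i,j}^T\le\eta R_{i,j}$. This follows because, setting $A=R_{i,j}^{-1/2}C_{i,j}\Omega_{i,j}^{-1/2}$, Remark~\ref{rem:assumption} reads $A^T A\le\eta I$, and since $A^T A$ and $AA^T$ share their nonzero eigenvalues, $AA^T\le\eta I$ as well; conjugating by $R_{i,j}^{1/2}$ gives the dual bound. For the inductive step, assume the hypothesis at iteration $k-1$ and consider a neighbour $w\in\mathcal{N}_i\setminus j$. Inverting the hypothesis gives $Q_{w\to i,w}^{-1}(k-1)\le(1-\eta)^{-1}\Omega_{w,i}^{-1}$, and combined with the dual bound this yields $C_{w,i}Q_{w\to i,w}^{-1}(k-1)C_{w,i}^T\le\tfrac{\eta}{1-\eta}R_{i,w}$. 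Substituting into the update for $R_{i,w\to i}$ gives $R_{i,w\to i}(k-1)\le(1-\eta)^{-1}R_{i,w}$, hence $Q_{i,w\to i}(k-1)\ge(1-\eta)C_{i,w}^T R_{i,w}^{-1}C_{i,w}$. Summing over $w\in\mathcal{N}_i\setminus j$ and adding $C_i^T R_i^{-1}C_i$ reassembles exactly $(1-\eta)\Omega_{i,j}$, closing the induction.

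Both claims then fall out. The first is immediate: the decreasing sequence $\{Q_{i\to i,j}(k)\}$ is bounded below by the positive definite matrix $(1-\eta)\Omega_{i,j}$, so its limit is positive definite. For the second, the induction together with the dual bound gives $C_{i,j}Q_{i\to i,j}^{-1}(k)C_{i,j}^T\le\tfrac{\eta}{1-\eta}R_{i,j}$, so the update for $R_{i,j\to j}$ yields $R_{i,j\to j}(k)\le(1-\eta)^{-1}R_{i,j}<\infty$, and Lemma~\ref{lem:01} supplies the finite monotone limit.

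The main obstacle is guessing the correct induction constant. The choice $(1-\eta)$ is forced: any weaker lower bound would fail to balance the loss factor $(1-\eta)^{-1}$ produced by inverting the hypothesis against the gain factor $\eta$ supplied by the dual bound, and only the precise combination $\eta/(1-\eta)+1=(1-\eta)^{-1}$ allows everything to re-absorb into $\Omega_{i,j}$ after summing. Establishing the dual transfer of Assumption~\ref{asu:1} via the $A^T A$/$AA^T$ spectrum identity is the second subtle ingredient.
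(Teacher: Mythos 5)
Your proposal is correct and follows essentially the same route as the paper's proof: a bootstrapping induction that couples a uniform positive-definite lower bound on $Q_{i\rightarrow i,j}(k)$ with a uniform upper bound on $R_{i,j\rightarrow j}(k)$, using the constant $\eta$ of Remark~\ref{rem:assumption} and the Schur-complement (equivalently, the $A^{T}A$ versus $AA^{T}$) dual of \eqref{eq:as1}, and then invoking the monotonicity of Lemma~\ref{lem:01}. The only cosmetic difference is that you anchor the induction on $Q_{i\rightarrow i,j}(k)\ge(1-\eta)\Omega_{i,j}$ whereas the paper anchors it on $R_{i,j\rightarrow j}(k)\le\gamma R_{i,j}$ with $\gamma=\max\{(1-\eta)^{-1},\max_{i,j}\Vert R_{i,j}^{-1/2}R_{i,j\rightarrow j}(1)R_{i,j}^{-1/2}\Vert\}$; the two formulations are interchangeable.
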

\begin{proof}
See Appendix~\ref{app:lem:02}.
\end{proof}

Next, we give the main result on convergence. Define
\begin{align*}
\Delta Q_{i\rightarrow i,j}(k) & =Q_{i\rightarrow i,j}^{-1/2}(\infty)Q_{i\rightarrow i,j}(k)Q_{i\rightarrow i,j}^{-1/2}(\infty)-I;\\
\Delta Q_{i,j\rightarrow j}(k) & =Q_{i,j\rightarrow j}^{-1/2}(\infty)Q_{i,j\rightarrow j}(k)Q_{i,j\rightarrow j}^{-1/2}(\infty)-I;\\
\Delta R_{i,j\rightarrow j}(k) & =R_{i,j\rightarrow j}^{-1/2}(\infty)R_{i,j\rightarrow j}(k)R_{i,j\rightarrow j}^{-1/2}(\infty)-I.
\end{align*}
Also, let constants $\rho>0$ and $\alpha>0$ be defined as follows:
\begin{align}
\rho & =\max_{i,j}\Vert R_{i,j\rightarrow j}^{-1/2}(\infty)C_{i,j}Q_{i\rightarrow i,j}^{-1}(\infty)C_{i,j}^{T}R_{i,j\rightarrow j}^{-1/2}(\infty)\Vert,\label{rho}\\
\alpha & =\max_{i,j}\Vert Q_{i\rightarrow i,j}^{-1/2}(\infty)\Omega_{i,j}Q_{i\rightarrow i,j}^{-1/2}(\infty)-I\Vert.\label{alpha0}
\end{align}
Note that $\rho<1$ follows from $R_{i,j}>0$ and
\[
R_{i,j\rightarrow j}(\infty)=R_{i,j}+C_{i,j}Q_{i\rightarrow i,j}^{-1}(\infty)C_{i,j}^{T}.
\]

\begin{lem}
\label{lem:03} Under Assumption~\ref{asu:1}, for every node $i$, its neighbor node $j$ and all
$k\in\mathbb{N}$, we have
\begin{equation}
0\le\Delta Q_{i\rightarrow i,j}(k)\le\alpha\rho^{k-1}I.\label{eq:01}
\end{equation}
\end{lem}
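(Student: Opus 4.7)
The plan is to prove the claim by induction on $k$, maintaining the invariant $0 \le \Delta Q_{i\to i,j}(k) \le \alpha\rho^{k-1}I$ simultaneously for every directed factor edge. Non-negativity is immediate from the monotonicity established in Lemma~\ref{lem:01}. The base case $k = 1$ follows from $Q_{i\to i,j}(1) = \Omega_{i,j}$ (a direct consequence of the initialization in Algorithm~\ref{BP}) together with the definition of $\alpha$.

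For the inductive step, I would propagate the bound on $\Delta Q_{w\to w,i}(k)$ through the chain
\[
\Delta Q_{w\to w,i}(k) \longrightarrow -\Delta R_{i,w\to i}(k) \longrightarrow R_{i,w\to i}^{-1}(k) - R_{i,w\to i}^{-1}(\infty) \longrightarrow Q_{i,w\to i}(k) - Q_{i,w\to i}(\infty),
\]
built from the two update identities $R_{i,w\to i}(k) = R_{i,w} + C_{w,i}Q_{w\to w,i}^{-1}(k)C_{w,i}^{T}$ and $Q_{i,w\to i}(k) = C_{i,w}^{T}R_{i,w\to i}^{-1}(k)C_{i,w}$ read off from Algorithm~\ref{BP}. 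The target is a per-term inequality $Q_{i,w\to i}(k) - Q_{i,w\to i}(\infty) \le \alpha\rho^{k}Q_{i,w\to i}(\infty)$; summing over $w\in\mathcal{N}_{i}\backslash j$ and using the fixed-point identity $Q_{i\to i,j}(\infty) = C_{i}^{T}R_{i}^{-1}C_{i} + \sum_{w\in\mathcal{N}_{i}\backslash j}Q_{i,w\to i}(\infty)$ (so that $\sum_{w} Q_{i,w\to i}(\infty) \le Q_{i\to i,j}(\infty)$) yields $Q_{i\to i,j}(k+1) - Q_{i\to i,j}(\infty) \le \alpha\rho^{k}Q_{i\to i,j}(\infty)$, which upon sandwiching by $Q_{i\to i,j}^{-1/2}(\infty)$ is the target bound.

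The main obstacle is retaining the sharp factor $\rho$ along every link of the chain: a routine sandwich argument (using merely $Q_{w\to w,i}^{-1}(\infty) - Q_{w\to w,i}^{-1}(k) \le Q_{\infty}^{-1/2}\Delta Q_{w\to w,i}(k)Q_{\infty}^{-1/2}$) degrades the estimate only to $\alpha^{2}\rho^{k-1}I$, which closes the induction only when $\alpha \le \rho$. The key tightening replaces that naive estimate by the exact identity
\[
Q_{w\to w,i}^{-1}(\infty) - Q_{w\to w,i}^{-1}(k) = Q_{\infty}^{-1/2}\bigl[I - (I + \Delta Q_{w\to w,i}(k))^{-1}\bigr]Q_{\infty}^{-1/2}
\]
combined with the operator monotonicity of $U\mapsto U(I+U)^{-1}$, producing the sharper bound $Q_{w\to w,i}^{-1}(\infty) - Q_{w\to w,i}^{-1}(k) \le \frac{\alpha\rho^{k-1}}{1+\alpha\rho^{k-1}}Q_{w\to w,i}^{-1}(\infty)$. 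Together with the defining inequality of $\rho$ in relabelled form $R_{i,w\to i}^{-1/2}(\infty)C_{w,i}Q_{w\to w,i}^{-1}(\infty)C_{w,i}^{T}R_{i,w\to i}^{-1/2}(\infty) \le \rho I$, this gives $-\Delta R_{i,w\to i}(k) \le \frac{\alpha\rho^{k}}{1+\alpha\rho^{k-1}}I$. Inverting then yields
\[
R_{i,w\to i}^{-1}(k) - R_{i,w\to i}^{-1}(\infty) \le \frac{\alpha\rho^{k}}{1+\alpha\rho^{k-1}(1-\rho)}R_{i,w\to i}^{-1}(\infty) \le \alpha\rho^{k}R_{i,w\to i}^{-1}(\infty),
\]
where the final inequality is a one-line scalar check that crucially uses $\rho < 1$. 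Sandwiching by $C_{i,w}^{T}$ and $C_{i,w}$ delivers the sharp per-term bound needed to close the induction.
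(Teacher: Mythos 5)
Your proof is correct and follows essentially the same route as the paper's: the same induction with the same base case $Q_{i\rightarrow i,j}(1)=\Omega_{i,j}$, and the same crucial tightening $Q_{w\rightarrow w,i}^{-1}(\infty)-Q_{w\rightarrow w,i}^{-1}(k)\le\frac{\alpha\rho^{k-1}}{1+\alpha\rho^{k-1}}Q_{w\rightarrow w,i}^{-1}(\infty)$ followed by the defining inequality of $\rho$ to pick up the extra factor of $\rho$ per iteration. The only (immaterial) difference is the last link of the chain: the paper applies the matrix inversion lemma to $R_{w,i\rightarrow i}^{-1}(k)$ and lower-bounds the resulting middle term $\bar{Q}_{w\rightarrow w,i}(k)$ via a Schur-complement fact, whereas you bound $R_{i,w\rightarrow i}(\infty)-R_{i,w\rightarrow i}(k)$ and invert directly; both land on the same per-term estimate $R_{i,w\rightarrow i}^{-1}(k)-R_{i,w\rightarrow i}^{-1}(\infty)\le\alpha\rho^{k}R_{i,w\rightarrow i}^{-1}(\infty)$.
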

\begin{proof}
See Appendix~\ref{app:lem:03}.
\end{proof}

Since ultimately we are only interested in the information matrices
$Q_{i}(k)$, we get the following result from Lemma~\ref{lem:03}.
\begin{thm}\label{thm:info}
Under Assumption~\ref{asu:1}, it holds that $Q_{i}(k)\rightarrow Q_{i}(\infty)>0$ as $k\rightarrow\infty$, for every node $i$ of $\mathcal{G}$. Moreover, by defining
\[
\Delta Q_{i}(k)=Q_{i}^{-1/2}(\infty)Q_{i}(k)Q_{i}^{-1/2}(\infty)-I,
\]
it holds, for every node $i$ of $\mathcal{G}$ and all $k\in\mathbb{N}$,
that
\begin{equation}
0\le\Delta Q_{i}(k)\le\alpha\rho^{k-1}I,\label{eq:Qi}
\end{equation}
where $\rho$ and $\alpha$ are defined in \eqref{rho} and \eqref{alpha0}, respectively. As analyzed before, we have $\rho<1$.
\end{thm}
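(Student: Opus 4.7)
My plan is to reduce Theorem~\ref{thm:info} to Lemma~\ref{lem:03} using two algebraic identities provided by Algorithm~\ref{BP}: the sum form $Q_i(k)=C_i^TR_i^{-1}C_i+\sum_{j\in\mathcal{N}_i}Q_{i,j\rightarrow i}(k-1)$, and, for any $j\in\mathcal{N}_i$, the splitting $Q_i(k)=Q_{i\rightarrow i,j}(k)+Q_{i,j\rightarrow i}(k-1)$ obtained by rearranging \eqref{eq:BP-Q1}. Lemma~\ref{lem:01} asserts that $Q_{i,j\rightarrow i}(k)$ is monotonically non-increasing; since it is positive semi-definite throughout, it admits a limit $Q_{i,j\rightarrow i}(\infty)\ge 0$, and the sum form then forces $Q_i(k)\to Q_i(\infty)=C_i^TR_i^{-1}C_i+\sum_{j\in\mathcal{N}_i}Q_{i,j\rightarrow i}(\infty)$. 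Passing the splitting identity to the limit and rearranging gives $Q_i(\infty)\ge Q_{i\rightarrow i,j}(\infty)>0$ by Lemma~\ref{lem:02}, so $Q_i(\infty)>0$ and $Q_i^{-1/2}(\infty)$ is well-defined.

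For the two-sided bound \eqref{eq:Qi}, subtracting limits in the sum form yields
\begin{equation*}
Q_i(k)-Q_i(\infty)=\sum_{j\in\mathcal{N}_i}\bigl[Q_{i,j\rightarrow i}(k-1)-Q_{i,j\rightarrow i}(\infty)\bigr],
\end{equation*}
whose summands are positive semi-definite by Lemma~\ref{lem:01}, giving $\Delta Q_i(k)\ge 0$ at once. For the upper bound, I would exploit the algorithmic relation $Q_{i,j\rightarrow i}(k)=C_{i,j}^T R_{j,i\rightarrow i}^{-1}(k)C_{i,j}$ with $R_{j,i\rightarrow i}(k)=R_{i,j}+C_{j,i}Q_{j\rightarrow i,j}^{-1}(k)C_{j,i}^T$, and feed in the exponential decay of $\Delta Q_{j\rightarrow i,j}(k)$ from Lemma~\ref{lem:03} (applied with the roles of $i$ and $j$ swapped). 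Operator-monotonicity of $X\mapsto -X^{-1}$ together with the affine structure of $R_{j,i\rightarrow i}$ propagates the rate $\rho$ without degradation to $Q_{i,j\rightarrow i}(k-1)-Q_{i,j\rightarrow i}(\infty)$. Summing over $j$ and using $\sum_j Q_{i,j\rightarrow i}(\infty)\le Q_i(\infty)$, which follows from the sum form and $C_i^TR_i^{-1}C_i\ge 0$, yields the claimed estimate; $\rho<1$ is already noted immediately after the definition of $\rho$ in \eqref{rho}.

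The main obstacle is keeping the constant in the upper bound tight: naively summing exponential bounds across $j\in\mathcal{N}_i$ introduces a factor $|\mathcal{N}_i|$ that must be absorbed into the normalization rather than inflating the constant. I would handle this by writing $Q_{i\rightarrow i,j}(\infty)=Q_i(\infty)-Q_{i,j\rightarrow i}(\infty)$ and exploiting the resulting telescoping after conjugation by $Q_i^{-1/2}(\infty)$; the uniformity in the definition of $\alpha$ over all $(i,j)$ pairs is what ultimately makes the final constant match. A secondary care point is the leaf-node case $|\mathcal{N}_i|=1$, where the splitting reduces to $Q_i(k)=C_i^TR_i^{-1}C_i+Q_{i,j\rightarrow i}(k-1)$ with a constant first summand and the bound follows from the exponential decay of the second summand alone.
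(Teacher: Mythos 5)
Your proposal follows essentially the same route as the paper's proof: both reduce the theorem to Lemma~\ref{lem:03} together with the induced exponential bound on $Q_{i,j\rightarrow i}(k-1)-Q_{i,j\rightarrow i}(\infty)$, and then exploit the additive structure of $Q_{i}(k)$. The paper uses the two-term splitting $Q_{i}(k)=Q_{i\rightarrow i,j}(k)+Q_{i,j\rightarrow i}(k-1)$ for a single fixed $j$, while you sum over all $j\in\mathcal{N}_{i}$; this difference is immaterial, and the $|\mathcal{N}_{i}|$ factor you worry about in your ``main obstacle'' paragraph never arises, because the per-term bounds are relative to each term's own limit and those limits add up to at most $Q_{i}(\infty)$ --- no telescoping trick is needed. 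The one step you state too loosely is the propagation from $\Delta Q_{j\rightarrow i,j}(k-1)\le\alpha\rho^{k-2}I$ to the bound on $Q_{i,j\rightarrow i}(k-1)$: plain operator monotonicity of the inverse only gives the sign of the deviation, and preserving the rate ``without degradation'' would leave you with $\alpha\rho^{k-2}$, one power of $\rho$ short of the claimed $\alpha\rho^{k-1}$. The paper's proof of Lemma~\ref{lem:03} carries out the matrix-inversion-lemma computation explicitly (equation (\ref{R-inverse})) and gains exactly one factor of $\rho$ at that step, precisely because $\rho$ is defined in (\ref{rho}) as the maximal norm of $R_{i,j\rightarrow j}^{-1/2}(\infty)C_{i,j}Q_{i\rightarrow i,j}^{-1}(\infty)C_{i,j}^{T}R_{i,j\rightarrow j}^{-1/2}(\infty)$; you need that quantitative computation (or to cite (\ref{R-inverse}) directly, as the paper's proof of the theorem does) in order to land on the stated exponent $k-1$.
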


\begin{proof}
From Lemma~\ref{lem:03}, for every node $j$ connected to
node $i$ and all $k\in\mathbb{N}$, we get
\begin{equation}
0\le Q_{i\rightarrow i,j}(k)-Q_{i\rightarrow i,j}(\infty)\le\alpha\rho^{k-1}Q_{i\rightarrow i,j}(\infty).\label{tempxxxx}
\end{equation}
In the proof of Lemma~\ref{lem:03}, we have (\ref{R-inverse}) which
says that
\[
0\le R_{j,i\rightarrow i}^{-1}(k-1)-R_{j,i\rightarrow i}^{-1}(\infty)\le\alpha\rho^{k-1}R_{j,i\rightarrow i}^{-1}(\infty).
\]
The left hand side above is also non-negative definite because of
the monotonic non-decreasing property of $R_{j,i\rightarrow i}(k)$.
It follows from \eqref{eq:BP-Q2} that
\begin{align*}
0 & \le Q_{i,j\rightarrow i}(k-1)-Q_{i,j\rightarrow i}(\infty)\\
 & =C_{i,j}^{T}(R_{j,i\rightarrow i}^{-1}(k-1)-R_{j,i\rightarrow i}^{-1}(\infty))C_{i,j}\\
 & \le\alpha\rho^{k-1}C_{i,j}^{T}R_{j,i\rightarrow i}^{-1}(\infty)C_{i,j}\\
 & =\alpha\rho^{k-1}Q_{i,j\rightarrow i}(\infty).
\end{align*}
Combining the above and \eqref{tempxxxx}, using \eqref{eq:BP-Q2}
yield
\begin{align*}
 & Q_{i}(k)-Q_{i}(\infty)\\
= & Q_{i\rightarrow i,j}(k)+Q_{i,j\rightarrow i}(k-1)-Q_{i\rightarrow i,j}(\infty)-Q_{i,j\rightarrow i}(\infty)\\
= & Q_{i\rightarrow i,j}(k)-Q_{i\rightarrow i,j}(\infty)+Q_{i,j\rightarrow i}(k-1)-Q_{i,j\rightarrow i}(\infty)\\
\le & \alpha\rho^{k-1}(Q_{i\rightarrow i,j}(\infty)+Q_{i,j\rightarrow i}(\infty))\\
= & \alpha\rho^{k-1}Q_{i}(\infty).
\end{align*}
The fact that $Q_{i}(k)-Q_{i}(\infty)\ge0$ also follows similarly.
The result \eqref{eq:Qi} then follows.
\end{proof}
\begin{rem}
The result in Theorem~\ref{thm:info} shows that the information matrix(i.e., the inverse of covariance matrix) from Algorithm~\ref{BP} exponentially converges under Assumption~\ref{asu:1}, and the convergence rate is $\rho<1$.
\end{rem}

\section{Convergence Analysis for the Estimates}\label{sec:ConvergenceX}

In this section, we proceed to study the convergence of the estimates
$\hat{x}_{i}(k)$. Under Assumption~\ref{asu:1}, we establish a necessary and sufficient condition
for the asymptotic convergence of the estimates. This result is general and non-conservative but requires checking
the stability of a high-dimensional matrix. We then provide a sufficient
condition for convergence of the estimates which can be easily verified
in a distributed fashion, with low computational complexity. As a
by-product, we also provide an alternative proof for the known result
that the estimates always converge for graphs with at most a single
cycle~\cite{Weiss2000graph}.

From (\ref{eq:BP-Q}) and (\ref{eq:BP-Q1}), we get
\[
\alpha_{i\rightarrow i,j}(k+1)=C_{i}^{T}R_{i}^{-1}z_{i}+\sum_{w\in\mathcal{N}_{i}\backslash j}\alpha_{i,w\rightarrow i}(k).
\]
Similarly, from (\ref{eq:BP-Q2}) and (\ref{eq:BP-R}), we get
\begin{align*}
\alpha_{i,w\rightarrow i}(k)= & C_{i,w}^{T}R_{i,w\rightarrow i}^{-1}(k)z_{w,i}\\
&-C_{i,w}^{T}R_{i,w\rightarrow i}^{-1}(k)C_{w,i}Q_{w\rightarrow i,w}^{-1}(k)\alpha_{w\rightarrow i,w}(k).
\end{align*}
Combining the above two equations gives the following dynamics:
\begin{align*}
 & \alpha_{i\rightarrow i,j}(k+1)\\
= & \beta_{i\rightarrow i,j}(k)\\
&-\sum_{w\in\mathcal{N}_{i}\backslash j}C_{i,w}^{T}R_{i,w\rightarrow i}^{-1}(k)C_{w,i}Q_{w\rightarrow i,w}^{-1}(k)\alpha_{w\rightarrow i,w}(k),
\end{align*}
where
\begin{eqnarray}\label{betaXXX}
\beta_{i\rightarrow i,j}(k)=C_{i}^{T}R_{i}^{-1}z_{i}+\sum_{w\in\mathcal{N}_{i}\backslash j}C_{i,w}^{T}R_{i,w\rightarrow i}^{-1}(k)z_{w,i}.
\end{eqnarray}
It is easy to check with Algorithm 1 that the above holds for all
$k\ge1$, provided that, in the equation above, we initialize all
$\alpha_{i\rightarrow i,j}(0)=0$.

Defining
\begin{align*}
\tilde{x}_{i\rightarrow i,j}(k) & =Q_{i\rightarrow i,j}^{-1/2}(k)\alpha_{i\rightarrow i,j}(k),\\
b_{i\rightarrow i,j}(k) & =Q_{i\rightarrow i,j}^{-1/2}(k+1)\beta_{i\rightarrow i,j}(k),\\
a_{i\rightarrow i,j}(k) & =C_{j,i}^{T}R_{i,j\rightarrow j}^{-1}(k)C_{i,j}Q_{i\rightarrow i,j}^{-1/2}(k),
\end{align*}
we get the following alternative dynamics:
\begin{align*}
 \tilde{x}_{i\rightarrow i,j}(k+1)=& b_{i\rightarrow i,j}(k)-Q_{i\rightarrow i,j}^{-1/2}(k+1)\\
 &\cdot\sum_{w\in\mathcal{N}_{i}\backslash j}a_{w\rightarrow i,w}(k)\tilde{x}_{w\rightarrow i,w}(k).
\end{align*}
Let $S$ be any ordered sequence of all $(i\rightarrow i,j)$. Form
the column vector $\tilde{x}(k)$ by stacking up all the $\tilde{x}_{i\rightarrow i,j}(k)$
according to $S$, and similarly form $b(k)$ by stacking up all the
$b_{i\rightarrow i,j}(k)$.

For each $(i\rightarrow i,j)$, define the row vector $A_{i\rightarrow i,j}(k)$
with its $(w\rightarrow i,w)$-th element equal to $-Q_{i\rightarrow i,j}^{-1/2}(k+1)a_{w\rightarrow i,w}(k)$
for each $w\in\mathcal{N}_{i}\backslash j$, and all other elements
zero. That is,
\begin{align*}
 & A_{i\rightarrow i,j}(k)\tilde{x}(k)\\
 =&-Q_{i\rightarrow i,j}^{-1/2}(k+1)\sum_{w\in\mathcal{N}_{i}\backslash j}a_{w\rightarrow i,w}(k)\tilde{x}_{w\rightarrow i,w}(k).
\end{align*}
Then we have the following dynamics for $\tilde{x}(k)$:
\begin{equation}
\tilde{x}(k+1)=A(k)\tilde{x}(k)+b(k),\label{eq:dynamics}
\end{equation}
where $A(k)$ is a matrix formed by stacking up all the row vectors
$A_{i\rightarrow i,j}(k)$ according to $S$. This leads to the following
main result on the convergence of $\tilde{x}(k)$, which in turn guarantees
the convergence of $\hat{x}_{i}(k)$ due to the convergence of $Q_{i\rightarrow i,j}(k)$.
\begin{lem}\label{lem:estimate}
Under Assumption~\ref{asu:1}, the estimate $\tilde{x}(k)$ converges asymptotically
to $(I-A(\infty))^{-1}b(\infty)$ if the matrix $A(\infty)$ is stable
(i.e., all of its eigenvalues are strictly within the unit circle).
Conversely, if $A(\infty)$ is not stable, then for almost all measurements
of $z_{i}$ and $z_{i,j}$, $\tilde{x}(k)$ will diverge as $k\rightarrow\infty$.
\end{lem}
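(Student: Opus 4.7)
The plan is to view \eqref{eq:dynamics} as an asymptotically time‑invariant affine recursion whose coefficients converge exponentially to their limits. The key supplementary fact I would establish at the outset is that $A(k)\to A(\infty)$ and $b(k)\to b(\infty)$ at the geometric rate $\rho$ of Theorem~\ref{thm:info}. This follows because each block of $A(k)$ is built from bounded deterministic matrices $C_{i,j},C_{j,i}$ and from $Q_{i\to i,j}^{-1/2}(k+1)$ and $R_{i,j\to j}^{-1}(k)$, all of which are smooth functions of the information quantities whose exponential convergence is guaranteed by Theorem~\ref{thm:info} and Lemma~\ref{lem:02}; an analogous remark applies to $b(k)$ through \eqref{betaXXX} and the definition of $b_{i\to i,j}(k)$.

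For the sufficient direction, I would set $x^{\star}=(I-A(\infty))^{-1}b(\infty)$, which is well defined since $A(\infty)$ stable forces $I-A(\infty)$ invertible, and let $e(k)=\tilde{x}(k)-x^{\star}$. A direct substitution yields
\[
e(k+1)=A(\infty)e(k)+\delta(k),\qquad \delta(k)=(A(k)-A(\infty))\tilde{x}(k)+(b(k)-b(\infty)).
\]
Since $\rho(A(\infty))<1$, I would pick a vector norm $\|\cdot\|_{\star}$ whose induced matrix norm satisfies $\|A(\infty)\|_{\star}=\sigma<1$, and show in two steps that $e(k)\to0$: first, use the eventual contraction $\|A(k)\|_{\star}<1$ for large $k$ together with the exponential decay of $\|A(k)-A(\infty)\|_{\star}$ to conclude that $\tilde{x}(k)$, and hence $\delta(k)$, remains bounded (a Bellman/Gr\"onwall-type argument); second, apply a standard convolution estimate to $e(k+1)=A(\infty)e(k)+\delta(k)$ to conclude $e(k)\to 0$ because $\|\delta(k)\|_{\star}\to 0$.

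For the necessary direction, suppose $A(\infty)$ has an eigenvalue $\lambda$ with $|\lambda|>1$, and let $P_{u}$ denote the spectral projector onto the associated generalized eigenspace. My approach is to use a discrete Perron/roughness theorem for asymptotically autonomous linear systems: since $A(k)-A(\infty)\to 0$ exponentially, the state transition matrix $\Phi(k,0)=A(k-1)\cdots A(0)$ admits a dichotomy whose unstable part grows at a rate arbitrarily close to that of $A(\infty)|_{\mathrm{Im}\,P_{u}}$. Writing $\tilde{x}(k)=\sum_{j=0}^{k-1}\Phi(k,j+1)b(j)$ (with $\tilde{x}(0)=0$), the map $z\mapsto\{\tilde{x}(k)\}$ is affine in the stacked measurement vector $z=(z_i,z_{i,j})$. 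Consequently, the set of measurements yielding a bounded (hence convergent) trajectory is an affine subspace $L$ of the measurement space, defined by the requirement that the unstable projection of the tail series vanishes. Since the limit input $b(\infty)$ is a surjective linear function of $z$ (the coefficients $C_{i}^{T}R_{i}^{-1}$ and $C_{i,w}^{T}R_{i,w\to i}^{-1}(\infty)$ cover every measurement), there exist measurements for which $P_{u}b(\infty)\neq 0$, and then the Perron estimate forces divergence. Thus $L$ is a proper affine subspace and has Lebesgue measure zero.

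The main technical obstacle is in the converse: turning the intuition that an unstable $A(\infty)$ forces divergence into a rigorous "almost all measurements" statement. The subtlety is that the perturbation $A(k)-A(\infty)$, although exponentially small, enters multiplicatively and could in principle conspire with a special driving sequence $b(k)$ to stabilize the trajectory. To close this, I would rely on a discrete exponential dichotomy/roughness theorem (the rate $\rho$ of Theorem~\ref{thm:info} can be taken smaller than the spectral gap of $A(\infty)$ on its unstable subspace, possibly after a finite shift in $k$), which guarantees that the unstable manifold deforms continuously and stays of the same codimension; combined with the linear, surjective dependence of the effective driving sequence on the measurements, this yields the desired measure‑zero conclusion.
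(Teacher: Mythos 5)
Your sufficiency argument follows the same route as the paper's, which is only a three-sentence sketch: the paper observes that $b(k)$ is bounded and $A(k)\to A(\infty)$, and asserts that convergence ``follows naturally'' from the stability of $A(\infty)$. Your version is a correct and genuinely more careful filling-in of that sketch: you rightly note that one needs $b(k)\to b(\infty)$ (not mere boundedness) to identify the limit as $(I-A(\infty))^{-1}b(\infty)$, you justify the convergence of the coefficients via the exponential convergence of the information quantities, and the two-step argument (eventual contraction in an adapted norm to get boundedness of $\tilde{x}(k)$, then the forced recursion $e(k+1)=A(\infty)e(k)+\delta(k)$ with $\delta(k)\to0$) is sound. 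On the converse, you go well beyond the paper, which offers no proof at all of the ``almost all measurements diverge'' claim; your dichotomy/roughness strategy combined with the observation that the trajectory is linear in the stacked measurement vector $z$ (so the set of $z$ yielding bounded trajectories is a subspace, of measure zero once it is proper) is the right skeleton for such a proof.

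There is, however, a genuine gap in the step that makes that subspace \emph{proper}. You assert that $z\mapsto b(\infty)$ is surjective because ``the coefficients $C_{i}^{T}R_{i}^{-1}$ and $C_{i,w}^{T}R_{i,w\to i}^{-1}(\infty)$ cover every measurement.'' This is not correct as stated: the components $b_{i\to i,j}(\infty)$ and $b_{i\to i,j'}(\infty)$ for two neighbours $j\neq j'$ of the same node $i$ are built from overlapping subsets of the \emph{same} measurements $z_{i}$ and $z_{i,w}$, so the linear map from $z$ to the stacked vector $b(\infty)$ has strongly coupled rows and is in general far from surjective onto the product space in which $b(\infty)$ lives. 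What you actually need is weaker --- that the range of $z\mapsto b(\infty)$ (or, more precisely, of the full accumulated forcing) is not contained in the kernel of the unstable spectral projector $P_{u}$ --- but that non-degeneracy is exactly the crux of the ``almost all'' claim and cannot be waved through via surjectivity. Without it, one cannot rule out the (admittedly unlikely) scenario in which the unstable modes of $A(\infty)$ are never excited by any admissible measurement, in which case the bounded set would be the whole space. To close the argument you would need to exhibit, from the structure of $A(\infty)$ and the measurement matrices under Assumption~\ref{asu:1}, at least one measurement realization whose forcing has a nonzero unstable component.
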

\begin{proof}
It is clear from its definition that $b(k)$ is bounded. Also note
that $A(k)\rightarrow A(\infty)$ as $k\rightarrow\infty$. Thus,
the convergence property for $\tilde{x}(k)$ follows naturally from
the stability of $A(\infty)$.
\end{proof}

We have the following key property for $A(\infty)$.
\begin{lem}\label{lem:prop}
Under Assumption~\ref{asu:1}, the diagonal elements of $A(\infty)$ are zero.
Moreover, for every $(i\rightarrow i,j)$, we have
\[
A_{i\rightarrow i,j}(\infty)A_{i\rightarrow i,j}^{T}(\infty)\le\rho I.
\]
\end{lem}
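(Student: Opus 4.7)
The plan is to prove the two claims separately. The vanishing of the diagonal blocks of $A(\infty)$ is a purely combinatorial fact about how the rows and columns are indexed, while the norm bound reduces, via a single congruence transformation and the steady-state BP recursions, to the definition of $\rho$ in~\eqref{rho}.

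I would first handle the diagonal. By construction, the only nonzero blocks in row $(i\rightarrow i,j)$ sit in columns indexed by $(w\rightarrow i,w)$ with $w\in\mathcal{N}_{i}\setminus j$. For the diagonal block one would need the column $(i\rightarrow i,j)$ to match the template $(w\rightarrow i,w)$, which forces $w=i$ (from the source slot) and simultaneously $w=j$ (from the partner inside the unordered factor label), hence $i=j$. Since $j\in\mathcal{N}_{i}$ implies $j\ne i$, no such $w$ exists and the diagonal block is zero, whether at finite $k$ or in the limit.

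For the norm bound, I would expand
\[
A_{i\rightarrow i,j}(\infty)A_{i\rightarrow i,j}^{T}(\infty)=Q_{i\rightarrow i,j}^{-1/2}(\infty)\Bigl(\sum_{w\in\mathcal{N}_{i}\setminus j}a_{w\rightarrow i,w}(\infty)a_{w\rightarrow i,w}^{T}(\infty)\Bigr)Q_{i\rightarrow i,j}^{-1/2}(\infty)
\]
and estimate each summand. Substituting $i\mapsto w$, $j\mapsto i$ in~\eqref{rho} gives
\[
R_{w,i\rightarrow i}^{-1/2}(\infty)C_{w,i}Q_{w\rightarrow w,i}^{-1}(\infty)C_{w,i}^{T}R_{w,i\rightarrow i}^{-1/2}(\infty)\le\rho I.
\]
Expanding $a_{w\rightarrow i,w}(\infty)a_{w\rightarrow i,w}^{T}(\infty)$ from its definition reveals this matrix sandwiched between $C_{i,w}^{T}R_{w,i\rightarrow i}^{-1/2}(\infty)$ and its transpose, so that $a_{w\rightarrow i,w}(\infty)a_{w\rightarrow i,w}^{T}(\infty)\le\rho C_{i,w}^{T}R_{w,i\rightarrow i}^{-1}(\infty)C_{i,w}$. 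The right-hand side equals $\rho Q_{i,w\rightarrow i}(\infty)$ by the steady-state version of~\eqref{eq:BP-Q2}. Summing over $w\in\mathcal{N}_{i}\setminus j$ and invoking the steady-state form of~\eqref{eq:BP-Q} together with~\eqref{eq:BP-Q1}, which gives $Q_{i\rightarrow i,j}(\infty)=C_{i}^{T}R_{i}^{-1}C_{i}+\sum_{w\in\mathcal{N}_{i}\setminus j}Q_{i,w\rightarrow i}(\infty)\ge\sum_{w\in\mathcal{N}_{i}\setminus j}Q_{i,w\rightarrow i}(\infty)$, I would conclude $\sum_{w}a_{w\rightarrow i,w}(\infty)a_{w\rightarrow i,w}^{T}(\infty)\le\rho Q_{i\rightarrow i,j}(\infty)$, and conjugating by $Q_{i\rightarrow i,j}^{-1/2}(\infty)$ on both sides delivers the claimed $\le\rho I$.

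The main obstacle will be bookkeeping rather than any analytic subtlety: factor labels like $(i,w)=(w,i)$ are unordered, but the message subscripts denote directed flows, so I must scrupulously distinguish the variable-to-factor quantity $Q_{w\rightarrow w,i}(\infty)$ (a sum over $w$'s neighbours other than $i$) from the factor-to-variable quantity $Q_{i,w\rightarrow i}(\infty)$ produced by~\eqref{eq:BP-Q2}. Once that indexing is pinned down, the proof collapses into a one-line congruence bound followed by a PSD triangle inequality, and $\rho<1$ (noted just below~\eqref{alpha0}) then feeds directly into the contraction argument needed for Lemma~\ref{lem:estimate} in the sequel.
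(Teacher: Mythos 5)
Your proposal is correct and follows essentially the same route as the paper's own proof: the diagonal vanishing is the same indexing observation (the paper phrases it as $\alpha_{i\rightarrow i,j}(k)$ not feeding into $\alpha_{i\rightarrow i,j}(k+1)$), and the norm bound is obtained exactly as in the paper by sandwiching the matrix $R_{i,w\rightarrow i}^{-1/2}(\infty)C_{w,i}Q_{w\rightarrow i,w}^{-1}(\infty)C_{w,i}^{T}R_{i,w\rightarrow i}^{-1/2}(\infty)\le\rho I$ inside each summand, identifying $C_{i,w}^{T}R_{i,w\rightarrow i}^{-1}(\infty)C_{i,w}=Q_{i,w\rightarrow i}(\infty)$, and bounding the sum by $Q_{i\rightarrow i,j}(\infty)$ before conjugating. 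Your explicit care with the unordered factor labels versus directed message subscripts is exactly the bookkeeping the paper relies on implicitly.
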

\begin{proof}
See Appendix~\ref{app:lem:prop}.
\end{proof}

For a given canonical graph $\mathcal{G}$, denote by $\bar{\mathcal{G}}$
the reduced graph obtained by repeatedly removing the leaf nodes until
there are no more leaf nodes (i.e., all the variable nodes are on
a cycle), or $\bar{\mathcal{G}}$ is a singleton (i.e., it contains
a single variable node). Without loss of generality, let the remaining
nodes be $1,2,\ldots,\bar{I}$. Further denote by $\bar{A}(\infty)$
the matrix obtained by removing the rows and columns of $A(\infty)$
associated with indices $(i\rightarrow i,j)$ for $i>\bar{I}$. For
the case $\bar{\mathcal{G}}$ is a singleton, $\bar{A}(\infty)$ is
void. We have the following important result:
\begin{lem}\label{lem:prop2}
The matrix $A(\infty)$ is stable if and only if
$\bar{A}(\infty)$ is stable. In particular, $A(\infty)$ is always
stable if $\mathcal{G}$ is acyclic.
\end{lem}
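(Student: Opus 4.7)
The plan is to reduce $\mathcal{G}$ one leaf at a time and to show that each removal peels off exactly two zero eigenvalues from the spectrum of $A(\infty)$ while leaving the remaining eigenvalues unchanged. Since zero lies strictly inside the unit circle, stability of $A(\infty)$ will then be equivalent to stability of the residual matrix $\bar{A}(\infty)$.

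First, I will establish the key structural observation. Suppose $i$ is a leaf of $\mathcal{G}$ with unique neighbour $j^{*}$. By the construction of $A(\infty)$, the row indexed $(i\rightarrow i,j^{*})$ can only be nonzero at columns $(w\rightarrow w,i)$ with $w\in\mathcal{N}_{i}\setminus j^{*}=\emptyset$, so the entire row is zero. Dually, the column indexed $(j^{*}\rightarrow j^{*},i)$ can only enter a row $(l\rightarrow l,m)$ when $l=i$ and $j^{*}\in\mathcal{N}_{i}\setminus m$, i.e.\ when $m\in\mathcal{N}_{i}\setminus j^{*}=\emptyset$, so the entire column is zero as well.

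Second, I reorder the directed-edge indices so that $(j^{*}\rightarrow j^{*},i)$ and $(i\rightarrow i,j^{*})$ occupy the last two positions. The four block-entries linking these two indices to themselves and to each other all vanish (the diagonals by Lemma~\ref{lem:prop}, the cross entries for the same non-backtracking reason), so $A(\infty)$ takes the block form
\[
A(\infty)=\begin{pmatrix}B & 0 & v\\ u^{T} & 0 & 0\\ 0 & 0 & 0\end{pmatrix},
\]
where $B$ is the submatrix indexed by the remaining directed edges. Expanding $\det(\lambda I-A(\infty))$ along the all-zero bottom row and then along the all-zero second-to-last column yields $\det(\lambda I-A(\infty))=\lambda^{2}\det(\lambda I-B)$, so the spectrum of $A(\infty)$ equals that of $B$ together with two additional zero eigenvalues.

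Third, I iterate. The non-zero pattern of $B$ is precisely the non-backtracking pattern on the reduced graph $\mathcal{G}\setminus\{i\}$, so any new leaf of $\mathcal{G}\setminus\{i\}$ (for instance $j^{*}$ itself, if it had degree two in $\mathcal{G}$) produces a fresh zero-row and zero-column pair inside $B$, and the same cofactor expansion splits off two further zero eigenvalues. Continuing this peeling mirrors the iterative leaf-removal that defines $\bar{\mathcal{G}}$: when no more leaves remain, the surviving submatrix is $\bar{A}(\infty)$, and when the peeling collapses $\mathcal{G}$ to a singleton the final submatrix is empty and $A(\infty)$ has only zero eigenvalues. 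Because the accumulated extra eigenvalues are all zero, $A(\infty)$ is stable if and only if $\bar{A}(\infty)$ is stable, with the empty case being vacuously stable; in particular, any acyclic $\mathcal{G}$ reduces under iterated leaf-removal to a singleton, so $\bar{A}(\infty)$ is void and $A(\infty)$ is automatically stable. The main bookkeeping hurdle will be to spell out cleanly that the non-backtracking zero pattern is preserved after each peeling, so that the induction really terminates with the matrix $\bar{A}(\infty)$ as stated in the lemma.
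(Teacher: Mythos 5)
Your proof is correct and takes essentially the same route as the paper's own argument: order the directed-edge indices so that those attached to a pendant edge come last, exhibit the resulting zero row (and zero column), split off the corresponding zero eigenvalues by cofactor expansion, and iterate the peeling until only $\bar{A}(\infty)$ survives (void in the acyclic case). You are in fact slightly more careful than the paper, which explicitly deletes only the zero row $A_{t\rightarrow t,s}(\infty)$ and leaves the companion index $(s\rightarrow s,t)$ --- whose column is zero --- to be removed implicitly under the phrase ``removing the leaf node $t$ from $\mathcal{G}$.''
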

\begin{proof}
See Appendix~\ref{app:lem:prop2}.
\end{proof}

Using the result above, we can restate Lemma~\ref{lem:estimate} as
follows:
\begin{thm}
\label{thm:estimate.1} Under Assumption~\ref{asu:1}, every estimate $\hat{x}_i(k)$, $i=1,2,\ldots,I$, converges asymptotically if the matrix $\bar{A}(\infty)$ is stable.
Conversely, if $\bar{A}(\infty)$ is not stable, then for almost all
measurements of $z_{i}$ and $z_{i,j}$, $\hat{x}_i(k)$ will diverge
as $k\rightarrow\infty$.
\end{thm}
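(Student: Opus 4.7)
The plan is to combine the stability equivalence in Lemma~\ref{lem:prop2} with the dynamics of $\tilde{x}(k)$ from Lemma~\ref{lem:estimate}, and then translate the conclusion from the message-space variables $\tilde{x}_{i\rightarrow i,j}(k)$ back to the estimate-space variables $\hat{x}_{i}(k)$ using the convergence of the information matrices from Theorem~\ref{thm:info}. The two implications are handled separately, with the forward direction essentially a chain of substitutions and the reverse requiring more care to preserve the ``almost all measurements'' qualifier.

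For the sufficiency, assume $\bar{A}(\infty)$ is stable. By Lemma~\ref{lem:prop2}, $A(\infty)$ is stable, and then Lemma~\ref{lem:estimate} gives $\tilde{x}(k)\rightarrow (I-A(\infty))^{-1}b(\infty)$. Since $Q_{i\rightarrow i,j}(k)\rightarrow Q_{i\rightarrow i,j}(\infty)>0$ by Theorem~\ref{thm:info}, the identity $\alpha_{i\rightarrow i,j}(k)=Q_{i\rightarrow i,j}^{1/2}(k)\tilde{x}_{i\rightarrow i,j}(k)$ immediately yields convergence of each $\alpha_{i\rightarrow i,j}(k)$. Plugging this into the factor-node update \eqref{eq:BP-Q2}--\eqref{eq:BP-R}, together with the convergence of $R_{i,j\rightarrow j}(k)$ and $Q_{i\rightarrow i,j}(k)$, produces convergence of each $\alpha_{i,j\rightarrow i}(k)$. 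Substituting into \eqref{eq:BP-Q} gives convergence of each $\alpha_{i}(k)$, whence $\hat{x}_{i}(k)=Q_{i}^{-1}(k)\alpha_{i}(k)$ converges by another application of Theorem~\ref{thm:info}.

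For the necessity, suppose $\bar{A}(\infty)$ is not stable. Again by Lemma~\ref{lem:prop2} and Lemma~\ref{lem:estimate}, $A(\infty)$ is unstable and $\tilde{x}(k)$ diverges for almost all measurements $z_{i},z_{i,j}$. I would then express $\hat{x}_{i}(k)$ as an affine function of $\tilde{x}(k)$ and the stacked measurement vector $z$, with coefficient matrices that converge to deterministic limits (again using Theorem~\ref{thm:info} to handle the $Q_{i\rightarrow i,j}^{\pm 1/2}(k)$ and $R_{i,j\rightarrow j}^{-1}(k)$ factors in \eqref{eq:BP-Q}--\eqref{eq:BP-R}). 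The argument then proceeds by contrapositive: if every $\hat{x}_{i}(k)$ were to converge, then every $\alpha_{i}(k)=Q_{i}(k)\hat{x}_{i}(k)$ would converge, and this can be back-propagated through the relations \eqref{eq:BP-Q1}, \eqref{eq:BP-Q2}, \eqref{eq:BP-R} to force $\tilde{x}(k)$ itself to converge, except on the subset of $z$-space where a dominant unstable mode of $\tilde{x}(k)$ happens to lie in the kernel of the projection to every $\hat{x}_{i}$.

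The main obstacle is this last point: quantifying the genericity required for the necessity direction. I expect to show that the measurements for which such a cancellation occurs form a proper affine subspace of the total measurement space, hence have Lebesgue measure zero; combined with the measure-zero exceptional set from Lemma~\ref{lem:estimate}, this yields the desired ``almost all measurements'' divergence of $\hat{x}_{i}(k)$. The routine part is the algebraic unwinding of messages in the sufficiency direction; the delicate part is certifying non-degeneracy of the message-to-estimate projection in the necessity direction.
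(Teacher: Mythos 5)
Your proposal follows the same route as the paper: Theorem~\ref{thm:estimate.1} is obtained there simply by combining Lemma~\ref{lem:estimate} with Lemma~\ref{lem:prop2}, exactly as in your first two paragraphs, and your unwinding of the messages $\alpha_{i\rightarrow i,j}(k)\rightarrow\alpha_{i,j\rightarrow i}(k)\rightarrow\alpha_i(k)\rightarrow\hat{x}_i(k)$ in the sufficiency direction is correct. The genericity issue you flag in the necessity direction (divergence of $\tilde{x}(k)$ versus divergence of every $\hat{x}_i(k)$) is a real subtlety, but the paper does not address it either --- it asserts the ``almost all measurements'' claim without proof --- and your measure-zero cancellation argument is the natural way to close that gap rather than a departure from the paper's approach.
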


While the result above provides a necessary and sufficient condition
for the convergence of the estimates, checking $\bar{A}(\infty)$
is not an easy task for a large system. Our next aim is to provide
a sufficient condition for guaranteeing the stability of $\bar{A}(\infty)$
that is easily verifiable in a distributed fashion.

With some abuse of notation, we still denote by $S$ a sequence of
$(i\rightarrow i,j)$ for $\bar{\mathcal{G}}$. In particular, we
will choose $S=\{S_{1},S_{2},\ldots,S_{\bar{I}}\}$ with $S_{i}$
denoting a sub-sequence containing all $(i\rightarrow i,j)$ for $j\in\bar{\mathcal{N}}_{i}$
and $\bar{\mathcal{N}}_{i}$ denoting the set of neighbouring nodes of $i$
in $\bar{\mathcal{G}}$. We further denote by $\bar{A}_{i}(\infty)$
the square sub-matrix of $\bar{A}(\infty)$ by keeping only the rows
with indices $(i\rightarrow i,j),j\in\bar{\mathcal{N}}_{i}$, and
only columns with indices $(j\rightarrow j,i),j\in\bar{\mathcal{N}}_{i}$.
For better understanding of the notation, an example is given below.

\begin{exa}\label{exam} Fig.~\ref{example1} shows a simple canonical
graph and the structure of the associated $A(\infty)$, where $*$
stands for a non-zero term. It is easy to verify that
\begin{align*}
A_{1}(\infty) & =A_{3}(\infty)=\begin{bmatrix}0 & * & *\\
* & 0 & *\\
* & * & 0
\end{bmatrix};\\
A_{2}(\infty) & =A_{4}(\infty)=\begin{bmatrix}0 & *\\
* & 0
\end{bmatrix}.
\end{align*}

\end{exa}
\begin{figure}[ht]
\begin{centering}
\includegraphics[width=9cm]{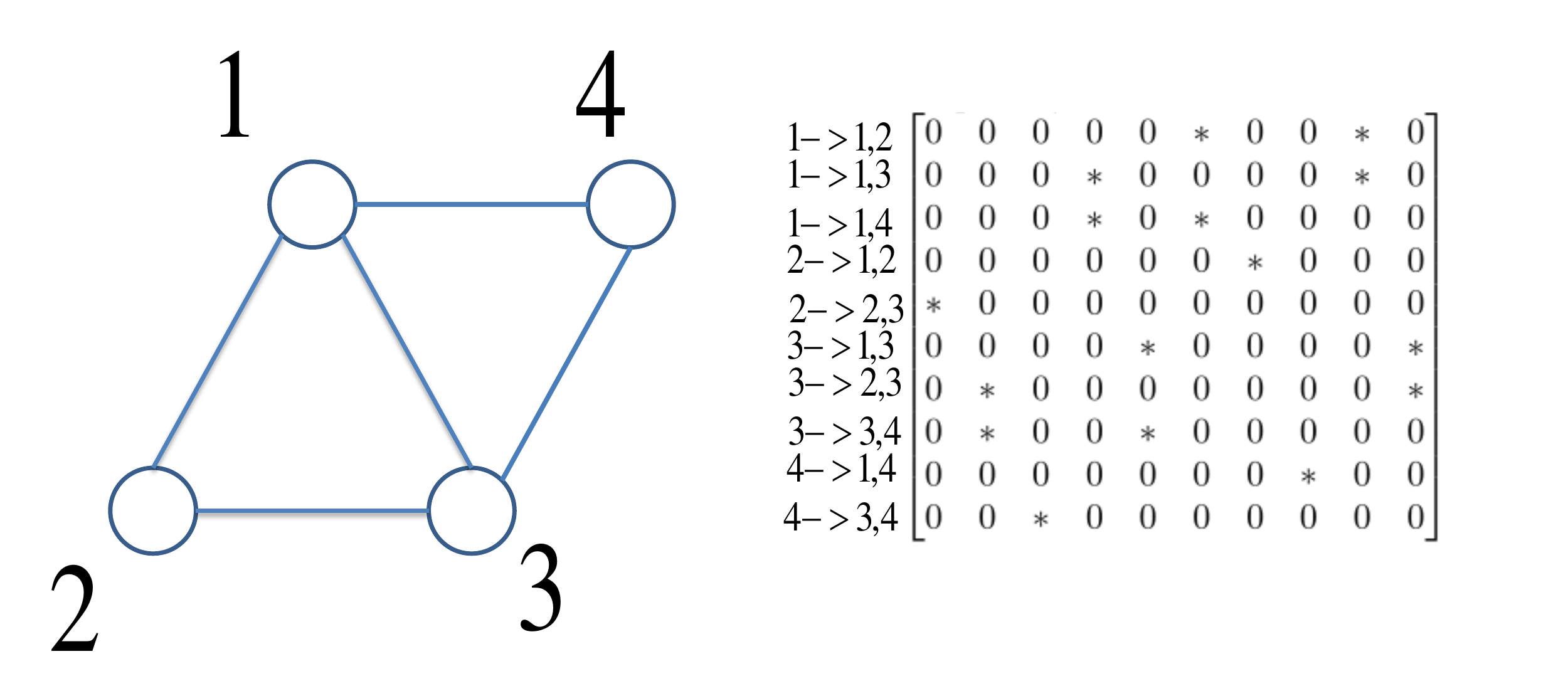}
\par\end{centering}

\caption{An example to show the structure of $A(\infty)$}
\label{example1}
\end{figure}

We have the next result on the convergence of the estimates, which
can be checked in a distributed fashion.
\begin{thm}
\label{thm:estimate2} Recall that $\rho$ is defined in\eqref{rho}. Suppose there exists $\rho\leq \bar{\rho}<1$
such that $\bar{A}_{i}(\infty)\bar{A}_{i}^{T}(\infty)\le\bar{\rho}I$
for every $1\le i\le\bar{I}$ which has at least three neighbouring nodes
in $\bar{\mathcal{G}}$. Under Assumption~\ref{asu:1}, for every $i=1,2,\ldots,I$, $\hat{x}_i(k)$
converges exponentially with the rate $\bar{\rho}$.
In particular, if $\mathcal{G}$ only has a single cycle (which means that
every node $i$ in $\bar{\mathcal{G}}$ has only two neighbouring
nodes), $\hat{x}_i(k)$ converges exponentially with the rate of $\rho$.
\end{thm}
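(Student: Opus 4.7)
The plan is to exploit the sparsity pattern of $\bar{A}(\infty)$ to reduce the stability question to a block-by-block estimate, and then combine this with Lemma~\ref{lem:prop2} and Theorem~\ref{thm:estimate.1}. The key structural observation is that the block row of $\bar{A}(\infty)$ indexed by $(i\rightarrow i,j)$ is supported only in columns of the form $(w\rightarrow w,i)$ with $w\in\bar{\mathcal{N}}_{i}\setminus j$. In particular, two block rows indexed by $(i\rightarrow i,j)$ and $(i'\rightarrow i',j')$ have disjoint supports whenever $i\ne i'$, which gives the block-diagonal identity
\[
\bar{A}(\infty)\bar{A}^{T}(\infty)=\bigoplus_{i=1}^{\bar{I}}\bar{A}_{i}(\infty)\bar{A}_{i}^{T}(\infty).
\]
Hence $\|\bar{A}(\infty)\|^{2}=\max_{i}\|\bar{A}_{i}(\infty)\bar{A}_{i}^{T}(\infty)\|$, so the global stability question reduces to bounding each diagonal block.

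I would then bound each such block by $\bar{\rho}I$ by cases. When $i$ has at least three neighbours in $\bar{\mathcal{G}}$, the hypothesis of the theorem applies directly. When $i$ has exactly two (the only remaining case, since $\bar{\mathcal{G}}$ has no leaves by construction), $\bar{A}_{i}(\infty)$ is a $2\times 2$ block matrix with zero diagonal blocks, so $\bar{A}_{i}(\infty)\bar{A}_{i}^{T}(\infty)$ is itself block-diagonal, with diagonal blocks of the form $BB^{T}$ where $B$ is a sub-vector of the full block row $A_{(i\rightarrow i,j)}(\infty)$ of $A(\infty)$. Since discarding columns only decreases an outer product, Lemma~\ref{lem:prop} gives $BB^{T}\le A_{(i\rightarrow i,j)}(\infty)A_{(i\rightarrow i,j)}^{T}(\infty)\le\rho I\le\bar{\rho}I$. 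Combining the two cases yields $\bar{A}(\infty)\bar{A}^{T}(\infty)\le\bar{\rho}I$, whence $\|\bar{A}(\infty)\|\le\sqrt{\bar{\rho}}<1$ and $\bar{A}(\infty)$ is stable; Lemma~\ref{lem:prop2} then lifts this to stability of $A(\infty)$, and Theorem~\ref{thm:estimate.1} yields convergence of every $\hat{x}_{i}(k)$.

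To extract the rate, I would subtract the fixed-point form of~\eqref{eq:dynamics} to obtain
\begin{align*}
\tilde{x}(k+1)-\tilde{x}(\infty) = &\,A(k)\bigl(\tilde{x}(k)-\tilde{x}(\infty)\bigr)\\
&+\bigl(A(k)-A(\infty)\bigr)\tilde{x}(\infty)+\bigl(b(k)-b(\infty)\bigr),
\end{align*}
show that the two inhomogeneous terms decay as $O(\rho^{k})$ via Theorem~\ref{thm:info}, and iterate using $\|\bar{A}(\infty)\|^{2}\le\bar{\rho}$ to conclude $\|\tilde{x}(k)-\tilde{x}(\infty)\|^{2}=O(\bar{\rho}^{k})$, which transfers to each $\hat{x}_{i}(k)$ through the uniformly invertible factor $Q_{i}(k)$ of Theorem~\ref{thm:info}. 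The single-cycle case is just the specialisation in which every node of $\bar{\mathcal{G}}$ has exactly two neighbours, so the three-neighbour hypothesis is vacuous and only the Lemma~\ref{lem:prop} fallback is used; this sharpens the bound to $\bar{A}(\infty)\bar{A}^{T}(\infty)\le\rho I$ and hence the rate to $\rho$.

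The main obstacle I expect is establishing the block-diagonal identity for $\bar{A}(\infty)\bar{A}^{T}(\infty)$ cleanly, together with the monotonicity argument that passes from the full row $A_{(i\rightarrow i,j)}(\infty)$ to the restricted sub-vector $B$; once those two ingredients are in place, the rest is a short two-case check plus a standard perturbed-linear-recursion argument to absorb the vanishing perturbations coming from Theorem~\ref{thm:info}.
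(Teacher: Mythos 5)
Your proposal is correct and follows essentially the same route as the paper: the disjoint-support observation giving $\bar{A}(\infty)\bar{A}^{T}(\infty)=\diag\{\bar{A}_{i}(\infty)\bar{A}_{i}^{T}(\infty)\}$, the two-case bound (hypothesis for nodes with three or more neighbours, Lemma~\ref{lem:prop} for degree-two nodes), and then Lemmas~\ref{lem:estimate} and~\ref{lem:prop2} to conclude. Your explicit perturbed-recursion argument for extracting the rate $\bar{\rho}$, and the column-restriction justification for applying Lemma~\ref{lem:prop} to sub-rows of $\bar{A}(\infty)$, are details the paper leaves implicit, but they do not change the approach.
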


\begin{proof}
Suppose such $\bar{\rho}$ exists. From Lemma~\ref{lem:estimate}
and Lemma~\ref{lem:prop2}, it suffices to show that $\bar{A}^{T}(\infty)\bar{A}(\infty)\le\bar{\rho}I$.
Using Schur complement, this is equivalent to $\bar{A}(\infty)\bar{A}^{T}(\infty)\le\bar{\rho}I$.
It remains to show that this holds if $\bar{A}_{i}(\infty)\bar{A}_{i}^{T}(\infty)\le\bar{\rho}I$
for every $1\le i\le\bar{I}$ which has at least three neighbouring nodes
in $\bar{\mathcal{G}}$.

Indeed, from the construction of $A_{i\rightarrow i,j}(k)$ and definition
of $\bar{A}(\infty)$, it can be deduced that each row $\bar{A}_{i\rightarrow i,j}(\infty)$
of $\bar{A}(\infty)$ has non-zero entries only in column locations
$(w\rightarrow w,i)$ with $w\in\bar{\mathcal{N}}_{i}\backslash j$.
Two results follows from this. Firstly, for any $i\ne u$, we have
$\bar{A}_{i\rightarrow i,j}(\infty)\bar{A}_{u\rightarrow u,v}^{T}(\infty)=0$
because $\bar{A}_{i\rightarrow i,j}(\infty)$ and $\bar{A}_{u\rightarrow u,v}(\infty)$
have no common non-zero entries. Secondly, if any node $i$ in $\bar{\mathcal{G}}$
has only two neighboring nodes, say $j$ and $t$, then $\bar{A}_{i\rightarrow i,j}(\infty)$
has only one nonzero entry in column $(t\rightarrow t,i)$ and, likewise,
$\bar{A}_{i\rightarrow i,t}(\infty)$ has only one nonzero entry in
column $(j\rightarrow j,i)$. This means that $\bar{A}_{i\rightarrow i,j}(\infty)\bar{A}_{i\rightarrow i,t}^{T}(\infty)=0$
as well. Please refer to Example~\ref{exam} for the better understanding
of two facts above. Using the first result above, we have
\begin{align*}
&\bar{A}(\infty)\bar{A}^{T}(\infty)\\
=&\mathrm{diag}\{\bar{A}_{1}(\infty)\bar{A}_{1}^{T}(\infty),\bar{A}_{2}(\infty)\bar{A}_{2}^{T}(\infty),\ldots\}.
\end{align*}
The second result above further implies that if node $i$ has only
two neighbouring nodes in $\bar{\mathcal{G}}$, the corresponding $\bar{A}_{i}(\infty)\bar{A}_{i}^{T}(\infty)$
is a diagonal matrix (actually it is a $2\times2$ matrix), and by
Lemma~\ref{lem:prop}, $\bar{A}_{i}(\infty)\bar{A}_{i}^{T}(\infty)\le\rho I$.
Since $\rho\le\bar{\rho}$ and that $\bar{A}_{i}(\infty)\bar{A}_{i}^{T}(\infty)\le\bar{\rho}I$
for every $1\le i\le\bar{I}$ which has at least three neighbouring nodes
in $\bar{\mathcal{G}}$, we conclude that $\bar{A}(\infty)\bar{A}^{T}(\infty)\le\bar{\rho}I$.

The convergence rate of $\rho$ for the case of $\mathcal{G}$ having
a single cycle is clear from the above discussion as well because every
$\bar{A}_{i}(\infty)\bar{A}_{i}^{T}(\infty)\le\rho I$ in this case. \end{proof}

\begin{rem}
Using Theorem~\ref{thm:estimate2}, checking the convergence of the
estimates amounts to computing $\bar{A}_{i}(\infty)$ and its maximum
singular value $\sigma_{i}$ for each node with at least three neighbouring
nodes. The required $\bar{\rho}$ can be made to be $\bar{\rho}=\max_{i}\sigma_{i}^{2}$
using the fact that $\bar{A}_{i}(\infty)\bar{A}_{i}^{T}(\infty)\le\sigma_{i}^{2}I$.
By Theorem~\ref{thm:estimate2}, the convergence of the estimates
is guaranteed if $\bar{\rho}<1$. 
\end{rem}
\begin{rem}\label{rem:dynamic}
The Algorithm~\ref{BP} is designed for the static estimation problem. As for the centralized(traditional) state estimation case, this is a crucial step towards dynamic state estimation. Generalization to distributed dynamic state estimation will be a future topic. Since we have proved that the estimate using our algorithm exponentially converges, the algorithm could achieve an approximation for the optimal state estimate at time $k$ in a few steps during the time update from $k$ to $k+1$ and the prediction for time $k+1$ follows from neighbours' state estimates.
\end{rem}

\section{Accuracy Analysis on Information Matrices}\label{sec:AccuracyQ}

In Section~\ref{sec:ConvergenceQ}, we studied the convergence of
the information matrices. In this section we study its accuracy, i.e.,
the difference between the information matrices $Q_{i}(k)$ generated by Algorithm~\ref{BP} and the information matrices for the maximum likelihood estimates.

Let $\mathcal{G}_{i}(d)$ denote the subgraph of $\mathcal{G}$ formed
by nodes which are within $d$ hops away from node~$i$. Denote by
$d_{i}$ the largest integer such that $\mathcal{G}_{i}(d_{i})$ is
acyclic. We refer to $d_{i}$ as the {\em cycle-free depth} of node
$i$, and $d_{\min}=\min_{i}d_{i}$ as the {\em cycle-free depth}
of $\mathcal{G}$. If $\mathcal{G}$ is acyclic, we use the convention
that $d_{i}=\infty$ for all $i$, and in this case, $\mathcal{G}_{i}(d_{i})=\mathcal{G}$. It is emphasized that the {\em cycle-free depth} is a property of each vertex, and that it is upper bounded by the diameter of the graph and lower bounded by the length of the shortest cycle in the graph.

Recall that, for the measurements (\ref{eq:cycsys1}) and (\ref{eq:cycsys2}),
the corresponding joint likelihood function $f(X)$ is given by (\ref{Gaussian_fi}), \eqref{Gaussian_fij} and (\ref{Gaussian_f}).
The marginal $g_{i}(x_{i})$ of $X$ is given by (\ref{marginal}).
Denote by $\hat{x}_{i}^{ML}$ and $\Sigma_{i}^{ML}$ the mean and
covariance of $x_{i}$ corresponding to $g_{i}(x_{i})$, respectively.
The superscript ``ML'' stands for maximum likelihood, due to the
fact that $g_{i}(x_{i})$ is Gaussian and thus $\hat{x}_{i}^{ML}$
is the maximum likelihood estimate of $x_{i}$. Also define the ML
information matrix $Q_{i}^{ML}=(\Sigma_{i}^{ML})^{-1}$.

\begin{figure}[ht]
\begin{centering}
\includegraphics[width=8cm]{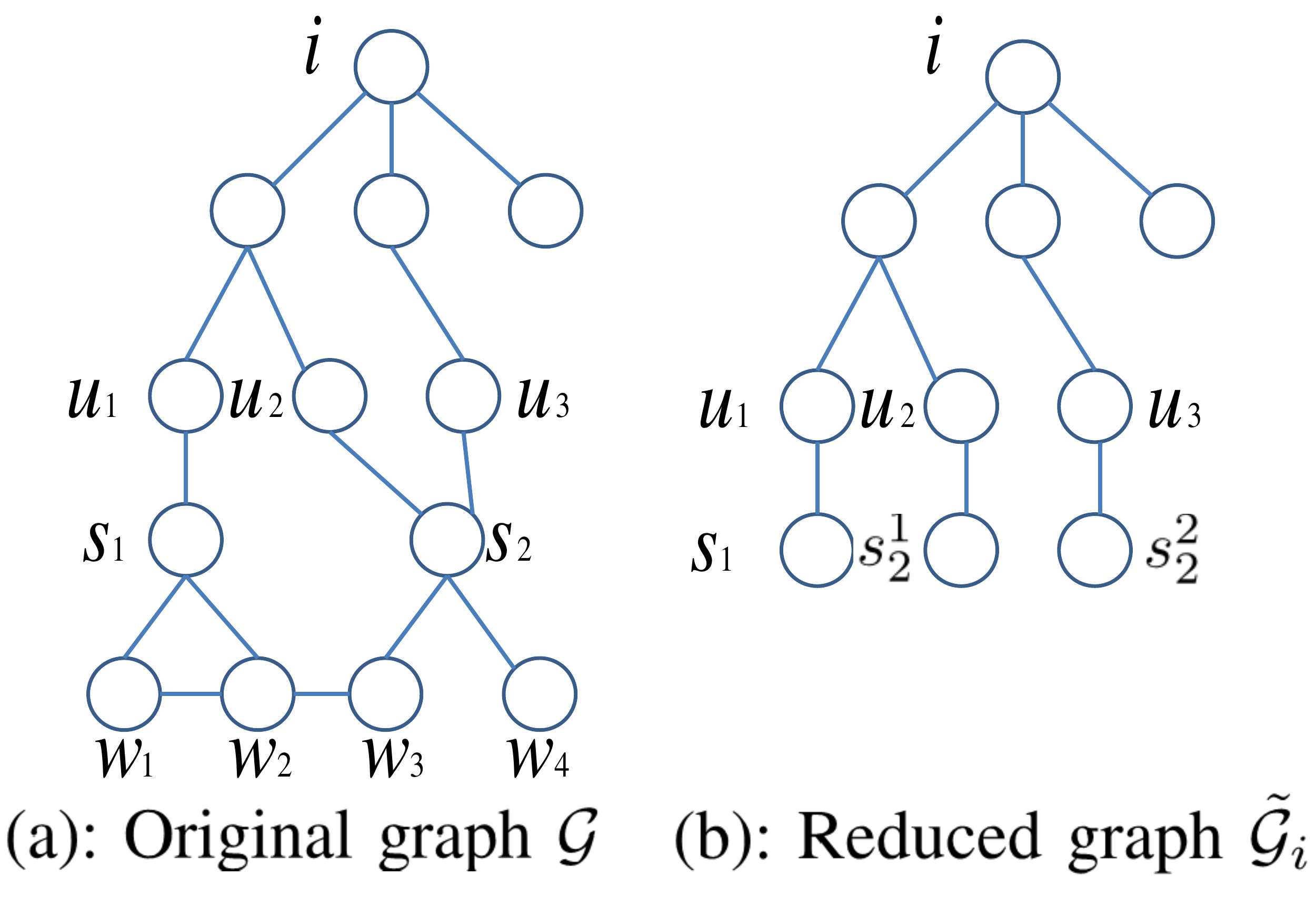}
\par\end{centering}

\caption{Conversion of a cyclic graph into an acyclic graph}
\label{tree}
\end{figure}


Now, for any node $i\in\mathcal{G}$, we introduce a reduced graph
$\tilde{\mathcal{G}}_{i}$, which will hold the key to the accuracy
analysis. Draw the graph $\mathcal{G}$ as depicted in Fig.~\ref{tree}(a).
Let $d_{i}$ be the cycle-free depth for node $i$, as defined earlier.
In Fig.~\ref{tree}(a), $d_{i}=2$. Denote by $u_{1},u_{2},\ldots,$
the leaf nodes of $\mathcal{G}_{i}(d_{i})$ and denote by $s_{1},s_{2},\ldots,$
the nodes outside of $\mathcal{G}_{i}(d_{i})$ which are connected
to the leaf nodes and called {\em child} nodes. There
are two cases for each child node: It connects to either one leaf
node only or multiple leaf nodes. In Fig.~\ref{tree}(a), $s_{1}$
connects to one leaf node of $\mathcal{G}_{i}(d_{i})$(i.e., $u_{1}$)
and $s_{2}$ connects to two leaf nodes of $\mathcal{G}_{i}(d_{i})$(i.e.,
$u_{2}$ and $u_{3}$). The reduced graph $\tilde{\mathcal{G}}_{i}$
is depicted in Fig.~\ref{tree}(b). This is done as follows: For
each child node $s$, firstly remove all of its connecting nodes that
are not in $\mathcal{G}_{i}(d_{i}+1)$. Then, if $s$ is connected
to multiple leaf nodes of $\mathcal{G}_{i}(d_{i})$, split $s$ into
multiple copies, one for each connecting leaf node. In Fig.~\ref{tree}(b),
$w_{1},\ldots,w_{4}$ are all removed and $s_{2}$ is split into $s_{2}^{1}$
and $s_{2}^{2}$. For each child node $s_{i}$ that connects to only
one leaf node of $\mathcal{G}_{i}(d_{i})$, let its self measurement
in the reduced graph $\tilde{\mathcal{G}}_{i}$ be $z_{s_{i}}=C_{s_{i}}x_{s_{i}}+v_{s_{i}}$
with noise covariance $R_{s_{i}}$ for $v_{s_{i}}$. Then, for each
child node $s_{i}^{t}$ that connects to $p$ leaf nodes of $\mathcal{G}_{i}(d_{i})$,
we take
\[
z_{s_{i}^{t}}=C_{s_{i}}x_{s_{i}^{t}}+v_{s_{i}^{t}}=z_{s_{i}}
\]
with noise covariance $pR_{s_{i}}$ for $v_{s_{i}^{t}}$, where $p$
is the number of connecting leaf nodes of $\mathcal{G}_{i}(d_{i})$
for $s_{i}$.

We have the following result, which shows that the information
matrices from Algorithm~\ref{BP} converge to ML information matrices exponentially as the
cycle-free depths increase.

\begin{thm}
\label{thm:accuracyQ} Under Assumption~\ref{asu:1}, for
every node $i$ in $\mathcal{G}$, we have
\begin{equation}
0\le Q_{i}(d_i)-Q_{i}^{ML}\le\tilde{\alpha}_{i}\tilde{\rho}_{i}^{d_{i}-1}Q_{i}^{ML},\label{eq:accuracyQ}
\end{equation}
where $\tilde{\alpha}_{i}$ and $\tilde{\rho}_{i}$ are similar to
$\alpha$ in \eqref{alpha0} and $\rho$ in \eqref{rho}, but for the reduced graph
$\tilde{\mathcal{G}}_{i}$.
\end{thm}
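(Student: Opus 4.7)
The plan is to reduce the accuracy question on the cyclic graph $\mathcal{G}$ to a convergence question on the acyclic reduced graph $\tilde{\mathcal{G}}_i$, and then invoke Theorem~\ref{thm:info} on $\tilde{\mathcal{G}}_i$. The argument splits into three ingredients: matching the information-matrix messages produced by Algorithm~\ref{BP} on $\mathcal{G}$ and on $\tilde{\mathcal{G}}_i$ through iteration $d_i$; applying the exponential bound of Theorem~\ref{thm:info} on the reduced graph; and identifying the limit of Algorithm~\ref{BP} on $\tilde{\mathcal{G}}_i$ with $Q_i^{ML}$.

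For the first ingredient, I would argue by induction on $k \in \{1,\ldots,d_i\}$ that the messages produced by Algorithm~\ref{BP} at every node of $\mathcal{G}_i(d_i)$ coincide whether the algorithm is run on $\mathcal{G}$ or on $\tilde{\mathcal{G}}_i$; in particular this gives $Q_i(d_i) = \tilde{Q}_i(d_i)$. Three facts support this identification. First, by the definition of the cycle-free depth $d_i$, the subgraph $\mathcal{G}_i(d_i)$ is acyclic, so no message received at $i$ during the first $d_i$ iterations is forced to traverse a cycle of $\mathcal{G}$. Second, any outside measurements that can enter a message reaching $i$ by iteration $d_i$ belong to nodes in $\mathcal{G}_i(d_i+1)$, which are retained in $\tilde{\mathcal{G}}_i$ as child nodes. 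Third, when an outside node $s_i$ connects to $p$ distinct leaves of $\mathcal{G}_i(d_i)$ and is therefore split into $p$ copies in $\tilde{\mathcal{G}}_i$, the inflation of each copy's noise covariance to $pR_{s_i}$ yields $\sum_{t=1}^{p} C_{s_i}^{T}(pR_{s_i})^{-1} C_{s_i} = C_{s_i}^{T} R_{s_i}^{-1} C_{s_i}$, so the aggregate information contributed by the copies matches the contribution of $s_i$ in $\mathcal{G}$.

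Since $\tilde{\mathcal{G}}_i$ inherits Assumption~\ref{asu:1} from $\mathcal{G}$, Theorem~\ref{thm:info} applies to Algorithm~\ref{BP} on $\tilde{\mathcal{G}}_i$ with constants $\tilde{\alpha}_i$ and $\tilde{\rho}_i$ defined analogously to \eqref{alpha0} and \eqref{rho} but for $\tilde{\mathcal{G}}_i$. The bound \eqref{eq:Qi} specialised to the reduced graph then gives
\[
0 \le \tilde{Q}_i(d_i) - \tilde{Q}_i(\infty) \le \tilde{\alpha}_i \tilde{\rho}_i^{d_i-1} \tilde{Q}_i(\infty).
\]
Since $\tilde{\mathcal{G}}_i$ is acyclic, Algorithm~\ref{BP} run on it converges in finitely many iterations to the exact marginal information matrix for $x_i$ under the joint likelihood defined on $\tilde{\mathcal{G}}_i$. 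The splitting-and-inflation construction of the child nodes has been engineered precisely so that this marginal agrees with the ML marginal $g_i(x_i)$ from \eqref{marginal}; hence $\tilde{Q}_i(\infty) = Q_i^{ML}$. Combining this identification with the message equivalence $Q_i(d_i) = \tilde{Q}_i(d_i)$ and the bound above produces the claimed inequality \eqref{eq:accuracyQ}.

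The most delicate step is the first one: the inductive verification of message equivalence between Algorithm~\ref{BP} on $\mathcal{G}$ and on $\tilde{\mathcal{G}}_i$, propagated through the cascade of messages that feed into $Q_i(d_i)$, requires careful bookkeeping at the split child nodes and a clean argument for why the factor measurements incident to $s_i$ that are dropped when passing to $\tilde{\mathcal{G}}_i$ cannot influence $Q_i(d_i)$ within the first $d_i$ iterations. The subsequent identification $\tilde{Q}_i(\infty) = Q_i^{ML}$ is also a non-trivial consequence of the reduced-graph construction and should be verified explicitly.
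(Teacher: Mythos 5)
There is a genuine gap, and it sits exactly where you flagged uncertainty: the identification $\tilde{Q}_i(\infty)=Q_i^{ML}$ is false in general. The reduced graph $\tilde{\mathcal{G}}_i$ discards every measurement attached to nodes beyond $\mathcal{G}_i(d_i+1)$ and, by splitting each multiply-connected child node into independent copies, it destroys the correlations that the shared unknown $x_{s}$ induces among the leaves of $\mathcal{G}_i(d_i)$ in the original graph (these are precisely the cycles that make $d_i$ finite). Both effects change the marginal of $x_i$, so the exact ML information matrix on $\tilde{\mathcal{G}}_i$ does not equal $Q_i^{ML}$; the paper only establishes the one-sided inequality $\tilde{Q}_i(\infty)=\tilde{Q}_i^{ML}\le Q_i^{ML}$, and even that requires a nontrivial interpolation argument: an intermediate graph in which the split copies $s_i^t$ are re-linked by relative measurements $z_{s_i^t,s_i^{t+1}}=x_{s_i^t}-x_{s_i^{t+1}}+v$ whose noise covariance is swept from $0$ (recovering the trimmed original graph) to $\infty$ (recovering $\tilde{\mathcal{G}}_i$), combined with monotonicity of the ML information matrix under measurement removal and noise inflation. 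With only the inequality in hand, your chain still yields the upper bound $Q_i(d_i)-Q_i^{ML}\le\tilde{\alpha}_i\tilde{\rho}_i^{d_i-1}\tilde{Q}_i^{ML}\le\tilde{\alpha}_i\tilde{\rho}_i^{d_i-1}Q_i^{ML}$, but your derivation of the lower bound collapses: from $\tilde{Q}_i(d_i)\ge\tilde{Q}_i(\infty)$ and $\tilde{Q}_i(\infty)\le Q_i^{ML}$ you cannot conclude $Q_i(d_i)\ge Q_i^{ML}$. The paper proves $Q_i(d_i)\ge Q_i^{ML}$ by an entirely separate argument, absent from your proposal: $Q_i(d_i)$ is interpreted as the ML information matrix on $\mathcal{G}$ in the limit $R_s\rightarrow 0$ for all child nodes $s$ (i.e., with their states perfectly known, which reproduces the initialization $Q_{u,s\rightarrow u}(0)=C_{u,s}^{T}R_{u,s}^{-1}C_{u,s}$ at the leaves), and the ML information matrix is monotonically nonincreasing in the noise covariances.

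A minor misattribution worth noting: the covariance inflation to $pR_{s_i}$ and the resulting identity $\sum_{t=1}^{p}C_{s_i}^{T}(pR_{s_i})^{-1}C_{s_i}=C_{s_i}^{T}R_{s_i}^{-1}C_{s_i}$ play no role in establishing $Q_i(d_i)=\tilde{Q}_i(d_i)$. That identity holds simply because the messages feeding $Q_i(d_i)$ only involve self-measurements of, and edges incident to, nodes within $d_i-1$ hops of $i$, a region on which $\mathcal{G}$ and $\tilde{\mathcal{G}}_i$ coincide; the child nodes' measurements cannot reach node $i$ in $d_i$ iterations. The splitting-and-inflation construction is engineered for the ML-comparison step (the $R=0$ endpoint of the interpolation), not for the finite-iteration message matching.
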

\begin{proof}
Taking the graph in Fig.~\ref{tree} as an example. Firstly, we introduce two trimmed graphs $\hat{\mathcal{G}}^{1}$
and $\hat{\mathcal{G}}^{2}$ generated from graph $\mathcal{G}$,
as depicted in Fig.~\ref{tree3}(a) and Fig.~\ref{tree3}(b).
\begin{figure}[ht]
\begin{centering}
\includegraphics[width=8cm]{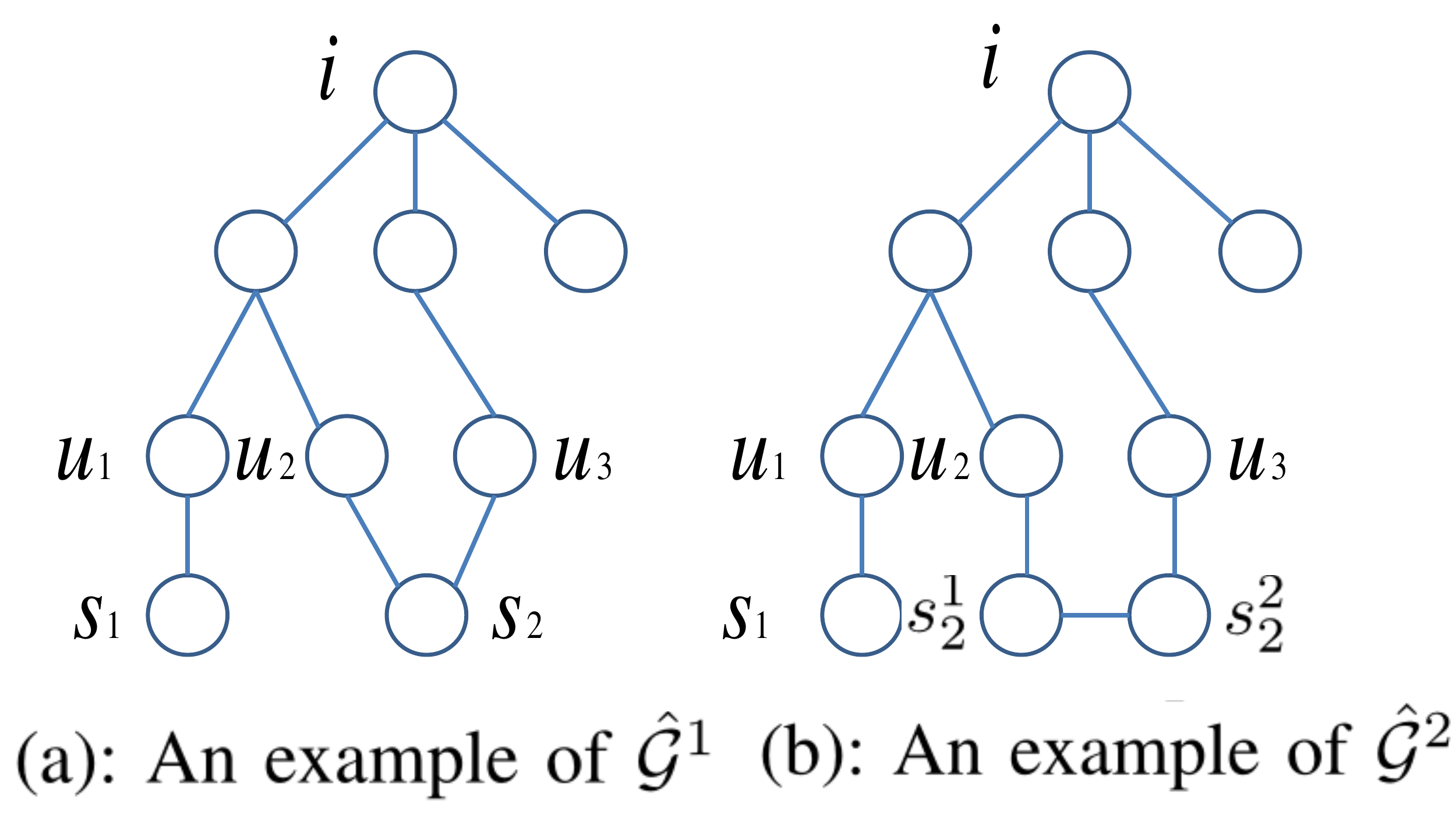}
\par\end{centering}

\caption{Example graphs $\hat{\mathcal{G}}^{1}$ and $\hat{\mathcal{G}}^{2}$}
\label{tree3}
\end{figure}

The Fig.~\ref{tree3}(a) is generated from Fig.~\ref{tree}(a) by
cutting all the nodes and connections outside $\mathcal{G}(d_{i}+1)$.
While, Fig.~\ref{tree3}(b)'s only difference with Fig.~\ref{tree}(b)
is connecting each of $s_{i}^{t}$. Additionally, the edge(joint) measurement
for $(s_{i}^{t},s_{i}^{t+1})$ is taken to be
\[
z_{s_{i}^{t},s_{i}^{t+1}}=x_{s_{i}^{t}}-x_{s_{i}^{t+1}}+v_{s_{i}^{t},s_{i}^{t+1}}
\]
with noise covariance $R_{s_{i}^{t},s_{i}^{t+1}}$ to be discussed
later.

We claim that if $R_{s_{i}^{t},s_{i}^{t+1}}=0$ for all $t$ and $i$,
then the graph $\hat{\mathcal{G}}^{1}$(i.e., Fig.~\ref{tree3}(a))
is identical to the graph $\hat{\mathcal{G}}^{2}$(i.e., Fig.~\ref{tree3}(b)).
Indeed, $R_{s_{i}^{t},s_{i}^{t+1}}=0$ for all $t$ will force all
$x_{s_{i}^{t}}$, $t=1,2,\ldots,p$, to be identical. That is, all
the nodes $s_{i}^{t}$ with the same $i$ can be merged into one node.
From the construction of the self and edge(joint) measurements associated
with $s_{i}^{t}$, we see that this merged node has the same combined
noise covariances as the original node $s_{i}$ has. Hence, our claim
holds.

Furthermore, by letting $R_{s_{i}^{t},s_{i}^{t+1}}=\infty$ for all
$t$ and $i$, the graph $\hat{\mathcal{G}}^{2}$(i.e., Fig.~\ref{tree3}(b))
is identical to the graph $\tilde{\mathcal{G}}_i$(i.e., Fig.~\ref{tree}(b)).
Denote by $\hat{Q}_{i}^{ML}$ and $\tilde{Q}_i^{ML}$ the information matrix for node $i$
by running maximum likelihood estimation on $\hat{\mathcal{G}}^{2}$ and $\tilde{\mathcal{G}}_i$(i.e.,
Fig.~\ref{tree3}(b) and Fig.~\ref{tree}(b)). We note that, as all $R_{s_{i}^{t},s_{i}^{t+1}}$
increase from 0, the corresponding edge measurements become less accurate,
which will cause $\hat{Q}_{i}^{ML}$ to decrease. That is, $\hat{Q}_{i}^{ML}\geq\tilde{Q}_{i}^{ML}$.
Moreover, comparing with the graph $\mathcal{G}$(i.e., Fig.~\ref{tree}(a)),
graph $\hat{\mathcal{G}}^{1}$(i.e., Fig.~\ref{tree3}(a)) has less
measurements and it follows that ${Q}_{i}^{ML}\geq\hat{Q}_{i}^{ML}$.

Denote by $\tilde{Q}_{i}(k)$ the information matrix of node~$i$ at time $k$ by running the Algorithm~\ref{BP} on the graph $\tilde{\mathcal{G}}_i$. Comparing the graph $\mathcal{G}$(i.e., Fig.~\ref{tree}(a))
and $\tilde{\mathcal{G}}_i$(i.e., Fig.~\ref{tree}(b)), it is straightforward
to get that $\tilde{Q}_{i}(k)\leq{Q}_{i}(k)$ for all $k$.

From the discussions above, we have $\tilde{Q}_{i}(\infty)\leq Q_{i}(\infty)$
and $\tilde{Q}_{i}^{ML}\leq Q_{i}^{ML}$.

Because $\tilde{\mathcal{G}}_{i}$ is acyclic, we have $\tilde{Q}_{i}^{ML}=\tilde{Q}_{i}(\infty)$.
It is also clear that $Q_{i}(d_{i})=\tilde{Q}_{i}(d_{i})$ because
both $\mathcal{G}$ and $\tilde{\mathcal{G}}_{i}$ have the same structure
within $d_{i}$ hops away from node $i$. Furthermore, from Theorem~\ref{thm:info},
we have
\[
0\le\tilde{Q}_{i}(d_{i})-\tilde{Q}_{i}(\infty)\le\tilde{\alpha}_{i}\tilde{\rho}_{i}^{d_{i}-1}\tilde{Q}_{i}(\infty).
\]
Combining the above, it follows that
\begin{align}
&Q_{i}(d_{i})-Q_{i}^{ML} \nonumber\\
=&(\tilde{Q}_{i}(d_{i})-\tilde{Q}_{i}(\infty))+(\tilde{Q}_{i}(\infty)-Q_{i}^{ML})\nonumber \\
\le&\tilde{\alpha}_{i}\tilde{\rho}_{i}^{d_{i}-1}\tilde{Q}_{i}(\infty)+(\tilde{Q}_{i}^{ML}-Q_{i}^{ML})\nonumber \\
\le&\tilde{\alpha}_{i}\tilde{\rho}_{i}^{d_{i}-1}\tilde{Q}_{i}^{ML}\nonumber \\
\le&\tilde{\alpha}_{i}\tilde{\rho}_{i}^{d_{i}-1}Q_{i}^{ML}.\label{tempx}
\end{align}

We now return to the original graph $\mathcal{G}$ and claim that
$Q_{i}(d_{i})$ can be interpreted as the maximum likelihood information
matrix for node $i$ when $R_{s}\rightarrow0$ for all child node
$s$. Indeed, this is the same as making $x_{s}$ perfectly known.
It follows that each leaf node $u$ of $\mathcal{G}_{i}(d_{i})$ connected
to $s$ will have prior information matrix for $x_{u}$ equal to $C_{u,s}^{T}R_{u,s}^{-1}C_{u,s}$,
which is exactly the initialization step for $Q_{u,s\rightarrow u}(0)$
in (\ref{newinitial}). Hence our claim holds. Now, since the maximum
likelihood information matrix increases as $R_{s}$ decreases, we
conclude that $Q_{i}(d_{i})\ge Q_{i}^{ML}$. Combining this with (\ref{tempx}),
we get \eqref{eq:accuracyQ}.
\end{proof}

Using Theorem~\ref{thm:info}, the result in Theorem~\ref{thm:accuracyQ} can be stated in a different way in terms of the accuracy of $Q_i(\infty)$.

\begin{cor}\label{cor:Q_inf_acc}
Recall that $\tilde{\alpha}_{i},\tilde{\rho}_{i},\alpha,\rho$ are defined in the Theorem~\ref{thm:accuracyQ} and $d_{i}$ is the cycle-free depth of node~$i$. Under Assumption~\ref{asu:1}, we have, for every node $i$ in $\mathcal{G}$,
\begin{equation}
-\frac{\alpha \rho^{d_i-1}}{1+\alpha \rho^{d_i-1}} Q_i^{ML} \le Q_i(\infty)-Q_i^{ML}
\le \tilde{\alpha}_{i}\tilde{\rho}_{i}^{d_{i}-1}Q_{i}^{ML}. \label{Q_inf_acc}
\end{equation}
In particular, as $d_i\rightarrow\infty$, $Q_i(\infty)\rightarrow Q_i^{ML}$.
\end{cor}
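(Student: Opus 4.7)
The plan is to use the finite-iteration quantity $Q_i(d_i)$ as a bridge between $Q_i^{ML}$ and $Q_i(\infty)$. Theorem~\ref{thm:accuracyQ} already sandwiches $Q_i(d_i)$ between $Q_i^{ML}$ and $(1+\tilde{\alpha}_i\tilde{\rho}_i^{d_i-1})Q_i^{ML}$, while Theorem~\ref{thm:info} controls the distance between $Q_i(d_i)$ and $Q_i(\infty)$. Combining these with a monotonicity fact extracted from Lemma~\ref{lem:01} should deliver both halves of \eqref{Q_inf_acc} in a few lines of algebra.

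For the upper bound, I would first note that Lemma~\ref{lem:01} says $Q_{i,j\rightarrow i}(k+1)\le Q_{i,j\rightarrow i}(k)$ for every $j\in\mathcal{N}_i$, so the aggregation formula \eqref{eq:BP-Q} forces $Q_i(k)$ itself to be monotonically non-increasing in $k$. Passing to the limit gives $Q_i(\infty)\le Q_i(d_i)$, and chaining this with the right half of \eqref{eq:accuracyQ} yields
\begin{equation*}
Q_i(\infty)-Q_i^{ML}\le Q_i(d_i)-Q_i^{ML}\le \tilde{\alpha}_i\tilde{\rho}_i^{d_i-1}Q_i^{ML},
\end{equation*}
which is exactly the right-hand inequality of \eqref{Q_inf_acc}.

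For the lower bound, I would rewrite the bound in Theorem~\ref{thm:info} as $Q_i(k)\le (1+\alpha\rho^{k-1})Q_i(\infty)$, specialize to $k=d_i$, and then invoke the inequality $Q_i(d_i)\ge Q_i^{ML}$ that is already established inside the proof of Theorem~\ref{thm:accuracyQ}. This chain gives $Q_i^{ML}\le (1+\alpha\rho^{d_i-1})Q_i(\infty)$; dividing by the positive scalar $1+\alpha\rho^{d_i-1}$ and subtracting $Q_i^{ML}$ from both sides produces the desired
\begin{equation*}
Q_i(\infty)-Q_i^{ML}\ge -\frac{\alpha\rho^{d_i-1}}{1+\alpha\rho^{d_i-1}}Q_i^{ML}.
\end{equation*}
The concluding statement $Q_i(\infty)\to Q_i^{ML}$ as $d_i\to\infty$ follows immediately from $\rho<1$ and $\tilde{\rho}_i<1$, since both sandwich bounds then vanish.

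I do not expect a serious obstacle, as the argument is essentially an algebraic rearrangement of results already proved. The only point that deserves care is that Lemma~\ref{lem:01} is stated for the message matrices $Q_{i,j\rightarrow i}$ rather than for $Q_i$ itself, so the monotonicity of $Q_i(k)$ must be deduced via \eqref{eq:BP-Q} rather than quoted directly. Beyond this minor bookkeeping step, the proof is a routine exercise in operator-monotone manipulations, so nothing substantive should obstruct it.
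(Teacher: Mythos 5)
Your proof is correct and follows essentially the same route as the paper: the upper bound via monotonicity of $Q_i(k)$ (deduced from Lemma~\ref{lem:01} through \eqref{eq:BP-Q}) chained with Theorem~\ref{thm:accuracyQ}, and the lower bound via Theorem~\ref{thm:info} at $k=d_i$ combined with $Q_i(d_i)\ge Q_i^{ML}$, yielding the same intermediate inequality $Q_i(\infty)\ge(1+\alpha\rho^{d_i-1})^{-1}Q_i^{ML}$.
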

\begin{proof}
Using the monotonicity property of $Q_i(k)$ (Lemma~\ref{lem:01}) and Theorem~\ref{thm:accuracyQ}, we have
\[
Q_i(\infty) - Q_i^{ML} \le Q_i(d_i)-Q_i^{ML} \le \tilde{\alpha}_{i}\tilde{\rho}_{i}^{d_{i}-1}Q_{i}^{ML}
\]
which is the right hand side of (\ref{Q_inf_acc}). On the other hand, using Theorem~\ref{thm:info} and Theorem~\ref{thm:accuracyQ}, we get
\begin{align*}
&Q_i(\infty) - Q_i^{ML} \\
=& (Q_i(\infty) - Q_i(d_i))+(Q_i(d_i)-Q_i^{ML} )\\
\ge&-\alpha\rho^{d_i-1}Q_i(\infty)
\end{align*}
which gives
\[ Q_i(\infty) \ge (1+\alpha \rho^{d_i-1})^{-1}Q_i^{ML}.\]
Then, it deduces the left hand side of (\ref{Q_inf_acc}).
\end{proof}

\begin{rem}
\label{rem:accuracyQ} We mention two properties about the $\tilde{\alpha}_{i}$ and $\tilde{\rho}_{i}$. Firstly, it is clear that for an acyclic
graph $\mathcal{G}$, $\tilde{\alpha}_{i}=\alpha$ and $\tilde{\rho}_{i}=\rho$
for any $i$. Secondly, since each reduced
graph is for a given node $i$ and is typically a small graph (with
cycle-free depth of $d_{i}+1$), $\tilde{\alpha}_{i}$ and $\tilde{\rho}_{i}$
can be computed quickly (with $d_{i}+1$ iterations).
\end{rem}

\section{Accuracy Analysis for the Estimates}\label{sec:AccuracyX}
In this section, we study the accuracy of the
state estimate from Algorithm~\ref{BP}. Our goal is to characterize explicitly how
the distributed state estimate accuracy for node $i$ is related to its cycle-free
depth $d_{i}$.

Without loss of generality and for notational simplicity, we study
node 1 in this section. Let $d_{1}$ be the cycle-free depth of node~1. We can redraw
the original graph $\mathcal{G}$ (i.e., Fig.~\ref{tree}(a)) as a $d_{1}+2$ layer graph in Fig.~\ref{tree2}(a),
in which node 1 is placed on the top layer (layer 1), followed by
all the nodes one lop away from node 1 as layer 2, then by all the
nodes two hops away from node 1 as layer 3, and so on, until layer
$(d_{1}+1)$ which contains all the nodes $d_{1}$ hops away from
node 1. All other nodes (i.e., nodes outside of $\mathcal{G}_{1}(d_{1})$)
are lumped into layer $d_{1}+2$. This graph can then be redrawn again
as a line graph $\mathcal{G}^{l}$(i.e., Fig.~\ref{tree2}(b)) by
grouping all the nodes in layer $i$ as a super node $i$ (denoted
by SN $i$) with state $\check{x}_{i}$ which is formed by stacking
up all the states in layer $i$. In particular, $\check{x}_{1}=x_{1}$.
The self measurement for super node $i$ will be denoted by $\check{z}_{i}$,
$i=1,2,\ldots,d_{i}+2$, and the edge measurement between super nodes
$i$ and $i+1$ will be denoted by $\check{z}_{i,i+1}$, $i=1,2,\ldots,d_{i}+1$.
The notations of $\check{C}_{i},\check{C}_{i,j},\check{R}_{i}$ and
$\check{R}_{i,j}$ are similarly defined. For notational simplicity,
we denote $d_{1}+2$ by $n$.

\begin{figure}[ht]
\begin{centering}
\includegraphics[width=8cm]{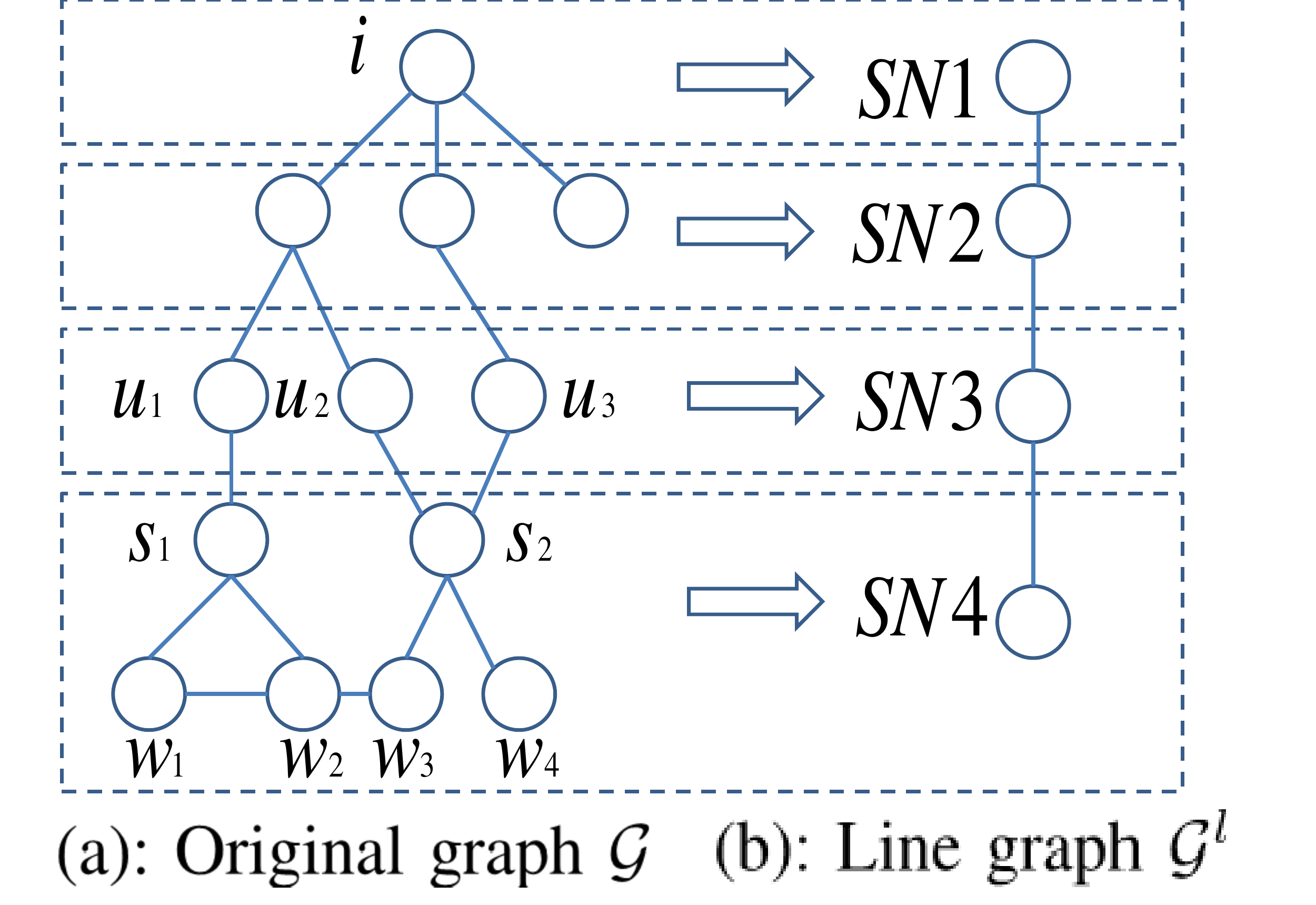}
\par\end{centering}
\caption{Conversion of a graph with $d_{1}=2$ into a 4-layer line graph}
\label{tree2}
\end{figure}

Denoting $\check{x}=\mathrm{col}\{\check{x}_{1},\check{x}_{2},\ldots,\check{x}_{n}\}$,
we have the following result on the maximum likelihood estimate for
$\check{x}$ in the trimmed line graph $\mathcal{G}^{l}$(i.e., Fig.~\ref{tree2}(b)).
\begin{lem}
\label{lem:tree2} The maximum likelihood estimate $\check{x}^{ML}$
for $\check{x}$ is given by the solution of
\begin{equation}
A\check{x}^{ML}=B,\label{eq:AB}
\end{equation}
where $A>0$ is a tri-diagonal block matrix defined by
\begin{align*}
A_{11} & =\check{C}_{1}^{T}\check{R}_{1}^{-1}\check{C}_{1}+\check{C}_{1,2}^{T}\check{R}_{1,2}^{-1}\check{C}_{1,2}\\
A_{ii} & =\check{C}_{i}^{T}\check{R}_{i}^{-1}\check{C}_{i}+\check{C}_{i,i+1}^{T}\check{R}_{i,i+1}^{-1}\check{C}_{i,i+1}\\
&~~+\check{C}_{i,i-1}^{T}\check{R}_{i-1,i}^{-1}\check{C}_{i,i-1},\ \ \ i=2,\ldots,n-1\\
A_{nn} & =\check{C}_{n}^{T}\check{R}_{n}^{-1}\check{C}_{n}+\check{C}_{n,n-1}^{T}\check{R}_{n-1,n}^{-1}\check{C}_{n,n-1}\\
A_{i(i+1)} & =\check{C}_{i,i+1}^{T}\check{R}_{i,i+1}^{-1}\check{C}_{i+1,i},\ \ \ i=1,2,\ldots,n-1\\
A_{(i+1)i} & =A_{i(i+1)}^{T}\\
A_{ij} & =0,\ \ \ |i-j|>1,
\end{align*}
and $B=\mathrm{col}\{B_{1},B_{2},\ldots,B_{n}\}$ with
\begin{align*}
B_{1} & =\check{C}_{1}^{T}\check{R}_{1}^{-1}\check{z}_{1}+\check{C}_{1,2}^{T}\check{R}_{1,2}^{-1}\check{z}_{1,2}\\
B_{i} & =\check{C}_{i}^{T}\check{R}_{i}^{-1}\check{z}_{i}+\check{C}_{i,i-1}^{T}\check{R}_{i-1,i}^{-1}\check{z}_{i-1,i}\\
&~~+\check{C}_{i,i+1}^{T}\check{R}_{i,i+1}^{-1}\check{z}_{i,i+1},\ \ \ i=2,\ldots,n-1\\
B_{n} & =\check{C}_{n}^{T}\check{R}_{n}^{-1}\check{z}_{n}+\check{C}_{n,n-1}^{T}\check{R}_{n-1,n}^{-1}\check{z}_{n,n-1}.
\end{align*}
\end{lem}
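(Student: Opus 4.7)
My plan is to derive the normal equations of the weighted least-squares problem associated with the joint Gaussian likelihood on the line graph $\mathcal{G}^l$, and to verify that the block tri-diagonal structure claimed in the lemma is inherited directly from the layered construction.

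First I would observe that the line graph $\mathcal{G}^l$ is merely a regrouping of the original graph $\mathcal{G}$: each super-node variable $\check{x}_i$ stacks all the original node variables $x_j$ with $j$ in layer $i$, so no measurement information is added or removed. Crucially, because $\mathcal{G}_1(d_1)$ is acyclic by the definition of cycle-free depth, every edge of $\mathcal{G}$ either (a) sits inside $\mathcal{G}_1(d_1)$ and then connects a parent in layer $k-1$ to a child in layer $k$, or (b) is incident to a node in layer $n=d_1+2$. The second case splits further into edges inside layer $n$ (absorbed into the super-node self measurement $\check{z}_n,\check{C}_n,\check{R}_n$), edges between layers $n-1$ and $n$ (encoded in $\check{z}_{n-1,n},\check{C}_{n-1,n},\check{R}_{n-1,n}$), and possibly edges inside layer $n-1=d_1+1$ closing the shortest cycle (likewise absorbed into $\check{z}_{n-1}$). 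No edge can connect layers whose indices differ by more than one, since that would contradict the hop-distance layer assignment. Hence coupling between super nodes in the joint likelihood only occurs across adjacent layers, which is precisely what yields the block tri-diagonal shape of $A$.

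Second, writing the negative joint log-likelihood in super-node notation, up to constants, gives
\begin{align*}
L(\check{x}) &= \frac{1}{2}\sum_{i=1}^{n}(\check{z}_i-\check{C}_i\check{x}_i)^T\check{R}_i^{-1}(\check{z}_i-\check{C}_i\check{x}_i)\\
&\quad + \frac{1}{2}\sum_{i=1}^{n-1}e_i(\check{x})^T\check{R}_{i,i+1}^{-1}e_i(\check{x}),
\end{align*}
where $e_i(\check{x})=\check{z}_{i,i+1}-\check{C}_{i,i+1}\check{x}_i-\check{C}_{i+1,i}\check{x}_{i+1}$. Since $L$ is a convex quadratic, $\check{x}^{ML}$ is characterised by $\nabla_{\check{x}_i}L=0$ for every $i$. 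Expanding this gradient and collecting the coefficients of $\check{x}_{i-1},\check{x}_i,\check{x}_{i+1}$ on one side and the measurement-dependent terms on the other produces precisely the blocks $A_{i,i-1},A_{ii},A_{i,i+1}$ and the right-hand side $B_i$ listed in the lemma; the boundary cases $i=1$ and $i=n$ simply drop the missing adjacent-layer contributions and reproduce the stated $A_{11},A_{nn},B_1,B_n$.

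Finally, $A$ is the Hessian of the convex quadratic $L$, so $A\ge 0$ is automatic. For strict positivity I would take any $v=\mathrm{col}\{v_1,\ldots,v_n\}$ with $v^T A v=0$; this forces $\check{C}_i v_i=0$ for every $i$ and $\check{C}_{i,i+1}v_i+\check{C}_{i+1,i}v_{i+1}=0$ for every $i$. Translating these into original-node components, Remark~\ref{rem:assumption0} yields $C_j^T R_j^{-1}C_j>0$ at every leaf $j$ of $\mathcal{G}$, so the corresponding sub-vectors of $v$ vanish. The main technical obstacle is to propagate this to interior nodes: combining the self-measurement annihilation $C_j v_j=0$ with the strict inequality of Assumption~\ref{asu:1} (equivalently the $\eta<1$ form in Remark~\ref{rem:assumption}) one argues inductively along the graph that each successive $v_j$ must vanish, giving $v=0$ and hence $A>0$. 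The derivation of the tri-diagonal block structure is then routine differentiation of the quadratic cost; it is this inductive propagation through the interior that carries the real work.
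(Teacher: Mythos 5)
Your proof is essentially the paper's: write the negative log-likelihood as a sum of quadratic forms over the super-node self measurements and the adjacent-layer edge measurements, set $\nabla_{\check{x}_i}L=0$, and read off the block tri-diagonal normal equations, with the boundary cases $i=1$ and $i=n$ dropping the missing terms. The only divergence is your added sketch of $A>0$ via rank propagation from the leaves, which you leave as an unexecuted induction --- but the paper's own proof of this lemma does not establish $A>0$ at all (that fact is only implicitly justified later via the Schur-complement bounds in the proof of Theorem~\ref{thm:accuracy_estimate}), so this is not a gap relative to the paper's argument.
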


\begin{proof}
Define
\[
F(\check{x})=\sum_{i=1}^{n}F_{i}(\check{x}_{i})+\sum_{i=1}^{n-1}F_{i,i+1}(\check{x}_{i},\check{x}_{i+1}),
\]
where
\begin{align*}
F_{i}(\check{x}_{i}) & =(\check{C}_{i}\check{x}_{i}-\check{z}_{i})^{T}\check{R}_{i}^{-1}(\check{C}_{i}\check{x}_{i}-\check{z}_{i})\\
F_{i,i+1}(\check{x}_{i},\check{x}_{i+1}) & =(\check{C}_{i,i+1}\check{x}_{i}+\check{C}_{i+1,i}\check{x}_{i+1}-\check{z}_{i,i+1})^{T}\\
&\cdot \check{R}_{i,i+1}^{-1}(\check{C}_{i,i+1}\check{x}_{i}+\check{C}_{i+1,i}\check{x}_{i+1}-\check{z}_{i,i+1}).
\end{align*}
Then, the maximum likelihood estimate is obtained by minimizing $F(\check{x})$,
i.e., by setting
\begin{align*}
0 & =\frac{\partial}{\partial\check{x}_{i}}F(\check{x})\\
 & =\check{C}_{i}^{T}\check{R}_{i}^{-1}(\check{C}_{i}\check{x}_{i}-\check{z}_{i})\\
 & \ \ +\check{C}_{i,i+1}^{T}\check{R}_{i,i+1}^{-1}(\check{C}_{i,i+1}\check{x}_{i}+\check{C}_{i+1,i}\check{x}_{i+1}-\check{z}_{i,i+1})\\
 & \ \ +\check{C}_{i,i-1}^{T}\check{R}_{i-1,i}^{-1}(\check{C}_{i-1,i}\check{x}_{i-1}+\check{C}_{i,i-1}\check{x}_{i}-\check{z}_{i-1,i})
\end{align*}
for $i=1,2,\ldots,n$ (For $i=1$ the third term is void and for $i=n$
the second term is void). Reorganizing the equations above gives (\ref{eq:AB}).
\end{proof}

Next we give an alternative characterization for the state estimate
$\hat{x}_{1}(d_{1})$.
\begin{lem}
\label{lem:tree2_BP} Under Assumption~\ref{asu:1}, the state estimate $\hat{x}_{1}(d_{1})$ from Algorithm~\ref{BP} is given by the first block of $\check{\mathbf{x}}$ which solves
\begin{equation}
\mathbf{A}\check{\mathbf{x}}=\mathbf{B},\label{eq:tree2_BP}
\end{equation}
where $\mathbf{A}$ is obtained from $A$ by removing its last row
block and last column block, and $\mathbf{B}$ is obtained from $B$
by removing its last row block.
\end{lem}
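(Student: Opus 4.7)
The plan is to recognise $\mathbf{A}\check{\mathbf{x}}=\mathbf{B}$ as the normal equation of an auxiliary ML problem on an acyclic truncation of $\mathcal{G}$, and then to match what Algorithm~\ref{BP} produces at node~$1$ after $d_1$ iterations with that ML estimate, using the exactness of Gaussian BP on trees.

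First, I would redo the variational derivation from the proof of Lemma~\ref{lem:tree2}, but restricted to super nodes $1,\ldots,n-1$ with $\check{x}_n$ absent, and with the joint measurement $\check{z}_{n-1,n}$ retained as a pseudo-self measurement $\check{z}_{n-1,n}=\check{C}_{n-1,n}\check{x}_{n-1}+v_{n-1,n}$ of super node $n-1$. Setting gradients to zero recovers exactly the trimmed system $\mathbf{A}\check{\mathbf{x}}=\mathbf{B}$: the diagonal block $A_{n-1,n-1}$ retains $\check{C}_{n-1,n}^{T}\check{R}_{n-1,n}^{-1}\check{C}_{n-1,n}$ and $B_{n-1}$ retains $\check{C}_{n-1,n}^{T}\check{R}_{n-1,n}^{-1}\check{z}_{n-1,n}$, whereas the coupling blocks $A_{n-1,n}$, $A_{n,n-1}$ vanish because $\check{x}_n$ is no longer a variable. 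Unfolding the super nodes back into the original node indexing, this auxiliary problem is precisely the ML estimation problem on $\mathcal{G}_1(d_1)$, augmented, for every boundary edge $(i,j)$ with $i$ at distance $d_1$ from node~$1$ and $j\notin\mathcal{G}_1(d_1)$, by the pseudo-self measurement $z_{i,j}=C_{i,j}x_i+v_{i,j}$ at node~$i$.

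Second, I would observe that these pseudo-self contributions are exactly what the initialisation \eqref{newinitial} of Algorithm~\ref{BP} attaches to every factor node at $k=0$. Thus the auxiliary ML problem above coincides with the problem solved by Algorithm~\ref{BP} on the acyclic subgraph $\mathcal{G}_1(d_1)$ in which every boundary edge is frozen at its $k=0$ pseudo-self initialisation. The key step is then to prove that running Algorithm~\ref{BP} on the full graph $\mathcal{G}$ and reading node~$1$'s estimate at iteration $d_1$ is equivalent to running it on this truncated acyclic graph: this would follow by an induction on the hop distance $\ell$ from node~$1$ which unrolls the recursion for $\alpha_1(d_1)$ and $Q_1(d_1)$ through the nested messages $\alpha_{i\to i,j}(\cdot)$, $\alpha_{i,j\to i}(\cdot)$, and verifies that only measurements inside $\mathcal{G}_1(d_1)$, together with the initial factor-to-variable messages carried by boundary edges, can possibly reach node~$1$ in $d_1$ iterations. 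Once this equivalence is in place, the exactness of Gaussian BP on trees (cited immediately after Algorithm~\ref{BP}) yields $\hat{x}_1(d_1)=[\mathbf{A}^{-1}\mathbf{B}]_1$, since the truncated graph is acyclic and rooted at node~$1$ with depth $d_1$.

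Finally, well-posedness of $\mathbf{A}\check{\mathbf{x}}=\mathbf{B}$ is automatic: $\mathbf{A}$ is a leading principal block of the positive definite matrix $A$ established in Lemma~\ref{lem:tree2}, hence itself positive definite and therefore invertible. The main obstacle in the whole argument lies in the inductive bookkeeping of the second step, namely making the claim rigorous that BP's computation of $\hat{x}_1(d_1)$ on $\mathcal{G}$ depends on information outside $\mathcal{G}_1(d_1)$ only through the initial boundary messages \eqref{newinitial}, so that the dynamics of Algorithm~\ref{BP} on $\mathcal{G}$ and on the truncated acyclic graph agree at the iteration that matters.
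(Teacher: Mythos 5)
Your proposal follows essentially the same route as the paper's own proof: both identify $\mathbf{A}\check{\mathbf{x}}=\mathbf{B}$ as the ML normal equations on the truncated acyclic graph in which the boundary joint measurements are converted into pseudo-self measurements matching the initialization \eqref{newinitial} (the paper phrases this as removing super node $n$ and forcing $\check{x}_{n}=0$), then argue that information from outside $\mathcal{G}_{1}(d_{1})$ cannot reach node~1 within $d_{1}$ iterations except through those initial boundary messages, and finally invoke the exactness of Gaussian BP on acyclic graphs together with Lemma~\ref{lem:tree2}. The hop-counting induction you flag as the main obstacle is precisely the step the paper dispatches with the one-line remark that information from super node $n$ takes $d_{1}+1$ iterations to arrive at node~1.
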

\begin{proof}
See Appendix~\ref{app:lem:tree2_BP}.
\end{proof}

The next result characterizes the estimation error of the state estimate $\hat{x}_{1}(d_{1})$.
\begin{lem}
\label{lem:tree2_sol} Let $\Delta x_{1}(d_{1})=\hat{x}_{1}(d_{1})-x_{1}^{ML}$
be the estimation error for node 1. Then, under Assumption~\ref{asu:1}, we have
\begin{align}
&\Delta x_{1}(d_{1})\nonumber\\
=&(-\tilde{A}_{11}^{-1}A_{12})\ldots(-\tilde{A}_{(n-1)(n-1)}^{-1}A_{(n-1)n})\check{x}_{n}^{ML},\label{eq:AB_sol}
\end{align}
where
\begin{align*}
\tilde{A}_{11} & =A_{11},\\
\tilde{A}_{ii} & =A_{ii}-A_{(i-1)i}^{T}\tilde{A}_{(i-1)(i-1)}^{-1}A_{(i-1)i}
\end{align*}
for all $i=2,\ldots,n-1$.
\end{lem}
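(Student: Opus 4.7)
The plan is to exploit Lemmas~\ref{lem:tree2} and~\ref{lem:tree2_BP}, which together say that $\check{x}^{ML}$ and the extended BP estimate $\check{\mathbf{x}}$ satisfy two block-tridiagonal linear systems that differ in a very simple way: $\mathbf{A}\check{\mathbf{x}}=\mathbf{B}$ is obtained from $A\check{x}^{ML}=B$ by dropping the last block-row and block-column. Concretely, all block entries $\mathbf{A}_{ij}=A_{ij}$ and $\mathbf{B}_i=B_i$ for $i,j\le n-1$. Therefore if I define $e_i=\hat{x}_i-\check{x}_i^{ML}$ for $i=1,\dots,n-1$ (so that $\Delta x_1(d_1)=e_1$ by the first-block identification in Lemma~\ref{lem:tree2_BP}), subtracting the BP equations from the corresponding ML equations kills every term whose coefficient or data is shared, leaving only the coupling to the trimmed block $\check{x}_n^{ML}$ that is present in the ML system but absent from the BP system.

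Carrying out this subtraction block-row by block-row, the error $e$ satisfies the tridiagonal system
\begin{align*}
A_{11}e_1+A_{12}e_2 &= 0,\\
A_{i(i-1)}e_{i-1}+A_{ii}e_i+A_{i(i+1)}e_{i+1} &= 0, \quad 2\le i\le n-2,\\
A_{(n-1)(n-2)}e_{n-2}+A_{(n-1)(n-1)}e_{n-1} &= A_{(n-1)n}\check{x}_n^{ML}.
\end{align*}
The driving term $A_{(n-1)n}\check{x}_n^{ML}$ on the bottom right-hand side is the sole source of error, and it is precisely the influence of the ``truncated'' part of the graph that BP has not yet seen at iteration~$d_1$. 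I would then perform a forward block-Gaussian elimination from the top: eliminate $e_1$ from the second equation using the first; eliminate $e_2$ from the third using the reduced second; and so on. Because every eliminated equation has zero right-hand side, no non-homogeneous terms accumulate in the reduced system, and the only effect of the elimination is to replace $A_{ii}$ by the Schur complement $\tilde A_{ii}=A_{ii}-A_{(i-1)i}^{T}\tilde A_{(i-1)(i-1)}^{-1}A_{(i-1)i}$, which matches the recursion in the lemma exactly. After $n-2$ such steps the system collapses to $\tilde A_{(n-1)(n-1)}e_{n-1}=A_{(n-1)n}\check{x}_n^{ML}$ and the decoupled homogeneous equations $\tilde A_{ii}e_i+A_{i(i+1)}e_{i+1}=0$ for $i\le n-2$.

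Finally I would back-substitute. The last equation gives $e_{n-1}$ in closed form in terms of $\check{x}_n^{ML}$, and each earlier equation gives $e_i=-\tilde A_{ii}^{-1}A_{i(i+1)}e_{i+1}$. Composing these recursions from $i=1$ down to $i=n-1$ yields $e_1$ as a product of blocks of the form $-\tilde A_{ii}^{-1}A_{i(i+1)}$ acting on $\check{x}_n^{ML}$, i.e.\ the claimed identity~\eqref{eq:AB_sol}.

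The main obstacle is the bookkeeping for the Schur-complement recursion: one must verify by induction on $i$ that the matrix appearing in front of $e_i$ after eliminating $e_1,\dots,e_{i-1}$ is exactly the $\tilde A_{ii}$ of the lemma, and that the right-hand side stays zero for all $i\le n-2$. This is straightforward because the block-tridiagonal structure guarantees that each elimination touches only two consecutive rows, but it does require treating the base case $\tilde A_{11}=A_{11}$ separately and keeping the two indexing schemes (that of $A$ versus that of $\mathbf{A}$) carefully aligned. Everything else---existence of $\tilde A_{ii}^{-1}$, which follows from $A>0$ guaranteed by Lemma~\ref{lem:tree2}, and the identification $\Delta x_1(d_1)=e_1$---is immediate.
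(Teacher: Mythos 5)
Your argument is essentially the paper's own proof: the paper also subtracts the system of Lemma~\ref{lem:tree2_BP} from the first $n-1$ block rows of the system of Lemma~\ref{lem:tree2}, obtaining $\mathbf{A}\Delta\check{\mathbf{x}}=\pm EA_{(n-1)n}\check{x}_{n}^{ML}$ with $E$ selecting the last block row, and then solves this block-tridiagonal system by quoting the band-matrix inverse formula of Meurant, which is exactly the forward block elimination and back-substitution you carry out by hand; invertibility of the $\tilde{A}_{ii}$ follows from $A>0$ in both cases. One caveat on signs: your (correctly derived) error system has $+A_{(n-1)n}\check{x}_{n}^{ML}$ on the right of the last block row, so back-substitution gives $e_{n-1}=+\tilde{A}_{(n-1)(n-1)}^{-1}A_{(n-1)n}\check{x}_{n}^{ML}$ and hence $e_1$ equal to the \emph{negative} of the right-hand side of \eqref{eq:AB_sol} (only $n-2$ of the $n-1$ factors acquire a minus sign), so your closing assertion that the product "matches the claimed identity" does not quite follow from your own elimination. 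The paper has the mirror-image inconsistency: it writes $\mathbf{A}\Delta\check{\mathbf{x}}=-EA_{(n-1)n}\check{x}_{n}^{ML}$, although its own two displayed equations yield $+E$. The overall sign is immaterial downstream, since Theorem~\ref{thm:accuracy_estimate} only uses the quadratic form $\Delta x_{1}(d_{1})^{T}Q_{1}(1)\Delta x_{1}(d_{1})$, but you should either flag the discrepancy or restate \eqref{eq:AB_sol} with the corrected sign.
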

\begin{proof}
See Appendix~\ref{app:lem:tree2_sol}.
\end{proof}

\begin{thm}
\label{thm:accuracy_estimate} Under Assumption \ref{asu:1}, for node $1$ in the graph $\mathcal{G}$ with {cycle-free depth
$d_{1}$}, we have
\begin{equation}
\Delta x_{1}(d_{1})^{T}Q_{1}(1)\Delta x_{1}(d_{1})\le\kappa\eta^{d_{1}},\label{eq:acc_est}
\end{equation}
with $\eta<1$ defined in \eqref{eq:as1} and
\[
\kappa=\sum_{(t,j)}(x_{j}^{ML})^{T}C_{j,t}^{T}R_{t,j}^{-1}C_{j,t}x_{j}^{ML}
\]
where $(t,j)$ are such that node $t$ is $d_{1}$ hops away from
node $1$ and node $j$ is connected to node $t$ but $d_{1}+1$ hops
away from node $1$.
\end{thm}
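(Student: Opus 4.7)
The plan is to work from the closed-form expression for $\Delta x_1(d_1)$ given in Lemma~\ref{lem:tree2_sol}. Define auxiliary vectors $v_n=\check{x}_n^{ML}$ and $v_i=-\tilde{A}_{ii}^{-1}A_{i(i+1)}v_{i+1}$ for $i=n-1,\ldots,1$, so that $v_1=\Delta x_1(d_1)$. Comparing the $(1,1)$ block $A_{11}$ of Lemma~\ref{lem:tree2} with the first BP step \eqref{eq:BP-Q}--\eqref{newinitial} gives $Q_1(1)=A_{11}=\tilde{A}_{11}$, so the quantity to bound is exactly $v_1^T\tilde{A}_{11}v_1$. The defining recursion $\tilde{A}_{(i+1)(i+1)}=A_{(i+1)(i+1)}-A_{i(i+1)}^T\tilde{A}_{ii}^{-1}A_{i(i+1)}$ combined with the definition of $v_i$ yields the key identity
\[
v_i^T\tilde{A}_{ii}v_i = v_{i+1}^T\bigl(A_{(i+1)(i+1)}-\tilde{A}_{(i+1)(i+1)}\bigr)v_{i+1},
\]
reducing the theorem to two sub-goals: (i) a per-step contraction $A_{(i+1)(i+1)}\le(1+\eta)\tilde{A}_{(i+1)(i+1)}$ for $i=1,\ldots,n-2$, which peels off one factor of $\eta$ from $v_i^T\tilde{A}_{ii}v_i$ at each step, and (ii) a boundary bound $v_{n-1}^T\tilde{A}_{(n-1)(n-1)}v_{n-1}\le\kappa$.

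The technical backbone is a monotone lower bound, which I would prove by induction on $i$:
\[
\tilde{A}_{ii}\ge\check{C}_{i,i+1}^T\check{R}_{i,i+1}^{-1}\check{C}_{i,i+1}.
\]
The base case $i=1$ is immediate from the definition of $A_{11}$. For the inductive step, the Schur-complement equivalence $P\ge C^TR^{-1}C \Leftrightarrow CP^{-1}C^T\le R$, applied with $P=\tilde{A}_{ii}$, $C=\check{C}_{i,i+1}$, $R=\check{R}_{i,i+1}$, produces the sandwich bound $A_{i(i+1)}^T\tilde{A}_{ii}^{-1}A_{i(i+1)}\le\check{C}_{i+1,i}^T\check{R}_{i,i+1}^{-1}\check{C}_{i+1,i}$, so the back-edge term in $A_{(i+1)(i+1)}$ already absorbs the subtracted Schur term, leaving a remainder at least $\check{C}_{i+1,i+2}^T\check{R}_{i+1,i+2}^{-1}\check{C}_{i+1,i+2}$ that closes the induction.

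With the lower bound in hand, the contraction leverages the tree structure of $\mathcal{G}_1(d_1)$. Every node $u$ in layer $i+1$ (with $i+1\le n-1$) has a unique parent $p$ in layer $i$, so both $\check{C}_{i+1,i}^T\check{R}_{i,i+1}^{-1}\check{C}_{i+1,i}$ and $\check{C}_{i+1}^T\check{R}_{i+1}^{-1}\check{C}_{i+1}+\check{C}_{i+1,i+2}^T\check{R}_{i+1,i+2}^{-1}\check{C}_{i+1,i+2}$ are block-diagonal with $u$-blocks $C_{u,p}^TR_{u,p}^{-1}C_{u,p}$ and $\Omega_{u,p}$ respectively. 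Applying \eqref{eq:as1} of Assumption~\ref{asu:1} block-wise gives
\[
\check{C}_{i+1,i}^T\check{R}_{i,i+1}^{-1}\check{C}_{i+1,i}\le\eta\bigl[\check{C}_{i+1}^T\check{R}_{i+1}^{-1}\check{C}_{i+1}+\check{C}_{i+1,i+2}^T\check{R}_{i+1,i+2}^{-1}\check{C}_{i+1,i+2}\bigr],
\]
and combining with the Schur sandwich yields $(1+\eta)A_{i(i+1)}^T\tilde{A}_{ii}^{-1}A_{i(i+1)}\le\eta A_{(i+1)(i+1)}$, i.e.\ $A_{(i+1)(i+1)}\le(1+\eta)\tilde{A}_{(i+1)(i+1)}$. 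Iterating the key identity accumulates $n-2=d_1$ factors of $\eta$, giving $v_1^T\tilde{A}_{11}v_1\le\eta^{d_1}v_{n-1}^T\tilde{A}_{(n-1)(n-1)}v_{n-1}$.

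The boundary bound uses the monotone lower bound once more at $i=n-1$: the same Schur sandwich produces $A_{(n-1)n}^T\tilde{A}_{(n-1)(n-1)}^{-1}A_{(n-1)n}\le\check{C}_{n,n-1}^T\check{R}_{n-1,n}^{-1}\check{C}_{n,n-1}$, whose quadratic form on $\check{x}_n^{ML}$ expands as $\sum_{(t,j)}(x_j^{ML})^TC_{j,t}^TR_{t,j}^{-1}C_{j,t}x_j^{ML}=\kappa$ by summing over the boundary edges between layer $n-1$ and layer $n$. Chaining this with the iterated contraction yields \eqref{eq:acc_est}. The main obstacle I anticipate is the bookkeeping for the monotone lower bound: one must verify that after eliminating layers $1,\ldots,i-1$ from the block-tridiagonal $A$, the residual back-edge information is still dominated by the forward-edge term $\check{C}_{i,i+1}^T\check{R}_{i,i+1}^{-1}\check{C}_{i,i+1}$, and at the boundary layer $i+1=n-1$ one must check that Assumption~\ref{asu:1}'s neighbor set $\mathcal{N}_u\setminus p$ coincides exactly with the neighbors of $u$ that are lumped into super-node $n$, so that the block-wise application of \eqref{eq:as1} is valid without double counting or boundary slippage.
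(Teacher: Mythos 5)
Your proposal is correct and follows essentially the same route as the paper's proof: both start from the closed-form error in Lemma~\ref{lem:tree2_sol} with $Q_1(1)=A_{11}$, establish the per-layer contraction $A_{i(i+1)}^{T}\tilde{A}_{ii}^{-1}A_{i(i+1)}\le\eta\tilde{A}_{(i+1)(i+1)}$ by combining a Schur-complement bound $A_{i(i+1)}^{T}\tilde{A}_{ii}^{-1}A_{i(i+1)}\le\check{C}_{i+1,i}^{T}\check{R}_{i,i+1}^{-1}\check{C}_{i+1,i}$ with the block-diagonal application of \eqref{eq:as1}, iterate $d_{1}$ times, and close with the same boundary bound yielding $\kappa$. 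Your explicit inductive invariant $\tilde{A}_{ii}\ge\check{C}_{i,i+1}^{T}\check{R}_{i,i+1}^{-1}\check{C}_{i,i+1}$ simply makes rigorous the step the paper handles with ``similarly,'' so the two arguments coincide in substance.
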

\begin{proof}
From the Lemma~\ref{lem:tree2_sol}, we have
\begin{align}
 & \Delta x_{1}(d_{1})^{T}A_{11}\Delta x_{1}(d_{1})\nonumber \\
= & (\check{x}_{n}^{ML})^{T}(A_{(n-1)n}^{T}\tilde{A}_{(n-1)(n-1)}^{-1})\ldots(A_{12}^{T}\tilde{A}_{11}^{-1})\nonumber\\
&\cdot \tilde{A}_{11}(\tilde{A}_{11}^{-1}A_{12})\ldots(\tilde{A}_{(n-1)(n-1)}^{-1}A_{(n-1)n})\check{x}_{n}^{ML}.\label{eq:temp10}
\end{align}
We are going to claim that $A_{12}^{T}\tilde{A}_{11}^{-1}A_{12}\le\eta\tilde{A}_{22}$,
where $0<\eta<1$ is specified in (\ref{eq:as1}). Indeed, the above
is equivalent to show
\begin{align*}
&A_{12}^{T}\tilde{A}_{11}^{-1}A_{12} \\
\leq&\eta(\check{C}_{2}^{T}\check{R}_{2}^{-1}\check{C}_{2}+\check{C}_{2,3}^{T}\check{R}_{2,3}^{-1}\check{C}_{2,3}+\check{C}_{2,1}^{T}\check{R}_{1,2}^{-1}\check{C}_{2,1}\\
&~~-A_{12}^{T}\tilde{A}_{11}^{-1}A_{12}).
\end{align*}
Note that $\check{C}_{2,1}^{T}\check{R}_{1,2}^{-1}\check{C}_{2,1}\ge A_{12}^{T}\tilde{A}_{11}^{-1}A_{12}$,
which is due to the fact that
\begin{align*}
 & \left[\begin{array}{cc}
A_{11} & A_{12}\\
A_{12}^{T} & \check{C}_{2,1}^{T}\check{R}_{2,1}^{-1}\check{C}_{2,1}
\end{array}\right]\\
= & \left[\begin{array}{cc}
\check{C}_{1}^{T}\check{R}_{1}^{-1}\check{C}_{1}+\check{C}_{1,2}^{T}\check{R}_{1,2}^{-1}\check{C}_{1,2} & \check{C}_{1,2}^{T}\check{R}_{1,2}^{-1}\check{C}_{2,1}\\
\check{C}_{2,1}^{T}\check{R}_{1,2}^{-1}\check{C}_{1,2} & \check{C}_{2,1}^{T}\check{R}_{1,2}^{-1}\check{C}_{2,1}
\end{array}\right]\\
= & \left[\begin{array}{cc}
\check{C}_{1}^{T}\check{R}_{1}^{-1}\check{C}_{1} & 0\\
0 & 0
\end{array}\right]+\left[\begin{array}{cc}
\check{C}_{1,2}^{T}\\
\check{C}_{2,1}^{T}
\end{array}\right]R_{1,2}^{-1}\left[\begin{array}{cc}
\check{C}_{1,2} & \check{C}_{2,1}\end{array}\right]\\
\ge & ~0.
\end{align*}
Thus, we only need to show that
\begin{align}
\eta(\check{C}_{2}^{T}\check{R}_{2}^{-1}\check{C}_{2}+\check{C}_{2,3}^{T}\check{R}_{2,3}^{-1}\check{C}_{2,3})\ge\check{C}_{2,1}^{T}\check{R}_{2,1}^{-1}\check{C}_{2,1}.\label{eq:temp11}
\end{align}
This property is guaranteed by (\ref{eq:as1}) under Assumption~\ref{asu:1}.
More precisely, for each node $t$ in layer 2, (\ref{eq:as1}) implies
\[
\eta(C_{t}^{T}R_{t}^{-1}C_{t}+\sum_{w\in\mathcal{N}_{t}\backslash1}C_{t,w}^{T}R_{t,w}^{-1}C_{t,w})\ge C_{t,1}^{T}R_{t,1}^{-1}C_{t,1}.
\]
Stacking up the above in a diagonal fashion for all nodes $t$ in
layer 2 will give exactly (\ref{eq:temp11}). Hence, our claim of
$A_{12}^{T}\tilde{A}_{11}^{-1}A_{12}\le\eta\tilde{A}_{22}$ holds.

Substituting the claim above into (\ref{eq:temp10}) yields
\begin{align}
 & \Delta x_{1}(d_{1})^{T}A_{11}\Delta x_{1}(d_{1})\nonumber \\
\le & \eta(\check{x}_{n}^{ML})^{T}(A_{(n-1)n}^{T}\tilde{A}_{(n-1)(n-1)}^{-1})\ldots(A_{23}^{T}\tilde{A}_{22}^{-1})\nonumber\\
&\cdot \tilde{A}_{22}(\tilde{A}_{22}^{-1}A_{23})\ldots(\tilde{A}_{(n-1)(n-1)}^{-1}A_{(n-1)n})\check{x}_{n}^{ML}.\label{eq:temp12}
\end{align}
Similarly, it can also be shown that $A_{23}^{T}\tilde{A}_{22}^{-1}A_{23}\le\eta\tilde{A}_{33}$
and $\check{C}_{3,2}^{T}\check{R}_{3,2}^{-1}\check{C}_{3,2}\ge A_{23}^{T}\tilde{A}_{22}^{-1}A_{23}$.
Therefore, the process leading to (\ref{eq:temp12}) can repeat, until
we get
\begin{align*}
 & \Delta x_{1}(d_{1})^{T}A_{11}\Delta x_{1}(d_{1})\\
\le & \eta^{d_{1}}(\check{x}_{n}^{ML})^{T}A_{(n-1)n}^{T}\tilde{A}_{(n-1)(n-1)}^{-1}A_{(n-1)n}\check{x}_{n}^{ML}\\
\le & \eta^{d_{1}}(\check{x}_{n}^{ML})^{T}\check{C}_{n,n-1}^{T}\check{R}_{n-1,n}^{-1}\check{C}_{n,n-1}\check{x}_{n}^{ML}\\
= & \eta^{d_{1}}\sum_{(t,j)}(x_{j}^{ML})^{T}C_{j,t}^{T}R_{t,j}^{-1}C_{j,t}x_{j}^{ML},
\end{align*}
where $(t,j)$ are such that node $t$ is $d_{1}$ hops away from
node 1 and node $j$ is connected to node $t$ but $d_{1}+1$ hops
away from node 1. Finally, it is easy to verify that
\begin{align*}
A_{11} & =C_{1}^{T}R_{1}^{-1}C_{1}+\sum_{j\in\mathcal{N}_{1}}C_{1,j}^{T}R_{1,j}^{-1}C_{1,j}=Q_{1}(1).
\end{align*}
Hence, (\ref{eq:acc_est}) holds.
\end{proof}

Similar to that in Corollary~\ref{cor:Q_inf_acc}, the accuracy of $\Delta x_{1}(\infty)$ is given in the Corollary~\ref{cor:accuracy-est}.
\begin{cor}
\label{cor:accuracy-est}Let\footnote{recall from Section~\ref{sec:ConvergenceX} that $S$ is an ordered
sequence of all $(i\rightarrow i,j)$}
\begin{align*}
B(k) & =Q^{1/2}(k+1)A(k)Q^{-1/2}(k)
\end{align*}
with $A(k)$ defined in \eqref{eq:dynamics} and
\begin{align*}
Q(k) & =\mathrm{diag}\left(Q_{i\rightarrow i,j}(k):(i\rightarrow i,j)\in S\right).
\end{align*}
Denote the maximum eigenvalue of $B(\infty)$ by $\beta=\overline{\mathrm{eig}}\left(B(\infty)\right)<1$. Under Assumption~\ref{asu:1}, if $d_{1}$
is large enough so that $\rho^{d_{1}-1}\simeq0$ and $\beta^{d_{1}-1}\simeq0$,
then
\[
\left\Vert \Delta x_{1}(\infty)\right\Vert ^{2}\lesssim\kappa\eta^{d_{1}}\left\Vert Q_{1}^{-1}(1)\right\Vert .
\]
\end{cor}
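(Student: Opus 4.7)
The plan is to split the asymptotic estimation error into the finite-depth error already controlled by Theorem~\ref{thm:accuracy_estimate} plus an iteration residual, and then to argue that under the stated hypothesis the residual is negligible. Concretely, write
\[
\Delta x_{1}(\infty)=\Delta x_{1}(d_{1})+\bigl(\hat{x}_{1}(\infty)-\hat{x}_{1}(d_{1})\bigr),
\]
so that by the triangle inequality $\|\Delta x_{1}(\infty)\|^{2}$ is at most (up to a cross term) the sum of $\|\Delta x_{1}(d_{1})\|^{2}$ and $\|\hat{x}_{1}(\infty)-\hat{x}_{1}(d_{1})\|^{2}$. For the first piece, Theorem~\ref{thm:accuracy_estimate} gives $\Delta x_{1}(d_{1})^{T}Q_{1}(1)\Delta x_{1}(d_{1})\le \kappa\eta^{d_{1}}$, and combining this with the elementary inequality $v^{T}Q_{1}(1)v\ge \|v\|^{2}/\|Q_{1}^{-1}(1)\|$ immediately yields $\|\Delta x_{1}(d_{1})\|^{2}\le \kappa\eta^{d_{1}}\|Q_{1}^{-1}(1)\|$, which is already the desired right-hand side.

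For the iteration residual, I would return to the $\tilde{x}(k)$ recursion \eqref{eq:dynamics} and rewrite it in the original $\alpha$-coordinates. Since $Q_{i\to i,j}^{1/2}(k)\tilde{x}_{i\to i,j}(k)=\alpha_{i\to i,j}(k)$, stacking yields
\[
\alpha(k+1)=B(k)\,\alpha(k)+Q^{1/2}(k+1)\,b(k),
\]
with $B(k)\to B(\infty)$ and $\overline{\mathrm{eig}}(B(\infty))=\beta<1$. Together with Theorem~\ref{thm:info}, which provides the rate $\rho$ for $Q(k)\to Q(\infty)$ (and hence for the forcing term $Q^{1/2}(k+1)b(k)$), a standard perturbation argument for convergent time-varying linear iterations gives $\|\alpha(d_{1})-\alpha(\infty)\|=O(\beta^{d_{1}-1})+O(\rho^{d_{1}-1})$. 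Using the decomposition
\[
\hat{x}_{1}(d_{1})-\hat{x}_{1}(\infty)=Q_{1}^{-1}(d_{1})\bigl(\alpha_{1}(d_{1})-\alpha_{1}(\infty)\bigr)+\bigl(Q_{1}^{-1}(d_{1})-Q_{1}^{-1}(\infty)\bigr)\alpha_{1}(\infty)
\]
and Theorem~\ref{thm:info} again for the second term, we obtain $\|\hat{x}_{1}(\infty)-\hat{x}_{1}(d_{1})\|=O(\beta^{d_{1}-1})+O(\rho^{d_{1}-1})$.

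The conclusion is then immediate: under the hypothesis $\rho^{d_{1}-1}\simeq 0$ and $\beta^{d_{1}-1}\simeq 0$, the residual piece vanishes asymptotically, leaving $\|\Delta x_{1}(\infty)\|^{2}\simeq \|\Delta x_{1}(d_{1})\|^{2}\le \kappa\eta^{d_{1}}\|Q_{1}^{-1}(1)\|$, which is the claimed bound. The main obstacle, and the step I expect to require the most care, is justifying the decay estimate for $\alpha(k)-\alpha(\infty)$ from the time-varying recursion driven by $B(k)$: one must combine the \emph{eventual} contraction governed by $\beta$ with the perturbation of $B(k)$ around $B(\infty)$ at rate $\rho$, taking care that the two exponential rates compound additively rather than destroying each other, and that the forcing term's convergence at rate $\rho$ does not worsen the overall bound. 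Everything else is bookkeeping on top of Theorems~\ref{thm:info} and \ref{thm:accuracy_estimate}.
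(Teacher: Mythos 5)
Your proposal is correct and follows essentially the same route as the paper: the same split of $\Delta x_{1}(\infty)$ into the depth-$d_{1}$ error bounded via Theorem~\ref{thm:accuracy_estimate} and $Q_{1}(1)\ge\Vert Q_{1}^{-1}(1)\Vert^{-1}I$, plus a post-$d_{1}$ residual killed by the $\alpha$-recursion $\alpha(k+1)=B(k)\alpha(k)+\beta(k)$ with contraction $\beta<1$ and matrix convergence at rate $\rho$. The only cosmetic difference is that you bound $\alpha(d_{1})-\alpha(\infty)$ directly while the paper telescopes the increments $\delta\alpha(k)$ and $\delta\hat{x}_{1}(k)$; both arguments are at the same (heuristic, $\simeq$-level) degree of rigor.
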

\begin{proof}
Let $\alpha(k)$ be the column vector formed by stacking up all the
$\alpha_{i\rightarrow i,j}(k)$ according to the index set $S$. From~\eqref{eq:BP-Q},\eqref{eq:BP-x-hat},\eqref{eq:BP-Q2}
and~\eqref{eq:BP-R}, it is straightforward to obtain
\[
\hat{x}_{1}(k)=f_{1}(k)-F_{1}(k)\alpha(k-1),
\]
with
\begin{align*}
f_{i}(k) & =Q_{i}^{-1}(k)\\
&~~\cdot\left(C_{i}^{T}R_{i}^{-1}z_{i}+\sum_{j\in\mathcal{N}_{i}}C_{i,j}^{T}R_{i,j\rightarrow i}^{-1}(k-1)z_{i,j}\right),\\
F_{i}(k) & =\mathrm{row}\left[F_{i,\left(j\rightarrow i,j\right)}(k):j\in\mathcal{N}_{i}\right],
\end{align*}
and
\begin{align*}
F_{i,\left(j\rightarrow i,j\right)}(k)=&Q_{i}^{-1}(k)C_{i,j}^{T}R_{i,j\rightarrow i}^{-1}(k-1)\\
\cdot & C_{j,i}Q_{j\rightarrow i,j}^{-1}(k-1).
\end{align*}
It then follows that
\[
\delta \hat{x}_{1}(k)=\delta f_{1}(k)-F_{1}(k)\delta\alpha(k-1)-\delta F_1(k)\alpha(k-2),
\]
where $\delta\xi(k)=\xi(k)-\xi(k-1)$ for any vector or matrix sequence
$\xi(k)$, $k\in\mathbb{N}$. Now, since $\rho^{d_{1}-1}\simeq0$,
in view of Lemma~\ref{lem:03}, we can approximate all the matrices
$Q_{i\rightarrow i,j}(k)$, $Q_{i,j\rightarrow j}(k)$, $Q_{i}(k)$
and $R_{i,j\rightarrow j}(k)$, for all $i$ and $(i,j\rightarrow j)$,
by their respective asymptotic values with $k\geq d_i$. Doing so leads to
\begin{equation}
\delta \hat{x}_{1}(k)\simeq-F_{1}(\infty)\delta\alpha(k-1),\label{eq:aux-1}
\end{equation}
and therefore
\begin{equation}
\left\Vert \delta \hat{x}_{1}(k)\right\Vert \lesssim\left\Vert F_{1}(\infty)\right\Vert \left\Vert \delta\alpha(k-1)\right\Vert .\label{eq:aux-2}
\end{equation}

Recall the definition of $\beta_{i\rightarrow i,j}(k)$ in \eqref{betaXXX}, $\alpha(k)$ and $\beta(k)$ are the column vectors formed by stacking up all the
$\alpha_{i\rightarrow i,j}(k)$ and $\beta_{i\rightarrow i,j}(k)$ respectively, according to the index set $S$. From~(\ref{eq:dynamics}),
we obtain
\[
\alpha(k+1)=B(k)\alpha(k)+\beta(k).
\]
Hence
\begin{align*}
\delta\alpha(k)=&B(k-1)\delta\alpha(k-1)+\delta B(k-1)\alpha(k-2)\\
&+\delta\beta(k-1),
\end{align*}
and since $\rho^{d_{1}-1}\simeq0$,
\[
\delta\alpha(k)\simeq B(\infty)\delta\alpha(k-1),\text{ for all }k\geq d_{1}.
\]

Also, $\beta^{d_{1}-1}\simeq0$ implies that
\begin{equation}
\left\Vert \delta\alpha(k)\right\Vert \simeq0,\text{ for all }k\geq d_{1}.\label{eq:aux-3}
\end{equation}

From~(\ref{eq:acc_est}) we get
\[
\left\Vert \Delta x_{1}(d_{1})\right\Vert ^{2}\leq\kappa\left\Vert Q_{1}^{-1}(1)\right\Vert\eta^{d_1},
\]
Hence, from~(\ref{eq:aux-2}) and~(\ref{eq:aux-3}),
\begin{align*}
 & \left\Vert \Delta x_{1}(\infty)\right\Vert \\
\leq & \left\Vert \Delta x_{1}(d_{1})\right\Vert +\sum_{k=d_{1}+1}^{\infty}\left\Vert \delta \hat{x}_{1}(k)\right\Vert \\
\simeq & \sqrt{\kappa\eta^{d_{1}}\left\Vert Q_{1}^{-1}(1)\right\Vert }+\left\Vert F_{1}(\infty)\right\Vert \sum_{k=d_{1}+1}^{\infty}\left\Vert \delta\alpha(k-1)\right\Vert \\
\simeq & \sqrt{\kappa\eta^{d_{1}}\left\Vert Q_{1}^{-1}(1)\right\Vert }.
\end{align*}
\end{proof}

\begin{rem}
\label{rem:acc_est} The results in Theorem~\ref{thm:accuracy_estimate} and Corollary~\ref{cor:accuracy-est}
show in a very quantitative way that the accuracy of the state estimate depends explicitly on 1) The number
of links connecting $\mathcal{G}_{i}(d_{i})$ and outside; 2) the
``size'' of the state for each such node as measured by $(x_{j}^{ML})^{T}C_{j,t}^{T}R_{t,j}^{-1}C_{j,t}x_{j}^{ML}\approx x_{j}^{T}C_{j,t}^{T}R_{t,j}^{-1}C_{j,t}x_{j}$;
3) the decay rate $\eta$; 4) cycle-free depth $d_{i}$. Accurate
state estimates require a combination of fast decay rate, large cycle-free
depth, small number of links connecting the inside and outside of
the cycle-free region, and small state "sizes" for such nodes.
\end{rem}


\section{Simulations}

\label{sec:simulation}

In this section, we provide simulations to show that our convergence conditions together with its exponential decay curve.

The two examples are represented by the two graphs in Fig.~\ref{twograph}.
It is clear that each node in the right graph has more neighbors
and less {\em cycle-free depth}, when compared with the left graph.
\begin{figure}[ht]
\begin{centering}
\includegraphics[width=8cm]{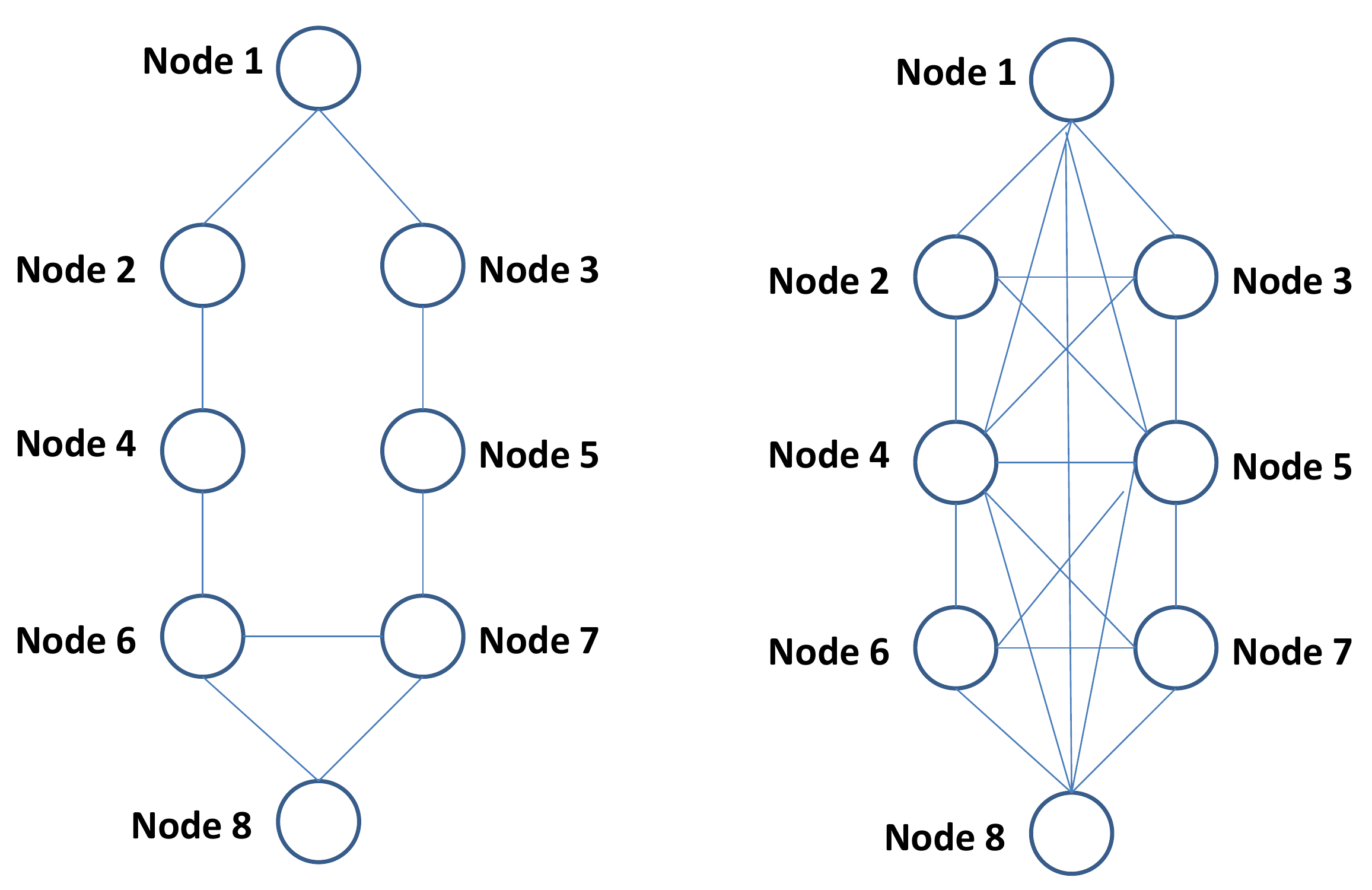}
\par\end{centering}

\caption{Comparison of two graphs with different connectivities}
\label{twograph}
\end{figure}

We assume that each node $i$ has self measurement and joint measurement
as follows:
\begin{align*}
z_{i} & =x_{i}+v_{i};\\
z_{i,j} & =x_{i}+x_{j}+v_{i,j}.
\end{align*}
with $R_{i}=5,R_{i,j}=1$, for all $j\in\mathcal{N}_{i}$.


In the first plot we compare the evolution of the
estimate $\hat{x}_{1}(k)$, at node~1, in both graphs. We have that,
for the left graph, only nodes 6 and 7 have three or more neighbors,
and for these nodes, the maximum singular values of $\bar{A}_{6}(\infty)$
and $\bar{A}_{7}(\infty)$ are less than one. On the other hand, for
the right graph, the maximum singular value of $\bar{A}_{1}(\infty)$
is larger than one. Hence, according to Theorem~\ref{thm:estimate2},
the state estimate for the graph on the left should converge
while that for the graph on the right should not. These claims are
confirmed in Fig.~\ref{BPconverge}, whose $y$-axis is $\|\hat{x}_{1}(k)\|$ and $x$-axis is $k$. 
The figure also shows the theoretical decay rate upper bound
for $\|\hat{x}_{1}(k)\|$ as stated in Theorem~\ref{thm:estimate2}, i.e., the maximum eigenvalue
of $\bar{A}_{i}(\infty)\bar{A}_{i}^{T}(\infty), i=1,2,\ldots,I$.

\begin{figure}[ht]
\begin{centering}
\includegraphics[width=8cm]{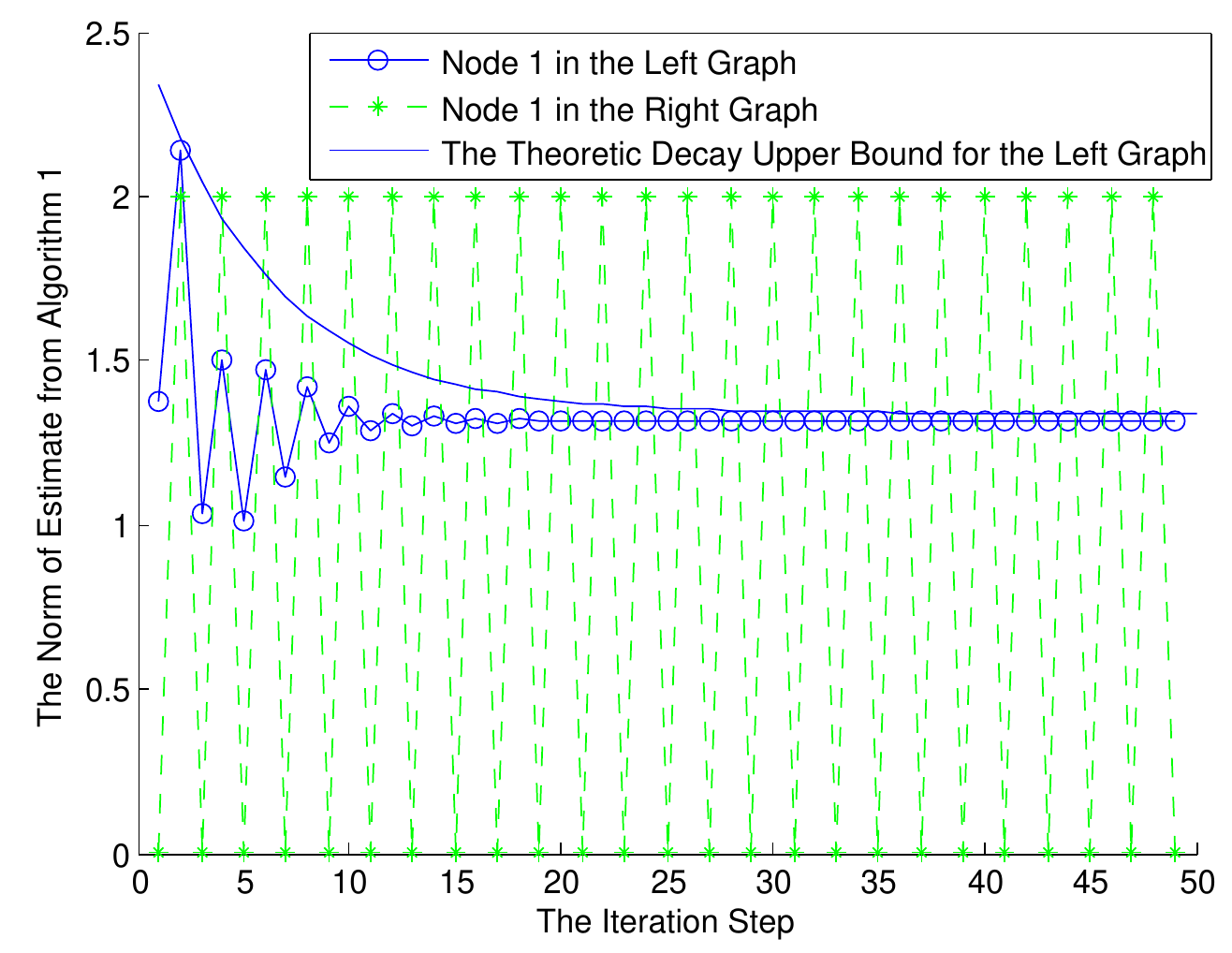}
\par\end{centering}

\caption{Convergence comparison of the state estimates for the two
graphs in Fig.~\ref{twograph}}
\label{BPconverge}
\end{figure}


In the second plot we show the convergence of the
information matrix $Q_1(k)$ on the left graph of Fig.~\ref{twograph} and its corresponding upper bound for decay rate .
We know from Theorem~\ref{thm:info} that the information
matrix always exponentially converges with a upper bounded decay rate. This is illustrated in Fig.~\ref{BPconvergeQ}, whose $y$-axis is Trace$(Q_1(k))$ and $x$-axis is $k$.

\begin{figure}[ht]
\begin{centering}
\includegraphics[width=8cm]{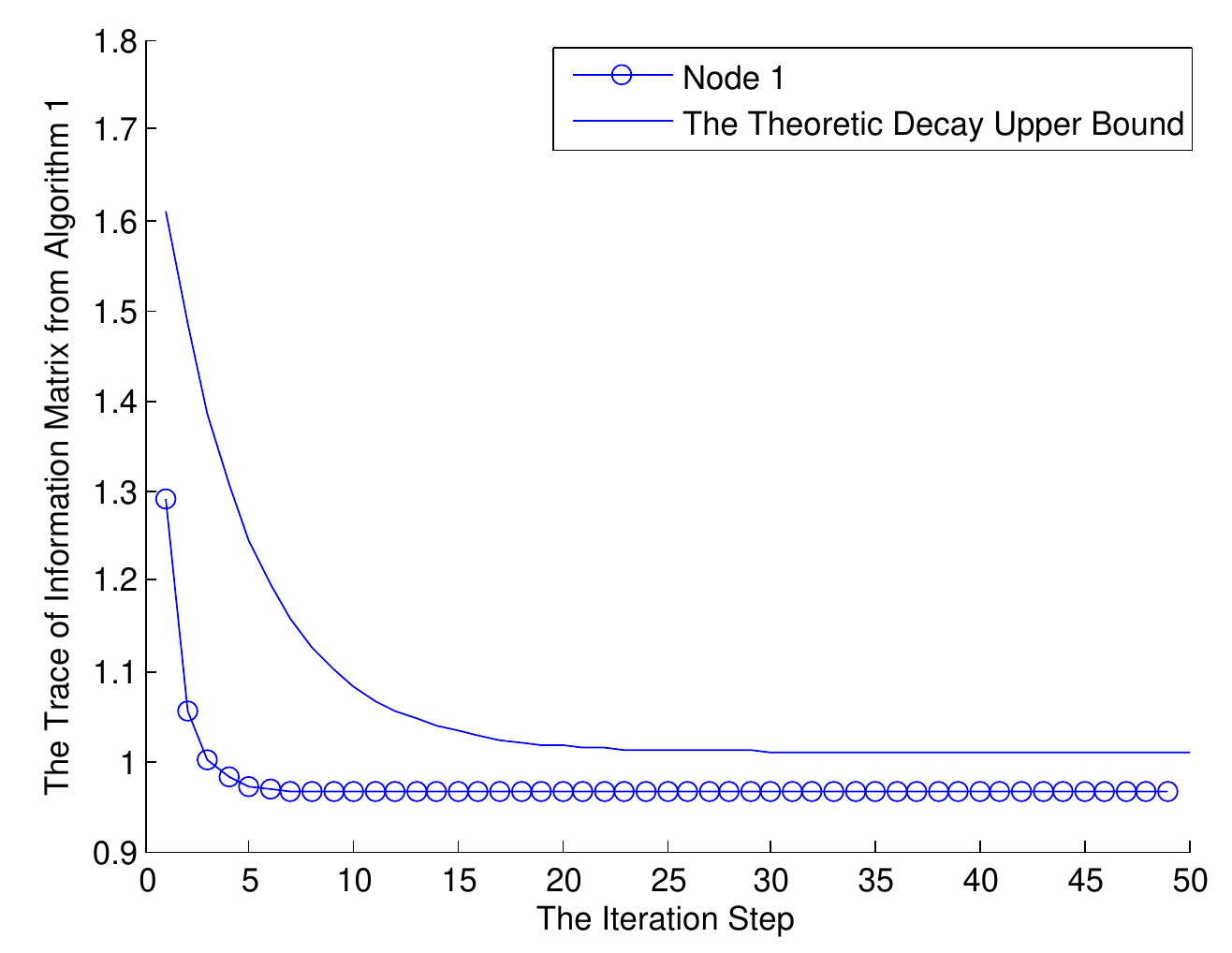}
\par\end{centering}
\caption{Convergence of the information matrices for
the left graph in Fig.~\ref{twograph}}
\label{BPconvergeQ}
\end{figure}
It is noted that, for a static estimation problem, the centralized optimal state estimate is deterministic and it follows that the local estimate converges to a constant vector.

\section{Conclusion}\label{sec:conclusion}
In this paper, a new distributed estimator for static systems is proposed in Algorithm~\ref{BP}. By viewing its iterations as a dynamic
process, we have carried out a complete analysis for its convergence and accuracy.
We have given conditions under which the Algorithm~\ref{BP} is guaranteed to converge, and we have provided concrete characterizations
of its accuracy. The influence of the so-called
{\em cycle-free depth} of each node to the accuracy is exploited. As explained in the Remark~\ref{rem:dynamic}, the Algorithm~\ref{BP} can also be effective in the dynamic state estimation by running several times during the time update of system sampling. Our results are expected to yield a
theoretical understanding of the distributed state estimation and may generate more applications for
this powerful algorithm.

\appendix

\subsection{Proof of Lemma~\ref{lem:01}}\label{app:lem:01}
We proceed by induction. To this end, we also argument the initialization
step of Algorithm 1 by setting $Q_{i\rightarrow i,j}(0)=\infty$.
By (\ref{eq:BP-Q2})-(\ref{eq:BP-R}), this yields $R_{i,j\rightarrow j}(0)=R_{i,j}$
and $Q_{i,j\rightarrow j}(0)=C_{j,i}^{T}R_{i,j}^{-1}C_{j,i}$. Then,
using (\ref{eq:BP-Q}) and (\ref{eq:BP-Q1}), we get
\begin{eqnarray*}
Q_{i\rightarrow i,j}(1)&=&C_{i}^{T}R_{i}^{-1}C_{i}+\hspace{-1mm}\sum_{w\in\mathcal{N}_{i}\backslash j}\hspace{-1mm}C_{i,w}^{T}R_{w,i}^{-1}C_{i,w}\\
&<&Q_{i\rightarrow i,j}(0).
\end{eqnarray*}
It follows again from (\ref{eq:BP-Q2})-(\ref{eq:BP-R}) that $R_{i,j\rightarrow j}(1)\ge R_{i,j\rightarrow j}(0)$
and $Q_{i,j\rightarrow j}(1)\le Q_{i,j\rightarrow j}(0)$. Hence,
the monotonicity properties (\ref{mono}) hold for $k=0$.

Now suppose the monotonicity properties (\ref{mono}) hold for some
$k\ge0$, we need to prove that they also hold for $k+1$. Indeed,
using (\ref{eq:BP-Q}) and (\ref{eq:BP-Q1}), we have
\begin{align*}
Q_{i\rightarrow i,j}(k+1) & =C_{i}^{T}R_{i}^{-1}C_{i}+\sum_{w\in\mathcal{N}_{i}\backslash j}Q_{i,w\rightarrow i}(k)\\
 & \le C_{i}^{T}R_{i}^{-1}C_{i}+\sum_{w\in\mathcal{N}_{i}\backslash j}Q_{i,w\rightarrow i}(k-1)\\
 & =Q_{i\rightarrow i,j}(k)
\end{align*}
Then, using (\ref{eq:BP-Q2})-(\ref{eq:BP-R}), we get $R_{i,j\rightarrow j}(k+1)\ge R_{i,j\rightarrow j}(k)$
and $Q_{i,j\rightarrow j}(k+1)\le Q_{i,j\rightarrow j}(k)$. By induction,
the monotonicity properties (\ref{mono}) hold for all $k\in\mathbb{N}$.
Finally, it is easy to verify that $\Omega_{i,j}=Q_{i\rightarrow i,j}(1)$
which implies $Q_{i\rightarrow i,j}(k)\le\Omega_{i,j}$ for $k\ge1$.

\subsection{Proof of Lemma~\ref{lem:02}}\label{app:lem:02}
We first prove that there exists a constant $\gamma>1$ such that
$R_{i,j\rightarrow j}(k)\le\gamma R_{i,j}$ for all $k\in\mathbb{N}$,
$1\le i\le I$ and $j\in\mathcal{N}_{i}$. We choose
\[
\gamma=\max\{(1-\eta)^{-1},\ \max_{i,j}\Vert R_{i,j}^{-1/2}R_{i,j\rightarrow j}(1)R_{i,j}^{-1/2}\Vert\}
\]
and proceed by induction. Since $\gamma I\geq R_{i,j}^{-1/2}R_{i,j\rightarrow j}(1)R_{i,j}^{-1/2}$,
we have $R_{i,j\rightarrow j}(1)\le\gamma R_{i,j}$. Now
suppose, for some $k\ge1$, we have $R_{i,j\rightarrow j}(k)\le\gamma R_{i,j}$
for all $i,j$. Then, for any $i,j$, we have $Q_{i,j\rightarrow j}(k)\ge\gamma^{-1}C_{j,i}^{T}R_{i,j}^{-1}C_{j,i}$
and
\begin{align*}
Q_{i\rightarrow i,j}(k+1) & \ge C_{i}^{T}R_{i}^{-1}C_{i}+\gamma^{-1}\sum_{w\in\mathcal{N}_{i}\backslash j}C_{i,w}^{T}R_{i,w}^{-1}C_{i,w}\\
 & \ge\gamma^{-1}\Omega_{i,j}.
\end{align*}
It follows that
\[
R_{i,j\rightarrow j}(k+1)\le R_{i,j}+\gamma C_{i,j}\Omega_{i,j}^{-1}C_{i,j}^{T}.
\]
Hence, it suffices to show that
\[
R_{i,j}+\gamma C_{i,j}\Omega_{i,j}^{-1}C_{i,j}^{T}\le\gamma R_{i,j},
\]
or equivalently,
\[
R_{i,j}\ge(\gamma-1)^{-1}\gamma C_{i,j}\Omega_{i,j}^{-1}C_{i,j}^{T}.
\]
Using Schur's complement, the above is the same as showing
\[
(\gamma-1)\gamma^{-1}\Omega_{i,j}\ge C_{i,j}^{T}R_{i,j}^{-1}C_{i,j}.
\]
Recalling the definition of $\gamma$, we have $(\gamma-1)\gamma^{-1}\ge\eta$.
From (\ref{eq:as1}), we get
\[
(\gamma-1)\gamma^{-1}\Omega_{i,j}\ge\eta\Omega_{i,j}\ge C_{i,j}^{T}R_{i,j}^{-1}C_{i,j}.
\]
By induction, $R_{i,j\rightarrow j}(k+1)\le\gamma R_{i,j}$ and $Q_{i\rightarrow i,j}(k+1)\ge\gamma^{-1}\Omega_{i,j}$
for all $k\in\mathbb{N}$. Combining with the monotonicity property
in Lemma~\ref{lem:01}, it completes the proof.

\subsection{Proof of Lemma~\ref{lem:03}}\label{app:lem:03}
From Lemma~\ref{lem:01},
we have
\begin{align*}
0 & \le\Delta Q_{i\rightarrow i,j}(k)\le\Delta Q_{i\rightarrow i,j}(k-1);\\
0 & \le\Delta Q_{i,j\rightarrow j}(k)\le\Delta Q_{i,j\rightarrow j}(k-1);\\
0 & \le-\Delta R_{i,j\rightarrow j}(k)\le-\Delta R_{i,j\rightarrow j}(k-1).
\end{align*}
Also define
\begin{align*}
\nabla Q_{i\rightarrow i,j}(k) & =Q_{i\rightarrow i,j}(k)-Q_{i\rightarrow i,j}(\infty)\\
\nabla Q_{i,j\rightarrow j}(k) & =Q_{i,j\rightarrow j}(k)-Q_{i,j\rightarrow j}(\infty)\\
\nabla R_{i,j\rightarrow j}(k) & =R_{i,j\rightarrow j}(k)-R_{i,j\rightarrow j}(\infty)
\end{align*}
and they have a similar monotonicity properties.

We proceed by induction. Recall $Q_{i\rightarrow i,j}(1)=\Omega_{i,j}$.
It follows that
\begin{eqnarray*}
\Delta Q_{i\rightarrow i,j}(1)&=&Q_{i\rightarrow i,j}^{-1/2}(\infty)Q_{i\rightarrow i,j}(1)Q_{i\rightarrow i,j}^{-1/2}(\infty)-I\\
&\le&\alpha I.
\end{eqnarray*}
So, \eqref{eq:01} holds for $k=1$.

Now suppose \eqref{eq:01} holds for some $k\ge1$. We need to show
that \eqref{eq:01} also holds for $k+1$. Indeed,
\begin{align}
& \nabla Q_{i\rightarrow i,j}(k+1)\nonumber\\
= & \sum_{w\in\mathcal{N}_{i}\backslash j}\nabla Q_{w,i\rightarrow i}(k)\nonumber \\
= & \sum_{w\in\mathcal{N}_{i}\backslash j}C_{i,w}^{T}[R_{w,i\rightarrow i}^{-1}(k)- R_{w,i\rightarrow i}^{-1}(\infty)]C_{i,w}.\label{eq:xxxxx}
\end{align}
Denote by $R_{\infty}=R_{w,i\rightarrow i}(\infty)$, following the matrix inversion lemma\footnote{The matrix inversion lemma states that $(A-CD^{-1}C^{T})^{-1}=A^{-1}+A^{-1}C(D-C^{T}A^{-1}C)^{-1}C^{T}A^{-1}$ when
$D>0$ and \\
$A-CD^{-1}C^{T}>0$.} and \eqref{eq:BP-R}, we have
\begin{align}
&R_{w,i\rightarrow i}^{-1}(k)\nonumber\\
= & (R_{\infty}-C_{w,i}(Q_{w\rightarrow w,i}^{-1}(\infty)-Q_{w\rightarrow w,i}^{-1}(k))C_{w,i}^{T})^{-1}\nonumber \\
= & R_{\infty}^{-1}+R_{\infty}^{-1}C_{w,i}\bar{Q}_{w\rightarrow w,i}^{-1}(k)C_{w,i}^{T}R_{\infty}^{-1},\label{eq:xxxx}
\end{align}
where
\begin{eqnarray*}
\bar{Q}_{w\rightarrow w,i}(k)&=&(Q_{w\rightarrow w,i}^{-1}(\infty)-Q_{w\rightarrow w,i}^{-1}(k))^{-1}\\
&&-C_{w,i}^{T}R_{\infty}^{-1}C_{w,i}.
\end{eqnarray*}
Next, we note that (\ref{eq:01}) implies
\[
Q_{w\rightarrow w,i}(k)\le(1+\alpha\rho^{k-1})Q_{w\rightarrow w,i}(\infty)
\]
which in turn implies
\begin{eqnarray}
Q_{w\rightarrow w,i}^{-1}(k)\ge\frac{1}{1+\alpha\rho^{k-1}}Q_{w\rightarrow w,i}^{-1}(\infty).\label{eq:asas}
\end{eqnarray}
Since
\begin{eqnarray*}
R_{\infty}&=&R_{w,i}+C_{w,i}Q_{w\rightarrow w,i}^{-1}(\infty)C_{w,i}^{T}\\
&>& C_{w,i}Q_{w\rightarrow w,i}^{-1}(\infty)C_{w,i}^{T}
\end{eqnarray*}
which, by Schur complement, implies
\[
Q_{w\rightarrow w,i}(\infty)>C_{w,i}^{T}R_{\infty}^{-1}C_{w,i}.
\]
Combining with \eqref{eq:asas}, we have
\begin{align*}
& (Q_{w\rightarrow w,i}^{-1}(\infty)-Q_{w\rightarrow w,i}^{-1}(k))^{-1}-C_{w,i}^{T}R_{\infty}^{-1}C_{w,i}\\
\ge & \frac{1}{\alpha\rho^{k-1}}Q_{w\rightarrow w,i}(\infty)+(Q_{w\rightarrow w,i}(\infty)-C_{w,i}^{T}R_{\infty}^{-1}C_{w,i})\\
\ge & \frac{1}{\alpha\rho^{k-1}}Q_{w\rightarrow w,i}(\infty).
\end{align*}

Substituting the above into (\ref{eq:xxxx}) and using (\ref{rho}), we get
\begin{align}
& R_{w,i\rightarrow i}^{-1}(k)-R_{\infty}^{-1}\nonumber\\
\le & \alpha\rho^{k-1}R_{\infty}^{-1}C_{w,i}Q_{w\rightarrow w,i}^{-1}(\infty)C_{w,i}^{T}R_{\infty}^{-1}\nonumber \\
\le & \alpha\rho^{k-1}R_{\infty}^{-1/2}[R_{\infty}^{-1/2}C_{w,i}Q_{w\rightarrow w,i}^{-1}(\infty)C_{w,i}^{T}R_{\infty}^{-1/2}]R_{\infty}^{-1/2}\nonumber \\
\le & \alpha\rho^{k}R_{\infty}^{-1}.\label{R-inverse}
\end{align}
Substituting the above into (\ref{eq:xxxxx}), we get
\begin{align*}
\nabla Q_{i\rightarrow i,j}(k+1) & \le\alpha\rho^{k}\sum_{w\in\mathcal{N}_{i}\backslash j}C_{i,w}^{T}R_{w,i\rightarrow i}^{-1}(\infty)C_{i,w}\\
 & \le\alpha\rho^{k}Q_{i\rightarrow i,j}(\infty).
\end{align*}
Hence,
\[
\Delta Q_{i\rightarrow i,j}(k+1)\le\alpha\rho^{k}I.
\]
By induction, (\ref{eq:01}) holds for all $k\in\mathbb{N}$.

\subsection{Proof of Lemma~\ref{lem:prop}}\label{app:lem:prop}
The diagonal elements of $A(\infty)$ being zero comes from the fact
that $\alpha_{i\rightarrow i,j}(k)$ does not feed into $\alpha_{i\rightarrow i,j}(k+1)$.
Next, we have
\begin{align*}
 & A_{i\rightarrow i,j}(\infty)A_{i\rightarrow i,j}^{T}(\infty)\\
= & Q_{i\rightarrow i,j}^{-1/2}(\infty)\{\sum_{w\in\mathcal{N}_{i}\backslash j}a_{w\rightarrow i,w}(\infty)a_{w\rightarrow i,w}^{T}(\infty)\}Q_{i\rightarrow i,j}^{-1/2}(\infty)\\
= & Q_{i\rightarrow i,j}^{-1/2}(\infty)\{\sum_{w\in\mathcal{N}_{i}\backslash j}C_{i,w}^{T}R_{i,w\rightarrow i}^{-1/2}(\infty)\Pi_{i,w}(\infty)\\
&\cdot R_{i,w\rightarrow i}^{-1/2}(\infty)C_{i,w}\}Q_{i\rightarrow i,j}^{-1/2}(\infty)
\end{align*}
where
\begin{align*}
\Pi_{i,w} & =R_{i,w\rightarrow i}^{-1/2}(\infty)C_{w,i}Q_{w\rightarrow i,w}^{-1}(\infty)C_{w,i}^{T}R_{i,w\rightarrow i}^{-1/2}(\infty)\\
&\le\rho I
\end{align*}
follows from (\ref{rho}). Using the above, combining with
(\ref{eq:BP-Q}), \eqref{eq:BP-Q1} and (\ref{eq:BP-Q2}), we get
\begin{align*}
 & A_{i\rightarrow i,j}(\infty)A_{i\rightarrow i,j}^{T}(\infty)\\
\le & \rho Q_{i\rightarrow i,j}^{-1/2}(\infty)\{\sum_{w\in\mathcal{N}_{i}\backslash j}C_{i,w}^{T}R_{i,w\rightarrow i}^{-1}(\infty)C_{i,w}\}Q_{i\rightarrow i,j}^{-1/2}(\infty)\\
\le & \rho Q_{i\rightarrow i,j}^{-1/2}(\infty)Q_{i\rightarrow i,j}(\infty)Q_{i\rightarrow i,j}^{-1/2}(\infty)\\
= & \rho I.
\end{align*}

\subsection{Proof of Lemma~\ref{lem:prop2}}\label{app:lem:prop2}
Suppose there is a leaf node $t$ (i.e., a node with only one neighbour)
in $\mathcal{G}$. Let $s$ be its connecting node in $\mathcal{G}$.
Choose the sequence $S$ such that $(t\rightarrow t,s)$ is the last
element in the sequence. Then, because $t$ is a leaf node, the last
row of $A(\infty)$, i.e., $A_{t\rightarrow t,s}(\infty)$, is a zero
row. It follows that $A(\infty)$ is stable if and only if the matrix,
obtained by removing the last row and column of $A(\infty)$, is stable.
This is the same as removing the leaf node $t$ from $\mathcal{G}$.

The process above can be repeated until that all the leaf nodes are
removed and the remaining graph has no more leaf nodes or it is a
singleton. We then obtain the resulting $\bar{\mathcal{G}}$ and $\bar{A}(\infty)$.
Hence, $A(\infty)$ is stable if and only if $\bar{A}(\infty)$ is
stable. For the special case where $\mathcal{G}$ has no loops, it
is also clear from the argument above that $A(\infty)$ is always
stable.

\subsection{Proof of Lemma~\ref{lem:tree2_BP}}\label{app:lem:tree2_BP}
We first claim that $\hat{x}_{1}(d_{1})$ by running Algorithm~\ref{BP}
on the original graph $\mathcal{G}$ is the same as the maximum likelihood
estimate of $\check{x}_{1}$ for line graph $\mathcal{G}^{l}$ (i.e.,
Fig.~\ref{tree2}(b)) after removing the super node $n$ and forcing
the state $\check{x}_{n}=0$. Indeed, the removal of the super node
$n$ does not affect the estimate $\hat{x}_{1}(d_{1})$ because information
from this node will take $d_{1}+1$ iterations to arrive node
1. Also, setting $\check{x}_{n}=0$ is the same as initializing $\check{\alpha}_{n-1,n}(0)=\check{C}_{n-1,n}^{T}\check{R}_{n-1,n}^{-1}\check{z}_{n-1,n}$,
which is done in the Algorithm~\ref{BP}; see (\ref{newinitial}).
Hence, our claim holds. We note that setting $\check{x}_{n}=0$ has
the same effect as including $\check{z}_{n-1,n}$ as additional self
measurements for $\check{x}_{n-1}$. Since the trimmed graph $\mathcal{G}^{l}$(i.e.,
Fig.~\ref{tree2}(b)) is a line graph, its state estimate from Algorithm~\ref{BP} agrees with
the maximum likelihood estimate. Applying Lemma~\ref{lem:tree2}
on this graph, we obtain the equation (\ref{eq:tree2_BP}) and the
conclusion that $\hat{x}_{1}(d_{1})$ is the first block of the solution
$\check{\mathbf{x}}$.

\subsection{Proof of Lemma~\ref{lem:tree2_sol}}\label{app:lem:tree2_sol}
Let $\check{\mathbf{x}}^{ML}$ be obtained from $\check{x}^{ML}$
by removing the last row block. Then, from Lemma~\ref{lem:tree2},
we have
\[
\mathbf{A}\check{\mathbf{x}}^{ML}+EA_{(n-1)n}\check{x}_{n}^{ML}=\mathbf{B},
\]
where $E$ is the column block matrix with all entries zero, except
that the last entry is an identity. Combining this with Lemma~\ref{lem:tree2_BP}
and defining $\Delta\check{\mathbf{x}}=\check{\mathbf{x}}-\check{\mathbf{x}}^{ML}$,
we have
\[
\mathbf{A}\Delta\check{\mathbf{x}}=-EA_{(n-1)n}\check{x}_{n}^{ML}.
\]
Following the inverse formula for band matrices in the Theorem~3.1
of \cite{meurant1992review}, the solution to $\Delta\check{\mathbf{x}}$
is that its $i$-th block component equals to
\[
\Delta\check{x}_{i}=(-\tilde{A}_{ii}^{-1}A_{i(i+1)})\ldots(-\tilde{A}_{(n-1)(n-1)}^{-1}A_{(n-1)n})\check{x}_{n}^{ML}.
\]
The result (\ref{eq:AB_sol}) for $\Delta x_{1}(d_{1})=\Delta\check{x}_{1}$
thus follows.

\bibliographystyle{IEEEtran}
\bibliography{mybibf4}

\begin{thebibliography}{10}
\providecommand{\url}[1]{#1}
\csname url@samestyle\endcsname
\providecommand{\newblock}{\relax}
\providecommand{\bibinfo}[2]{#2}
\providecommand{\BIBentrySTDinterwordspacing}{\spaceskip=0pt\relax}
\providecommand{\BIBentryALTinterwordstretchfactor}{4}
\providecommand{\BIBentryALTinterwordspacing}{\spaceskip=\fontdimen2\font plus
\BIBentryALTinterwordstretchfactor\fontdimen3\font minus
  \fontdimen4\font\relax}
\providecommand{\BIBforeignlanguage}[2]{{%
\expandafter\ifx\csname l@#1\endcsname\relax
\typeout{** WARNING: IEEEtran.bst: No hyphenation pattern has been}%
\typeout{** loaded for the language `#1'. Using the pattern for}%
\typeout{** the default language instead.}%
\else
\language=\csname l@#1\endcsname
\fi
#2}}
\providecommand{\BIBdecl}{\relax}
\BIBdecl

\bibitem{Garin2010}
F.~Garin and L.~Schenato, ``A survey on distributed estimation and control
  applications using linear consensus algorithms,'' \emph{Networked Control
  Systems}, vol. 406, no.~1, pp. 75--107, 2010.

\bibitem{olfati2004consensus}
R.~Olfati-Saber and R.~Murray, ``{Consensus problems in networks of agents with
  switching topology and time-delays},'' \emph{IEEE Transactions on Automatic
  Control}, vol.~49, no.~9, pp. 1520--1533, 2004.

\bibitem{Schizas2007}
I.~Schizas, A.~Ribeiro, and G.~Giannakis, ``Consensus in ad hoc wsns with noisy
  links - part i: Distributed estimation of deterministic signals,'' \emph{IEEE
  Transactions on Signal Processing}, vol.~56, no.~1, pp. 350--364, 2007.

\bibitem{Schizas2009}
I.~Schizas, G.~Mateos, and G.~Giannakis, ``Distributed {LMS} for
  consensus-based in-network adaptive processing,'' \emph{IEEE Transactions on
  Signal Processing}, vol.~57, no.~6, pp. 2365--2382, 2009.

\bibitem{pasqualetti2012distributed}
F.~Pasqualetti, R.~Carli, and F.~Bullo, ``Distributed estimation via iterative
  projections with application to power network monitoring,''
  \emph{Automatica}, vol.~48, no.~5, pp. 747--758, 2012.

\bibitem{Tanabe1971}
K.~Tanabe, ``Projection method for solving a singular system of linear
  equations and its applications,'' \emph{Numerische Mathematik}, vol.~17,
  no.~3, pp. 203--214, 1971.

\bibitem{Censor1981}
Y.~Censor, ``Row-action methods for huge and sparse systems and their
  applications,'' \emph{SIAM Review}, vol.~23, no.~4, pp. 444--466, 1981.

\bibitem{Conejo2007}
A.~Conejo, S.~Torre, and M.~Canas, ``An optimization approach to multiarea
  state estimation,'' \emph{IEEE Transactions on Power System}, vol.~22, no.~1,
  pp. 213--221, 2007.

\bibitem{Gomez2011}
A.~Gomez, A.~Villa, C.~Gomez, P.~Rousseaux, and T.~Van, ``A taxonomy of
  multi-area state estimation methods,'' \emph{Electric Power Systems
  Research}, vol.~81, no.~4, pp. 1060--1069, 2011.

\bibitem{damian2015wls}
D.~Marelli and M.~Fu, ``Distributed weighted least-squares estimation with fast
  convergence for large-scale systems,'' \emph{Automatica}, vol.~51, pp.
  27--39, 2015.

\bibitem{matei2012}
I.~Matei and J.~Baras, ``Consensus-based linear distributed filtering,''
  \emph{Automatica}, vol.~48, no.~8, pp. 1076--1082, 2012.

\bibitem{Cattivelli2010}
F.~Cattivelli and A.~Sayed, ``Diffusion strategies for distributed kalman
  filtering and smoothing,'' \emph{IEEE Transactions on Automatic Control},
  vol.~55, no.~9, pp. 2069--2084, 2010.

\bibitem{Acikmese2014}
B.~Acikmese, M.~Mandi{\'c}, and J.~Speyer, ``Decentralized observers with
  consensus filters for distributed discrete-time linear systems,''
  \emph{Automatica}, vol.~50, no.~4, pp. 1037--1052, 2014.

\bibitem{Battistelli2014}
G.~Battistelli and L.~Chisci, ``Kullback-{L}eibler average, consensus on
  probability densities, and distributed state estimation with guaranteed
  stability,'' \emph{Automatica}, vol.~50, no.~3, pp. 707--718, 2014.

\bibitem{zhou2013coordinated}
T.~Zhou, ``Coordinated one-step optimal distributed state prediction for a
  networked dynamical system,'' \emph{IEEE Transactions on Automatic Control},
  vol.~58, no.~11, pp. 2756--2771, 2013.

\bibitem{farina2010moving}
M.~Farina, G.~Ferrari-Trecate, and R.~Scattolini, ``Moving-horizon
  partition-based state estimation of large-scale systems,'' \emph{Automatica},
  vol.~46, no.~5, pp. 910--918, 2010.

\bibitem{khan2008distributing}
U.~Khan and J.~Moura, ``Distributing the {K}alman filter for large-scale
  systems,'' \emph{IEEE Transactions on Signal Processing}, vol.~56, no.~10,
  pp. 4919--4935, 2008.

\bibitem{zhou2015controllability}
T.~Zhou, ``On the controllability and observability of networked dynamic
  systems,'' \emph{Automatica}, vol.~52, pp. 63--75, 2015.

\bibitem{haber2013moving}
A.~Haber and M.~Verhaegen, ``Moving horizon estimation for large-scale
  interconnected systems,'' \emph{IEEE Transactions on Automatic Control},
  vol.~58, no.~11, pp. 2834--2847, 2013.

\bibitem{Schweppe1970i}
F.~Schweppe and J.~Wildes, ``Power system static-state estimation, part i:
  Exact model,'' \emph{IEEE Transactions on Power Apparatus and Systems},
  vol.~89, no.~1, pp. 120--125, 1970.

\bibitem{Schweppe1970ii}
------, ``Power system static-state estimation, part ii: Approximate model,''
  \emph{IEEE Transactions on Power Apparatus and Systems}, vol.~89, no.~1, pp.
  125--130, 1970.

\bibitem{Schweppe1970iii}
------, ``Power system static-state estimation, part iii: Implementation,''
  \emph{IEEE Transactions on Power Apparatus and Systems}, vol.~89, no.~1, pp.
  130--135, 1970.

\bibitem{Taixin2013acc}
X.~Tai, Z.~Lin, M.~Fu, and Y.~Sun, ``A new distributed state estimation
  technique for power networks,'' in \emph{American Control Conference}, 2013,
  pp. 3338--3343.

\bibitem{Pal2010}
A.~Pal, ``Localization algorithms in wireless sensor networks: Current
  approaches and future challenges,'' \emph{Network Protocols and Algorithms},
  vol.~2, no.~1, pp. 45--73, 2010.

\bibitem{khan2009local}
U.~Khan, S.~Kar, and J.~Moura, ``Distributed sensor localization in random
  environments using minimal number of anchor nodes,'' \emph{IEEE Transactions
  on Signal Processing}, vol.~57, no.~5, pp. 2000--2016, 2009.

\bibitem{Moore2004}
D.~Moore, J.~Leonard, D.~Rus, and S.~Teller, ``Robust distributed network
  localization with noisy range measurements,'' in \emph{Proceedings of the 2nd
  international conference on Embedded networked sensor systems}.\hskip 1em
  plus 0.5em minus 0.4em\relax ACM, 2004, pp. 50--61.

\bibitem{ravazzi2018distributed}
C.~Ravazzi, N.~P. Chan, and P.~Frasca, ``Distributed estimation from relative
  measurements of heterogeneous and uncertain quality,'' \emph{IEEE
  transactions on Signal and Information Processing over Networks}, vol.~5,
  no.~2, pp. 203--217, 2018.

\bibitem{battilotti2018cooperative}
S.~Battilotti, F.~Cacace, M.~D'Angelo, and A.~Germani, ``Cooperative filtering
  with absolute and relative measurements,'' in \emph{2018 IEEE Conference on
  Decision and Control (CDC)}.\hskip 1em plus 0.5em minus 0.4em\relax IEEE,
  2018, pp. 7182--7187.

\bibitem{pearl1988probabilistic}
J.~Pearl, \emph{Probabilistic reasoning in intelligent systems}.\hskip 1em plus
  0.5em minus 0.4em\relax Morgan Kaufman, 1988.

\bibitem{Braunstein}
A.~Braunstein, M.~M\'{e}zard, and R.~Zecchina, ``Survey propagation: An
  algorithm for satisfiability,'' \emph{Random Structures \& Algorithms},
  vol.~27, no.~2, pp. 201--226, 2005.

\bibitem{mceliece1998turbo}
R.~J. McEliece, D.~J.~C. MacKay, and J.-F. Cheng, ``Turbo decoding as an
  instance of {P}earl's belief propagation algorithm,'' \emph{IEEE Journal on
  Selected Areas in Communications}, vol.~16, no.~2, pp. 140--152, 1998.

\bibitem{berrou1996near}
C.~Berrou and A.~Glavieux, ``Near optimum error correcting coding and decoding:
  Turbo-codes,'' \emph{IEEE Transactions on Communications}, vol.~44, no.~10,
  pp. 1261--1271, 1996.

\bibitem{Fu_turbo}
M.~Fu, ``Stochastic analysis of turbo decoding,'' \emph{IEEE Transactions on
  Information Theory}, vol.~54, no.~1, pp. 81--100, 2005.

\bibitem{braunstein2004survey}
A.~Braunstein and R.~Zecchina, ``Survey propagation as local equilibrium
  equations,'' \emph{Journal of Statistical Mechanics: Theory and Experiment},
  vol. 2004, no.~06, 2004.

\bibitem{sun2003stereo}
J.~Sun, N.~Zheng, and H.~Shum, ``Stereo matching using belief propagation,''
  \emph{IEEE Transactions on Pattern Analysis and Machine Intelligence},
  vol.~25, no.~7, pp. 787--800, 2003.

\bibitem{tanaka2002statistical}
K.~Tanaka, ``Statistical-mechanical approach to image processing,''
  \emph{Journal of Physics A: Mathematical and General}, vol.~35, no.~37, 2002.

\bibitem{Kschischang}
F.~R. Kschischang, B.~J. Frey, and H.-A. Loeliger, ``Factor graphs and the
  sum-product algorithm,'' \emph{IEEE Transactions on Information Theory},
  vol.~47, no.~2, pp. 498--519, 2001.

\bibitem{weiss2001correctness}
Y.~Weiss and W.~T. Freeman, ``Correctness of belief propagation in {G}aussian
  graphical models of arbitrary topology,'' \emph{Neural computation}, vol.~13,
  no.~10, pp. 2173--2200, 2001.

\bibitem{frey1998graphical}
B.~J. Frey, \emph{Graphical models for pattern classification, data compression
  and channel coding}.\hskip 1em plus 0.5em minus 0.4em\relax MIT press, 1998.

\bibitem{murphy1999loopy}
K.~P. Murphy, Y.~Weiss, and M.~I. Jordan, ``Loopy belief propagation for
  approximate inference: an empirical study,'' in \emph{Proceedings of the
  Fifteenth conference on Uncertainty in artificial intelligence}.\hskip 1em
  plus 0.5em minus 0.4em\relax Morgan Kaufmann Publishers Inc., 1999, pp.
  467--475.

\bibitem{freeman2000learning}
W.~T. Freeman, E.~C. Pasztor, and O.~T. Carmichael, ``Learning low-level
  vision,'' \emph{International journal of computer vision}, vol.~40, no.~1,
  pp. 25--47, 2000.

\bibitem{mceliece1995turbo}
R.~J. McEliece, E.~R. Rodemich, and J.-F. Cheng, ``The turbo decision
  algorithm,'' in \emph{Proceedings of The Annual Allerton Conference on
  Communication Control and Computing}, vol.~33.\hskip 1em plus 0.5em minus
  0.4em\relax University of Illinois, 1995, pp. 366--379.

\bibitem{montanari2005belief}
A.~Montanari, B.~Prabhakar, and D.~Tse, ``Belief propagation based multi--user
  detection,'' \emph{arXiv preprint cs/0510044}, 2005.

\bibitem{guo2011based}
Q.~Guo and D.~Huang, ``Em-based joint channel estimation and detection for
  frequency selective channels using {G}aussian message passing,'' \emph{IEEE
  Transactions on Signal Processing}, vol.~59, no.~8, pp. 4030--4035, 2011.

\bibitem{guo2008lmmse}
Q.~Guo and L.~Ping, ``{L}{M}{M}{S}{E} turbo equalization based on factor
  graphs,'' \emph{IEEE Journal on Selected Areas in Communications}, vol.~26,
  no.~2, pp. 311--319, 2008.

\bibitem{el2012efficient}
Y.~El-Kurdi, W.~J. Gross, and D.~Giannacopoulos, ``Efficient implementation of
  {G}aussian belief propagation solver for large sparse diagonally dominant
  linear systems,'' \emph{IEEE Transactions on Magnetics}, vol.~48, no.~2, pp.
  471--474, 2012.

\bibitem{shental2008gaussian}
O.~Shental, P.~H. Siegel, J.~K. Wolf, D.~Bickson, and D.~Dolev, ``Gaussian
  belief propagation solver for systems of linear equations,'' in \emph{IEEE
  International Symposium on Information Theory}.\hskip 1em plus 0.5em minus
  0.4em\relax IEEE, 2008, pp. 1863--1867.

\bibitem{tan2010computationally}
X.~Tan and J.~Li, ``Computationally efficient sparse bayesian learning via
  belief propagation,'' \emph{IEEE Transactions on Signal Processing}, vol.~58,
  no.~4, 2010.

\bibitem{chandrasekaran2008estimation}
V.~Chandrasekaran, J.~K. Johnson, and A.~S. Willsky, ``Estimation in {G}aussian
  graphical models using tractable subgraphs: A walk-sum analysis,'' \emph{IEEE
  Transactions on Signal Processing}, vol.~56, no.~5, pp. 1916--1930, 2008.

\bibitem{ng2008distributed}
B.~L. Ng, J.~S. Evans, S.~V. Hanly, and D.~Aktas, ``Distributed downlink
  beamforming with cooperative base stations,'' \emph{IEEE Transactions on
  Information Theory}, vol.~54, no.~12, pp. 5491--5499, 2008.

\bibitem{lehmann2012iterative}
F.~Lehmann, ``Iterative mitigation of intercell interference in cellular
  networks based on {G}aussian belief propagation,'' \emph{IEEE Transactions on
  Vehicular Technology}, vol.~61, no.~6, pp. 2544--2558, 2012.

\bibitem{leng2011distributed}
M.~Leng and Y.-C. Wu, ``Distributed clock synchronization for wireless sensor
  networks using belief propagation,'' \emph{IEEE Transactions on Signal
  Processing}, vol.~59, no.~11, pp. 5404--5414, 2011.

\bibitem{ahmad2012factor}
A.~Ahmad, D.~Zennaro, E.~Serpedin, and L.~Vangelista, ``A factor graph approach
  to clock offset estimation in wireless sensor networks,'' \emph{IEEE
  Transactions on Information Theory}, vol.~58, no.~7, pp. 4244--4260, 2012.

\bibitem{leng2011cooperative}
M.~Leng, W.~P. Tay, and T.~Q. Quek, ``Cooperative and distributed localization
  for wireless sensor networks in multipath environments,'' in \emph{8th
  International Conference on Information, Communications and Signal
  Processing}.\hskip 1em plus 0.5em minus 0.4em\relax IEEE, 2011, pp. 1--5.

\bibitem{song2014distributed}
Y.~Song, B.~Wang, Z.~Shi, K.~Pattipati, and S.~Gupta, ``Distributed algorithms
  for energy-efficient even self-deployment in mobile sensor networks,''
  \emph{IEEE Transactions on Mobile Computing}, vol.~13, no.~5, pp. 1035--1047,
  2014.

\bibitem{zhang2010fast}
G.~Zhang, W.~Xu, and Y.~Wang, ``Fast distributed rate control algorithm with
  qos support in ad hoc networks,'' in \emph{Global Telecommunications
  Conference}.\hskip 1em plus 0.5em minus 0.4em\relax IEEE, 2010, pp. 1--5.

\bibitem{dolev2009distributed}
D.~Dolev, A.~Zymnis, S.~P. Boyd, D.~Bickson, and Y.~Tock, ``Distributed large
  scale network utility maximization,'' in \emph{IEEE International Symposium
  on Information Theory}.\hskip 1em plus 0.5em minus 0.4em\relax IEEE, 2009,
  pp. 829--833.

\bibitem{seeger2010variational}
M.~Seeger and D.~P. Wipf, ``Variational bayesian inference techniques,''
  \emph{IEEE Signal Processing Magazine}, vol.~27, no.~6, pp. 81--91, 2010.

\bibitem{tatikonda2002loopy}
S.~Tatikonda and M.~I. Jordan, ``Loopy belief propogation and gibbs measures,''
  \emph{arXiv preprint arXiv:1301.0605}, 2012.

\bibitem{tatikonda2003convergence}
S.~Tatikonda, ``Convergence of the sum-product algorithm,'' in
  \emph{Proceedings of IEEE Information Theory Workshop}, 2003, pp. 222--225.

\bibitem{ihler2004message}
A.~Ihler, J.~Fisher, and A.~Willsky, ``Message errors in belief propagation,''
  in \emph{Advances in Neural Information Processing Systems}, 2004, pp.
  609--616.

\bibitem{mooij2007sufficient}
J.~M. Mooij and H.~J. Kappen, ``Sufficient conditions for convergence of the
  sum-product algorithm,'' \emph{IEEE Transactions on Information Theory},
  vol.~53, no.~12, pp. 4422--4437, 2007.

\bibitem{ihler2005loopy}
A.~T. Ihler, W.~F. John~III, and A.~S. Willsky, ``Loopy belief propagation:
  Convergence and effects of message errors,'' \emph{Journal of Machine
  Learning Research}, vol.~6, pp. 905--936, 2005.

\bibitem{malioutov2006walk}
D.~M. Malioutov, J.~K. Johnson, and A.~S. Willsky, ``Walk-sums and belief
  propagation in {G}aussian graphical models,'' \emph{The Journal of Machine
  Learning Research}, vol.~7, pp. 2031--2064, 2006.

\bibitem{su2014convergence}
Q.~Su and Y.-C. Wu, ``Convergence analysis of the variance in {G}aussian belief
  propagation,'' \emph{IEEE Transactions on Signal Processing}, vol.~62, pp.
  5119--5131, 2014.

\bibitem{su2015convergence}
------, ``On convergence conditions of {G}aussian belief propagation,''
  \emph{IEEE Transactions on Signal Processing}, vol.~63, pp. 1144--1155, 2015.

\bibitem{Moallemi2010}
C.~Moallemi and R.~Benjamin, ``Convergence of min-sum message-passing for
  convex optimization,'' \emph{IEEE Transactions on Information Theory},
  vol.~56, no.~4, pp. 2041--2050, 2010.

\bibitem{Johnson2006}
D.~Malioutov, J.~Johnson, and A.~Willsky, ``Walk-sums and belief propagation in
  gaussian graphical models,'' \emph{Journal of Machine Learning Research},
  vol.~7, pp. 2031--2064, 2006.

\bibitem{du2017convergence}
J.~Du, S.~Ma, Y.-C. Wu, S.~Kar, and J.~M. Moura, ``Convergence analysis of
  distributed inference with vector-valued gaussian belief propagation,''
  \emph{The Journal of Machine Learning Research}, vol.~18, no.~1, pp.
  6302--6339, 2017.

\bibitem{Weiss2000graph}
Y.~Weiss, ``Correctness of local probability propagation in graphical models
  with loops,'' \emph{Neural computation}, vol.~12, pp. 1--41, 2000.

\bibitem{meurant1992review}
G.~Meurant, ``A review on the inverse of symmetric tridiagonal and block
  tridiagonal matrices,'' \emph{SIAM Journal on Matrix Analysis and
  Applications}, vol.~13, no.~3, pp. 707--728, 1992.

\end{thebibliography}

\end{document}